\documentclass[oribibl]{llncs}

\usepackage{enumitem}
\usepackage{amsmath, amssymb, myMath}
\usepackage{graphicx}
\usepackage[caption=false]{subfig}
\usepackage{stackrel}
\usepackage{comment}

\bibliographystyle{splncs03}

\title{MSOL-Definability Equals Recognizability for Halin Graphs and Bounded
Degree $k$-Outerplanar Graphs\protect\footnote{The
research of the second author was partially funded by the Networks programme,
funded by the Dutch Ministry of Education, Culture and Science through the Netherlands
Organisation for Scientific Research.}}

\author{Lars Jaffke\protect\footnote{Department of Information and Computing
Sciences, Utrecht University, P.O. Box 80.089, 3508 TB Utrecht, The Netherlands. Email:
\email{l.jaffke@students.uu.nl}} 
\and 
Hans L. Bodlaender\protect\footnote{Department of
Information and Computing Sciences, Utrecht University, P.O. Box 80.089, 3508 TB
Utrecht, The Netherlands.
Department of Mathematics and Computer Science,
University of Technology Eindhoven,
P.O. Box 513, 5600 MB Eindhoven, The Netherlands.
Email: \email{h.l.bodlaender@uu.nl}}}

\institute{}

\pagestyle{plain}

\begin{document}

\maketitle

\begin{abstract}
	One of the most famous algorithmic meta-theorems states that every graph
	property that can be defined by a sentence in counting monadic second order
	logic (CMSOL) can be checked in linear time for graphs of bounded treewidth,
	which is known as Courcelle's Theorem \cite{Cou90}.
	These algorithms are constructed as finite state tree automata, and hence every
	CMSOL-definable graph property is recognizable. Courcelle also conjectured that
	the converse holds, i.e.\ every recognizable graph property is definable in
	CMSOL for graphs of bounded treewidth. We prove this conjecture
	for a number of special cases in a stronger form. That is, we show that
	each recognizable property is definable in MSOL, i.e.\ the counting operation
	is not needed in our expressions. We give proofs for Halin graphs, bounded
	degree $k$-outerplanar graphs and some related graph classes.
	We furthermore show that the conjecture holds for any graph class that admits
	tree decompositions that can be defined in MSOL, thus providing a useful tool
	for future proofs.
\end{abstract}

\section{Introduction}\label{secIntro}
In a seminal paper from 1976, Rudolf Halin (1934-2014), lay the ground work
for the notion of tree decompositions of graphs \cite{Hal76}, which later
was studied deeply in the proof of the famous Graph Minor Theorem by Robertson
and Seymour \cite{RS84} and ever since became one of the most important tools
for the design of FPT-algorithms for NP-hard problems on graphs.
He was also the first one to extensively study the class of planar graphs
constructed by a tree and adding a cycle through all its leaves, now known as Halin graphs
\cite{Hal71}. \par
Another seminal result is Courcelle's Theorem \cite{Cou90}, which states that
for every graph property $P$ that can be formulated in a language called
counting monadic second order logic (CMSOL), and each fixed $k$, there is a
linear time algorithm that decides $P$ for a graph given a tree decomposition of
width at most $k$ (while similar results were discovered by Arnborg et al.
\cite{ALS91} and Borie et al. \cite{BPT92}).
Counting monadic second order logic generalizes monadic second order logic
(MSOL) with a collection of predicates testing the size of sets modulo
constants. Courcelle showed that this makes the logic strictly more powerful
\cite{Cou90}, which can be seen in the following example.
\begin{example}
	Let $P$ denote the property that a graph has an even number of vertices. Then
	$P$ is trivially definable in CMSOL, but it is not in MSOL.
\end{example}
The algorithms constructed in Courcelle's proof have the shape of a finite state
tree automaton and hence we can say that CMSOL-definable graph properties are
recognizable (or, equivalently, regular or finite-state).
Courcelle's Theorem generalizes one direction of a classic result in automata
theory by B{\"u}chi, which states that a language is recognizable,
if and only if it is MSOL-definable \cite{Bue60}.
Courcelle conjectured in 1990 that the other direction of B{\"u}chi's result can
also be generalized for graphs of bounded treewidth in CMSOL, i.e.\ that each
recognizable graph property is CMSOL-definable. \par
This conjecture is still regarded to be open. Its claimed resolution by
Lapoire \cite{Lap98} is not considered to be valid by several experts.
In the course of time proofs were given for the classes of trees and forests 
\cite{Cou90}, partial 2-trees \cite{Cou91}, partial 3-trees and $k$-connected
partial $k$-trees \cite{Kal00}. A sketch of a proof for graphs of pathwidth
at most $k$ appeared at ICALP 1997 \cite{Kab97}. Very recently, one of the
authors proved, in collaboration with Heggernes and Telle, that Courcelle's
Conjecture holds for partial $k$-trees without chordless cycles of length at
least $\ell$ \cite{BHT15}. 
\par 
In this paper we give self-contained proofs for Halin graphs, $k$-outerplanar
graphs of bounded degree, a subclass of $k$-outerplanar graphs (of unbounded
degree) and some classes related to feedback edge and/or vertex sets of bounded
size w.r.t.\ a spanning tree in the graph.
In all of these cases we show a somewhat stronger result, as we restrict ourselves to
MSOL-definability, thus avoiding the above mentioned counting predicate. Since
Halin graphs have treewidth 3 \cite{Wim87}, Kaller's result implies
that recognizable properties are CMSOL-definable in this case \cite{Kal00}. We
strengthen this result to MSOL-definability. \par
Additionally, we show that Courcelle's Conjecture holds in our stronger sense
for each graph class that admits certain types of MSOL-definable tree
decompositions.
We believe that this technique provides a useful tool towards its resolution ---
if not for all graph classes, then at least for a significant number of special
cases.
\par
In our proofs, we use another classic result in automata theory, the
Myhill-Nerode Theory \cite{Myh57}\cite{Ner58}. It states that a language $L$ is
recognizable if and only if there exists an equivalence relation $\sim_L$,
describing $L$, that has a finite number of equivalence classes (i.e. $\sim_L$
has \emph{finite index}). 
Abrahamson and Fellows \cite{AF93} noted that the Myhill-Nerode Theorem can also
be generalized to graphs of bounded treewidth (see also \cite[Theorem
12.7.2]{DF13}): Each graph property $P$ is recognizable if and only if there
exists an equivalence relation $\sim_P$ of finite index, describing $P$,
defined over terminal graphs with a bounded number of terminal vertices.
This result was recently generalized to hypergraphs \cite{BFGR13}. \par
The general outline of our proofs can be described as follows. Given a graph
property $P$, we assume the existence of an equivalence relation $\sim_P$ of
finite index. We then show that, given a tree decomposition of bounded width, we
can derive the equivalence classes of terminal subgraphs w.r.t.\ its nodes from
the equivalence classes of their children. Once we reach the root of the tree
decomposition we can decide whether a graph has property $P$ by the equivalence
class its terminal subgraph is contained in. We then show that this construction
is MSOL-definable.
\par
The rest of the paper is organized as follows. In Section \ref{secPrel}, we give
the basic definitions and explain all concepts that we use in more
detail. In Section \ref{secConstEQ} we prove some technical results
regarding equivalence classes w.r.t.\ nodes in tree decompositions. The main
results are presented in Sections \ref{secHalin} and \ref{secGen}, where we
prove Courcelle's Conjecture for Halin graphs and other graph classes, such as bounded
degree $k$-outerplanar graphs. We give some concluding remarks in Section
\ref{secConc}.

\section{Preliminaries}\label{secPrel}
\subsection*{Graphs and Tree Decompositions}
We begin by giving the basic definitions of the graph classes and some related
concepts used throughout the paper. 
\begin{definition}[(Planar) Embedding]
	A drawing of a graph in the plane is called an \emph{embedding}. If no pair of
	edges in this drawing crosses, then it is called \emph{planar}.
\end{definition}
\begin{definition}[Halin Graph]
	A graph is called a \emph{Halin graph}, if it can be formed by a planar
	embedding of a tree, none of whose vertices has degree two, and a cycle that
	connects all leaves of the tree such that the embedding stays planar.
\end{definition}
\begin{definition}[$k$-Outerplanar Graph]
	Let $G = (V, E)$ be a graph. $G$ is called a \emph{planar
	graph}, if there exists a planar embedding of $G$. \par
	An embedding of a graph $G$ is \emph{$1$-outerplanar}, if it is
	planar, and all vertices lie on the exterior face. For $k \ge 2$, an embedding
	of a graph $G$ is \emph{$k$-outerplanar}, if it is planar, and when
	all vertices on the outer face are deleted, then one obtains a
	$(k-1)$-outerplanar embedding of the resulting graph. If $G$ admits a
	$k$-outerplanar embedding, then it is called a \emph{$k$-outerplanar graph}.
\end{definition}
One can immediately establish a connection between the two graph
classes.
\begin{proposition}
	Halin graphs are 2-outerplanar graphs.
\end{proposition}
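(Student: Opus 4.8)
The plan is to show that the planar embedding coming with the Halin structure is itself (up to the usual freedom in choosing the outer face) a $2$-outerplanar embedding. Write $G = T \cup C$ for a Halin graph, where $T$ is the planar embedded tree with no vertex of degree two and $C$ is the cycle through the set $L$ of leaves of $T$, so that $V(C) = L$ and every vertex of $T - L$ has degree at least three in $T$. I would first record the structural fact I need: deleting the leaves from a tree again yields a tree, so $T - L$ is connected, and it is vertex-disjoint from $C$ (since $V(C) = L$).

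The first step is to argue that, after possibly re-choosing which face of the given planar embedding of $G$ plays the role of the outer face, the cycle $C$ bounds the outer face. Viewing $C$ as a Jordan curve, it splits the plane into two open regions. The connected subgraph consisting of $T - L$ together with the tree edges joining $L$ to $T - L$ is disjoint from $C$, hence lies entirely in one of these two regions; choosing the other region to be the unbounded face, we obtain an embedding in which the outer face contains no vertex or edge of $G$ other than those of $C$, so its boundary is exactly $C$. Consequently the vertices lying on the outer face of $G$ are precisely the leaves $L$.

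The second step is immediate: deleting all outer-face vertices, i.e.\ the leaves $L$, from this embedding leaves the induced embedding of $G - L = T - L$, which is a planar embedding of a tree and therefore (by Euler's formula) has a unique face, so all of its vertices lie on the outer face. Hence this embedding is $1$-outerplanar, and by the definition of $k$-outerplanarity the embedding of $G$ is $2$-outerplanar, which is the claim.

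The only real work is in the first step — verifying that $C$ can be taken to bound the outer face — and in dismissing the degenerate cases, such as $T$ being a star (so $G$ a wheel and $T - L$ a single vertex) or $T - L$ being empty; these are routine once the Jordan-curve argument is in place, so I do not expect a genuine obstacle here. I would also note in passing that the same argument gives slightly more, namely that every Halin graph admits a planar embedding whose first layer of vertices is exactly the leaf cycle and whose second layer is $T - L$.
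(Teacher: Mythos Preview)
The paper does not actually prove this proposition: it is stated immediately after the definitions with the remark ``One can immediately establish a connection between the two graph classes'' and left without proof. Your argument is correct and supplies precisely the details the paper suppresses --- namely that the leaf cycle $C$ can be taken to bound the outer face (via the Jordan-curve observation that the connected remainder $T-L$ lies on one side of $C$), and that deleting the outer-face vertices then leaves the tree $T-L$, which is trivially $1$-outerplanar. There is nothing to compare against beyond noting that your write-up is more careful than the paper's one-line dismissal; the degenerate cases you flag (wheel, etc.) are indeed routine.
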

The following definition will play a central role in many of the proofs of
Sections \ref{secHalin} and \ref{secGen}.
\begin{definition}[Fundamental Cycle]\label{defFundCyc}
	Let $G = (V, E)$ be a graph with maximal spanning forest $T = (V, F)$. Given an
	edge $e = \{v, w\}$, $e \in E \setminus F$, its \emph{fundamental cycle} is a
	cycle that is formed by the unique path from $v$ to $w$ in $F$ together with
	the edge $e$.
\end{definition}
We now turn to the notion of tree decompositions and some related concepts.
\begin{definition}[Tree Decomposition, Treewidth]
	A \emph{tree decomposition} of a graph $G = (V, E)$ is a pair $(T, X)$ of a
	tree $T = (N, F)$ and an indexed family of vertex sets $(X_t)_{t \in N}$
	(called \emph{bags}), such that the following properties hold.
	\begin{enumerate}[label=(\roman*)]
	  \item Each vertex $v \in V$ is contained in at least one bag.
	  \item For each edge $e \in E$ there exists a bag containing both endpoints.
	  \item For each vertex $v \in V$, the bags in the tree decomposition that
	  contain $v$ form a subtree of $T$.
	\end{enumerate}
	The \emph{width} of a tree decomposition is the size of the largest bag minus 1
	and the \emph{treewidth} of a graph is the minimum width of all its tree
	decompositions.
\end{definition}
To avoid confusion, in the following we will refer to elements of $N$ as
\emph{nodes} and elements of $V$ as \emph{vertices}. Sometimes, to shorten the
notation, we might not differ between the terms \emph{node} and \emph{bag} in a
tree decomposition.
\begin{definition}[Node Types]\label{defBagTypes}
	We distinguish three types of nodes in a tree decomposition $(T,
	X)$, listed below.
	\begin{enumerate}[label={(\roman*)}]
	  \item The nodes corresponding to leaves in $T$ are called \emph{Leaf} nodes.
	  \item If a node has exactly one child it is called an \emph{Intermediate}
	  node.
	  \item If a node has more than one child it is called a \emph{Branch} node.
	\end{enumerate}
\end{definition}
As we will typically speak of some direction between nodes in tree
decompositions, such as a parent-child relation, we define the following.
\begin{definition}[Rooted and Ordered Tree Decomposition]
	Consider a tree decomposition $(T = (N, F), X)$. We call $(T, X)$
	\emph{rooted}, if there is one distinguished node $r \in N$, called the
	\emph{root} of $T$, inducing a parent-child relation on all edges in $F$.
	If there exists a fixed ordering on all bags sharing the same parent node,
	then $T$ is called \emph{ordered}.
\end{definition}
We now introduce \emph{terminal graphs}, over which we will later define
equivalence relations for graph properties.
\begin{definition}[Terminal Graph]
	A \emph{terminal graph} $G = (V, E, X)$ is a graph with vertex set $V$, edge
	set $E$ and an \emph{ordered} terminal set $X \subseteq V$.
\end{definition}
Terminal graphs of special interest in the rest of this paper are \emph{terminal
subgraphs} w.r.t.\ bags in a tree decomposition.
We require the notion of \emph{partial} terminal subgraphs in the proofs of
Sections \ref{secConstEQ} and \ref{secTDMSOL}.
\begin{definition}[(Partial) Terminal Subgraph]\label{defTermSG}
	Let $(T = (N, F), X)$ be a rooted (and ordered) tree decomposition of a graph
	$G = (V, E)$ with bags $X_t$ and $Y_{t'}$, $t, t' \in N$, such that $t$ is the
	parent node of $t'$. The graphs defined below are induced subgraphs of $G$
	given the respective vertex sets.
	\begin{enumerate}[label={(\roman*)}]
	  \item A \emph{terminal subgraph} of a bag $X_t$, denoted by $\termSG{X_t}$,
	  is a terminal graph induced by the vertices in $X_t$ and all its descendants,
	  with the set $X_t$ as its terminals.\label{defTermSGTS}
	  \item A \emph{partial terminal subgraph} of $X_t$ \emph{given a child}
	  $Y_{t'}$, denoted by $\pTermSG{X_t}{Y_{t'}}$ is the terminal graph induced by
	  $X_t$ and the vertices and edges of all terminal subgraphs of the children of
	  $X_t$ that are left siblings of $Y_{t'}$, with terminal set
	  $X_t$.\label{defTermSGPTS}
	\end{enumerate}
	The ordering in each terminal set of the above mentioned terminal graphs can be
	arbitrary, but fixed.
\end{definition}
For an illustration of Definition \ref{defTermSG}, see Figure \ref{figEQCJoin1},
where $H = \termSG{X_H}$ and $G = \pTermSG{X_G}{X_H}$.

\subsection*{Equivalence Relations}
\begin{definition}[Gluing via $\oplus$]\label{defGlue}
	Let $G = (V_G, E_G, X_G)$ and $H = (V_H, E_H, X_H)$ be two terminal graphs with
	$|X_G| = |X_H|$. The graph $G \oplus H$ is obtained by taking the disjoint
	union of $G$ and $H$
	and for each $i$, $1 \le i \le |X_G|$, identifying the $i$-th vertex in $X_G$
	with the $i$-th vertex in $X_H$.
\end{definition}
Note that if an edge is included both in $G$ and in $H$, we drop one
of the edges in $G \oplus H$, i.e. we do not have parallel edges in the graph.
\par
We use the operator $\oplus$ to define equivalence relations over
terminal graphs. Throughout the paper we will restrict ourselves to terminal
graphs of fixed boundary size (i.e. the maximum size of terminal sets is
bounded by some constant), since we focus on equivalence relations with a finite
number of equivalence classes. These, in general, do not exist for classes of
terminal graphs with arbitrary boundary size (see \cite{AF93}).
\begin{definition}[Equivalence Relation over Terminal Graphs]\label{defEQC}
	Let $P$ denote a graph property. $\sim_P$ denotes an equivalence relation
	over terminal graphs, \emph{describing} $P$, defined as follows. Let $G$, $H$
	and $K$ be terminal graphs with fixed boundary size. Then we have:
	\begin{equation*}
		G \sim_P H \Leftrightarrow \forall K : P(G \oplus K) \Leftrightarrow P(H
		\oplus K)
	\end{equation*}
	This yields notions of \emph{equivalence classes} and
	\emph{finite index} in the ordinary way.  
	We might drop the index $P$ in case it is clear from the context.
\end{definition}
We illustrate Definition \ref{defEQC} with an example.
\begin{example}
	Let $P$ denote the property that a graph has a Hamiltonian cycle. Let $G$ and
	$H$ be two terminal graphs with terminal sets $X_G$ and $X_H$, respectively
	(where $|X_G| = |X_H|$).
	We say that $G$ and $H$ are equivalent w.r.t. $\sim_P$, if for all terminal
	graphs $K$ (with terminal set $X_K$, $|X_K| = |X_G| = |X_H|$), the graph $G
	\oplus K$ contains a Hamiltonian cycle if and only if $H \oplus K$ contains a
	Hamiltonian cycle. A simple case when this hols is when both
	$G$ and $H$ contain a Hamiltonian path such that their terminal sets consist of
	the two endpoints of the path.
\end{example}
As mentioned earlier, our ideas are based on the Myhill-Nerode Theory for graphs
of bounded treewidth. The following theorem formally states this result.
\begin{theorem}[Myhill-Nerode Theorem for Graphs of Treewidth
$k$]\label{thmMyhNerTWK} 
Let $P$ denote a graph property. Then the following are equivalent for any fixed
$k$.
	\begin{enumerate}[label={(\roman*)}]
		\item $P$ is recognizable for graphs of treewidth at most
		$k$.\label{thmMyhNerTWK1}
		\item There exists an equivalence relation $\sim_P$, describing $P$, of finite
		index.\label{thmMyhNerTWK3}
	\end{enumerate}
\end{theorem}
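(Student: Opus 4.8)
The plan is to prove the two implications separately, working with recognizability in its algebraic formulation: a property $P$ of graphs of treewidth at most $k$ is recognizable exactly when it is a union of equivalence classes of some finite-index congruence on the algebra of terminal graphs with at most $k+1$ terminals, where the operations are those needed to assemble a graph along a width-$k$ tree decomposition --- introducing and forgetting terminals, adding an edge between two terminals, permuting the terminal order, and a binary \emph{join} operation that glues two terminal graphs along their common boundary. Equivalently, $P$ is recognized by a finite-state bottom-up tree automaton running on parse trees of such decompositions. Both viewpoints are standard and interchangeable.

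For \ref{thmMyhNerTWK1} $\Rightarrow$ \ref{thmMyhNerTWK3}: let $\approx$ be a finite-index congruence recognizing $P$. Since $\oplus$ (for a fixed boundary size) is one of the algebra operations, $G \approx H$ implies $G \oplus K \approx H \oplus K$ for every compatible terminal graph $K$, and because $P$ is a union of $\approx$-classes this gives $P(G \oplus K) \Leftrightarrow P(H \oplus K)$. Hence $\approx$ refines $\sim_P$, so $\sim_P$ has no more classes than $\approx$ and therefore finite index; it describes $P$ by construction. This direction is routine.

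For \ref{thmMyhNerTWK3} $\Rightarrow$ \ref{thmMyhNerTWK1}: the core step is to show that $\sim_P$ is itself a congruence for each composition operation above, so that its quotient is a finite algebra and the map sending a terminal graph to its $\sim_P$-class is a homomorphism that can be implemented by a finite automaton. For the unary operations this is immediate from the definition of $\sim_P$ (a context $K$ for $f(G)$ can be turned into a context $K'$ for $G$, and conversely). For the join $\bowtie$ one uses an \emph{absorption} identity: for terminal graphs $G,H$ and any test graph $K$, the graph $(G \bowtie H) \oplus K$ can be regrouped as $G \oplus K'$ with $K'$ built from $H$ and $K$, and symmetrically as $H \oplus K''$ with $K''$ built from $G$ and $K$, up to a relabelling of the shared boundary that our operation set accounts for. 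Applying $G \sim_P G'$ to replace $G$ by $G'$, then $H \sim_P H'$ to replace $H$ by $H'$, shows $(G \bowtie H) \oplus K$ and $(G' \bowtie H') \oplus K$ satisfy $P$ simultaneously for all $K$, so $G \bowtie H \sim_P G' \bowtie H'$. Given this, the automaton is built directly: its states are the finitely many $\sim_P$-classes, its transitions are read off from the finite operation label at each node of a parse tree of a width-$k$ tree decomposition (which every treewidth-$\le k$ graph admits and which can be computed), and its accepting states are the classes $C$ such that $P(G)$ holds for the graph $G$ obtained at the root with empty terminal set --- well-defined because taking $K$ empty in the definition of $\sim_P$ shows the root class determines membership in $P$.

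The main obstacle is the congruence property for the join operation, that is, making the absorption identity precise: one must track carefully how the common boundary is indexed when a context graph is moved from one side of a join to the other, and verify that the finitely many boundary permutations and terminal-introduction steps needed to reconcile the two groupings are themselves operations of the algebra, hence also respected by $\sim_P$. Once this is settled, finiteness of the state set, the existence and bounded size of parse trees, and correctness of the accepting condition are all straightforward. (Alternatively, this theorem is precisely the Myhill--Nerode theorem for bounded treewidth of Abrahamson and Fellows~\cite{AF93}; see also \cite[Theorem~12.7.2]{DF13}, and it may simply be invoked.)
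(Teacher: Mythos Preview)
The paper does not prove this theorem at all: it states it and refers the reader to \cite{AF93} and \cite[Theorem~12.7.2]{DF13} for the proof, then only uses the consequence that accepting equivalence classes can be identified. Your closing parenthetical --- that the result may simply be invoked from those references --- is exactly what the paper does.

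That said, your sketch is a correct outline of the standard argument. The direction \ref{thmMyhNerTWK1}$\Rightarrow$\ref{thmMyhNerTWK3} is as you say: any recognizing congruence refines $\sim_P$. For \ref{thmMyhNerTWK3}$\Rightarrow$\ref{thmMyhNerTWK1}, the absorption/regrouping identity you describe for the join is indeed the crux, and your warning about tracking boundary indexing and permutations is on point. Interestingly, the paper carries out essentially this same congruence-style computation elsewhere (Section~\ref{secConstEQ}, Proposition~\ref{propOplusGlue} and Lemma~\ref{lemOplusTGlue}) for its specific operations $\oplus_\rhd$ and $\oplus_T$, using exactly the regrouping trick $K \oplus ((G \oplus H) \oplus_T X_G) = G \oplus ((K \oplus H) \oplus_T X_G)$ that you sketch in the abstract --- but it does so for its own purposes, not to prove Theorem~\ref{thmMyhNerTWK}.
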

By the proof of this theorem (see, e.g., \cite[Theorem 12.7.2]{DF13}) we know
that we can identify some equivalence classes of $\sim_P$ with accepting states
in the automaton given in \ref{thmMyhNerTWK1}. Let $C_P$
denote such an ('accepting') equivalence class and $G \in C_P$ a terminal graph.
Then we know that the graph $G \oplus (X_G, \emptyset, X_G)$ has property $P$. We will use
this fact in the proofs of Sections \ref{secFIIDHalin} and \ref{secTDMSOL}.

\subsection*{MSOL-Definability}
We now define monadic second order logic over graphs. All variables that we use
in our expressions are either single vertices/edges or vertex/edge sets.
\emph{Atomic predicates} are logical statements with the least number of
variables, e.g. the vertex membership '$v \in V$'.
Higher-order predicates can be formed by joining predicates
via negation $\neg$, conjunction $\wedge$, disjunction $\vee$, implication
$\rightarrow$ and equivalence $\leftrightarrow$ together with the existential
quantifier $\exists$ and the universal quantifier $\forall$. A predicate without
free variables, i.e. variables that are not in the scope of some quantifier, is
called a \emph{sentence}.
A graph property is called \emph{MSOL-definable} if we can express it with an
MSOL-sentence.
\par
A central concept used in this paper is an implicit representation
of a tree decomposition in monadic second order logic, as we cannot refer to
bags and edges in a tree decomposition as variables in MSOL directly.
Hence, we most importantly require
two types of predicates. The first one will
allow us to verify whether a vertex is contained in some bag and whether any
vertex set in the graph constitutes a bag in its tree decomposition.
In our definition, each bag will be associated with either a vertex or an edge
in the underlying graph together with some \emph{type}, whose definition depends
on the actual graph class under consideration.
The second one allows for identifying edges in the tree decomposition, i.e. for
any two vertex sets $X$ and $Y$, this predicate will be true if and only if both $X$ and
$Y$ are bags in the tree decomposition and $X$ is the bag corresponding to the
parent node of $Y$. \par
While all MSOL-definable tree decompositions have to be rooted, not all of them
have to be ordered. In some cases, however, an ordering on nodes with the same
parent is another prerequisite, which also has to be verifiable with an
MSOL-predicate.
\begin{definition}[MSOL-definable tree decomposition]
	A rooted (and ordered) tree decomposition $(T, X)$ of a graph $G$ is called
	\emph{MSOL-definable}, if the following hold.
	\begin{enumerate}[label={(\roman*)}]
	  \item Each bag $X$ in the tree decomposition can be identified by one of the
	  following predicates (where $s$ and $t$ are constants). 
	  \begin{enumerate}[label={(\alph*)}]
	    \item $\Bag_{\tau_1}^V(v, X),\ldots,\Bag_{\tau_t}^V(v, X)$: The bag $X$ is
	    associated with type $\tau_i$ and the vertex $v \in V$, where $1 \le i \le
	    t$.
	    \item $\Bag_{\sigma_1}^E(e, X),\ldots,\Bag_{\sigma_s}^E(e, X)$: The bag $X$
	    is associated with type $\sigma_j$ and the edge $e \in E$, where $1 \le j
	    \le s$.
	  \end{enumerate}
	  Furthermore there
	  exists at least one type that contains the corresponding vertex or both
	  endpoints of the corresponding edge.
		\item There exists a predicate $\Parent(X_p, X_c)$ to identify edges in $T$,
		which is true, if and only if $X_p$ is the parent bag of $X_c$.
	\end{enumerate}
	We call an MSOL-definable tree decomposition \emph{ordered}, if the following
	holds. 
	\begin{enumerate}[label={(\roman*)}]
	  \setcounter{enumi}{2}
	  \item There exists a predicate $\oriNBA(X_l, X_r)$, which is true if and only
	  if $X_l$ and $X_r$ are siblings such that $X_l$ is the direct left sibling of
		$X_r$.
	\end{enumerate}
\end{definition}

\section{Constructing Equivalence Classes}\label{secConstEQ}
The current section contains a number of technical results related to
equivalence classes of (partial) terminal subgraphs of bags in a tree
decomposition.
In particular, we will show how to derive the equivalence classes of
(partial) terminal subgraphs of bags in a tree decomposition from the
equivalence classes of some (partial) terminal subgraphs of child/sibling bags.
Hence we prove that these equivalence classes are related to each other in the
same way as states in some finite automaton via its transition function, which
will be of vital importance in the proofs of Sections \ref{secFIIDHalin} and
\ref{secTDMSOL}.
\par
In the following, unless stated otherwise, we assume that our tree decomposition
is rooted and ordered. First, we consider branch nodes. We begin by defining an
operator, which can be seen as an extension of the $\oplus$-operator.
\begin{definition}[Gluing via $\oplus_\rhd$]
	Let $X_G$ be a branch bag in a tree decomposition with child $X_H$ and let $G
	= \pTermSG{X_G}{X_H} = (V_G, E_G, X_G)$ and $H = \termSG{X_H} = (V_H, E_H,
	X_H)$ denote the partial terminal subgraph of $X_G$ given $X_H$ and the
	terminal subgraph of $X_H$, respectively. The operation $\oplus_\rhd$ is
	defined as:
	\begin{equation*}
		G \oplus_\rhd H = (V_G \cup V_H, E_G \cup E_H, X_G)
	\end{equation*}
\end{definition}
Note that again, we drop parallel edges, if they occur.
\par
\begin{figure}
	\centering
	\subfloat[The respective
	terminal graphs]{\includegraphics[width=.45\textwidth]{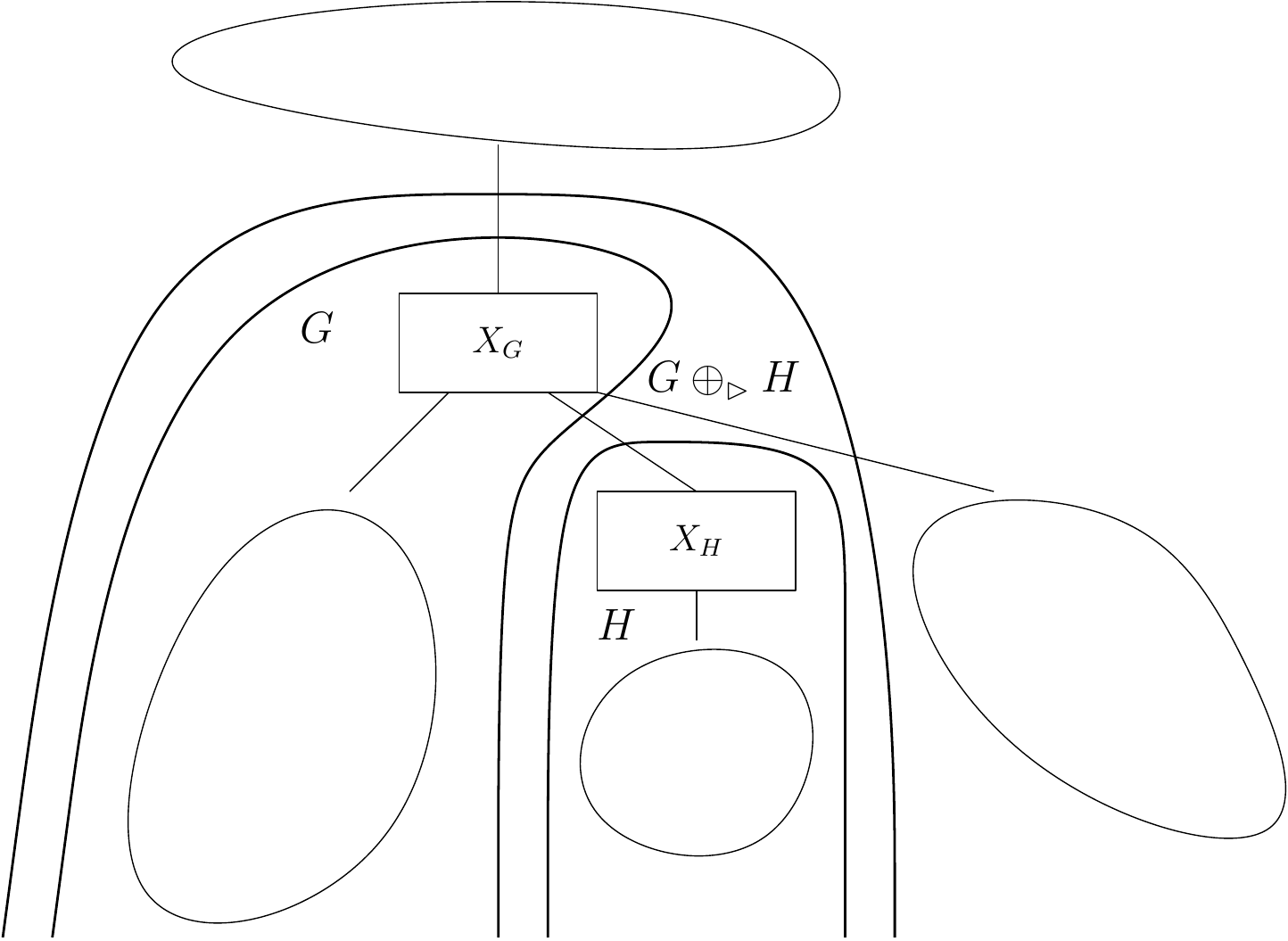}
	\label{figEQCJoin1}}
	\qquad
	\subfloat[Splitting $\{X_{G}, X_{H}\}$]{
	\includegraphics[width=.45\textwidth]{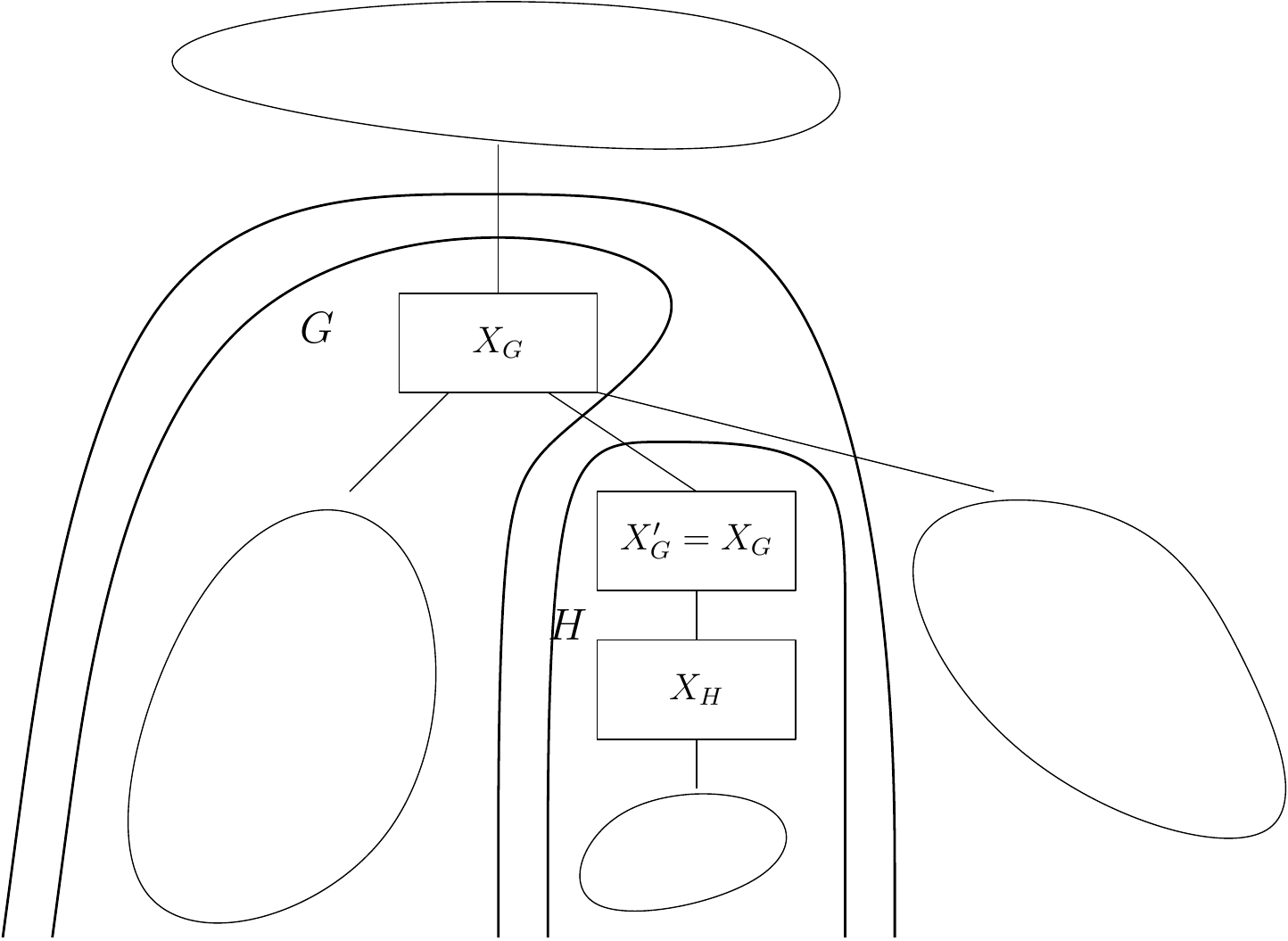} 
	\label{figEQCJoin2}}
	\caption{Branch node in a tree decomposition}
	\label{figEQCJoin}
\end{figure}
Consider the situation depicted in Figure \ref{figEQCJoin} and suppose that we
know the equivalence class for the graph $G = \pTermSG{X_G}{X_H}$, i.e. the
partial terminal subgraph of $X_G$ given $X_H$, and the equivalence class for
graph $H = \termSG{X_H}$, the terminal subgraph of $X_H$. We want to
derive the equivalence class of the partial terminal subgraph of $X_G$ given
the right sibling of $X_H$ (which is the terminal graph $G \oplus_\rhd H$).
\par
We will prove that the equivalence class of $G \oplus_\rhd H$ only depends on
the equivalence class of $G$ and $H$ by explaining how we can create a terminal
graph in this class from any pair of graphs $G' \sim G$, $H' \sim H$ with
$X_{G'} = X_G$ and $X_{H'} = X_H$.
Note that since we are only interested in determining whether the underlying
graph of the tree decomposition, say $G^*$, has property $P$, it is sufficient
to only consider terminal graphs in the equivalence classes of $G$ and $H$ that have the
same terminal sets as $G$ and $H$. These classes contain any number of
(terminal) graphs, which are (also up to isomorphism) completely unrelated to
$G^*$ and hence can be disregarded.
The following lemma formalizes the above discussion.
\begin{lemma}\label{lemEquivDerJoin}
	Let $X_G$ be a branch bag in a tree decomposition and $X_H$ one of its child
	bags. Let $G = \pTermSG{X_G}{X_H}$, $H = \termSG{X_H}$ and $G'$ and $H'$ two
	terminal graphs.
	If $G' \sim G$, $H' \sim H$, $X_G =X_{G'}$ and $X_H = X_{H'}$, then $(G
	\oplus_\rhd H) \sim (G' \oplus_\rhd H')$.
\end{lemma}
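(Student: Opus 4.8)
The plan is to reduce the claim about $\oplus_\rhd$ to the standard congruence property of $\sim_P$ with respect to $\oplus$. The key observation is that gluing via $\oplus_\rhd$ is, up to a harmless relabeling, the same as gluing via $\oplus$. Concretely, $G \oplus_\rhd H$ has vertex set $V_G \cup V_H$, edge set $E_G \cup E_H$ and terminal set $X_G$; since $X_H \subseteq V_H$ and $X_G \subseteq V_G$ with $X_G = X_H$ as ordered terminal sets (they are the bag of the common parent, intersected appropriately — here $X_G$ and $X_H$ name the \emph{same} ordered tuple of vertices, namely the terminals shared between the partial terminal subgraph and the child's terminal subgraph), the operation $G \oplus_\rhd H$ coincides with the ordinary glue $G \oplus H$ where we regard $H = \termSG{X_H}$ as a terminal graph on terminal set $X_H$ and $G$ as a terminal graph on terminal set $X_G$. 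In other words, $G \oplus_\rhd H = G \oplus H$ as (plain) graphs, and moreover the resulting terminal set $X_G$ is exactly the one that the subsequent $\oplus$-operations in building $G^*$ will use.

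First I would make this identification precise: state that $G \oplus_\rhd H$ and $G \oplus H$ are literally the same terminal graph (same vertices, same edges with parallel edges dropped, same ordered terminal set $X_G = X_H$), using that $X_G$ and $X_{H}$ denote identical ordered tuples. Second, I would invoke the hypotheses $G' \sim G$ and $H' \sim H$ with $X_{G'} = X_G$, $X_{H'} = X_H$; in particular all four graphs share the same boundary size, so $\oplus$ is defined on every pair among them. Third, I would apply the fact that $\sim_P$ is a congruence for $\oplus$: from $H' \sim H$ we get $G \oplus H' \sim G \oplus H$ (glue the fixed graph $G$ on the right of both sides — formally, for every $K$, $P(G \oplus H' \oplus K) \Leftrightarrow P(G \oplus H \oplus K)$ since $(G \oplus K)$ plays the role of the test graph in the definition of $H' \sim H$, after reassociating), and from $G' \sim G$ we get $G' \oplus H' \sim G \oplus H'$ similarly. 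Chaining these two equivalences via transitivity of $\sim_P$ yields $G' \oplus H' \sim G \oplus H$, and translating back through the identification of the first step gives $(G' \oplus_\rhd H') \sim (G \oplus_\rhd H)$, as required.

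The one point that needs care — and which I expect to be the main (though minor) obstacle — is the bookkeeping of terminal sets and the associativity/commutativity of $\oplus$ under the "drop parallel edges" convention, so that the reassociations $G \oplus (H' \oplus K) \cong (G \oplus H') \oplus K$ used above are genuine equalities of terminal graphs and not merely isomorphisms that could disturb property $P$. Since $P$ is a graph property (isomorphism-invariant) and $\oplus$ on terminal graphs of equal boundary is associative up to an isomorphism fixing the terminals, this is routine: I would note once that $\oplus$ is associative up to terminal-preserving isomorphism and that graph properties are preserved under isomorphism, and then carry out the chain of equivalences freely. No counting or structural graph theory is needed here; the lemma is purely a congruence argument, and its role is simply to license treating the construction of $G \oplus_\rhd H$ along a branch node as a transition function on equivalence classes in the later sections.
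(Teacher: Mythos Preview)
Your argument rests on the claim that ``$X_G$ and $X_H$ name the same ordered tuple of vertices,'' so that $G \oplus_\rhd H$ literally equals $G \oplus H$. This is false. In the setting of the lemma, $X_G$ is the parent (branch) bag and $X_H$ is one of its child bags; these are in general different vertex sets, possibly of different cardinalities. Consequently $G$ has terminal set $X_G$, $H$ has terminal set $X_H$, and $G \oplus H$ is not even defined. The identification you propose simply does not hold, and the rest of your congruence argument collapses with it: to replace $H$ by $H'$ you would need to exhibit $(G \oplus_\rhd H) \oplus K$ as $H \oplus (\text{something with terminal set } X_H)$, but vertices of $X_H \setminus X_G$ do not lie in $V_G$ (nor in $V_K$), so no such rewriting is available without first extending the other side by isolated vertices.

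This is precisely the obstacle the paper addresses. It introduces an auxiliary operator $\oplus_T$ that reassigns the terminal set of a terminal graph (adding any missing terminals as isolated vertices), and observes that $G \oplus_\rhd H = (G \oplus (H \oplus_T X_G)) \oplus_T X_G$. The real work is then a separate lemma showing that $\oplus_T$ is compatible with $\sim$: if $H \sim H'$ and $X_H = X_{H'}$, then $H \oplus_T X \sim H' \oplus_T X$ for any ordered set $X$. Once that is established, your congruence-style chaining goes through --- but applied to $H \oplus_T X_G$ and $H' \oplus_T X_G$ (which now share the terminal set $X_G$ with $G$ and $G'$), not to $H$ and $H'$ directly. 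That compatibility lemma is the missing idea in your proposal.
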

\begin{proof}
We first define an operator that allows us to rewrite $\oplus_\rhd$.
\begin{definition}[Gluing via $\oplus_T$]
	Let $G$ be a (terminal) graph and $X$ an ordered set of vertices. The operation
	$\oplus_T$ is defined as:
	\begin{equation*}
		G \oplus_T X = (V_G \cup X, E_G, X)
	\end{equation*}
	That is, we take the (not necessarily disjoint) union of $X$ and the vertices
	in $G$ and let $X$ be the terminal set of the resulting terminal graph.
\end{definition}
Note that $\oplus_T$ can either be used to make a graph a terminal graph, or
to equip a terminal graph with a new terminal set. 
One easily observes the following.
\begin{proposition}\label{propOplusRhdRewrite}
	Let $G$ and $H$ be two terminal graphs as in Lemma \ref{lemEquivDerJoin}.
	Then,
	\begin{equation}
		G \oplus_\rhd H = \overbrace{(G \oplus \underbrace{(H \oplus_T X_G)}_{(b)})
		\oplus_T X_G}^{(a)}\label{eqOplusRhdRewrite}.
	\end{equation}
\end{proposition}
This process of rewriting $\oplus_\rhd$ can be illustrated as shown in Figure
\ref{figEQCJoin2}. Instead of computing $G \oplus_\rhd H$ directly, we split the
edge between the bags $X_G$ and $X_H$, creating a new bag $X_G'$ in between the
edge, where $X_G' = X_G$. Then we extend $H$ to a terminal graph with terminal
set $X_G'$ by using the $\oplus_T$-operator. Denote this graph by $H_{X_G'}$.
Since $H_{X_G'}$ has terminal set $X_G' = X_G$, we can apply $\oplus$ to $G$ and
$H'$, such that all vertices that are identified in the operation are equal. This results
in the graph consisting of all vertices and edges in both $G$ and $H$.
Eventually, we apply $\oplus_T$ to the resulting graph again to make it a
terminal graph with terminal set $X_G$. \par
We will lead the proof of Lemma \ref{lemEquivDerJoin} in two steps: First we
show that we can construct graphs equivalent to $(G \oplus H) \oplus_T X_G$ by
members of the equivalence classes of $G$ and $H$, if $G$ and $H$ have the same
terminal set (Part (a) of Equation \ref{eqOplusRhdRewrite}, where $H$ denotes the
terminal graph $H \oplus_T X_G$).
In the second step, we show that we can construct graphs equivalent to $H
\oplus_T X$ from members of the equivalence class $H$ for any terminal set $X$
(Part (b) of Equation \ref{eqOplusRhdRewrite}).
\par We now proceed with the formal proofs.
\begin{figure}[t]
	\centering
	\includegraphics[width=.5\textwidth]{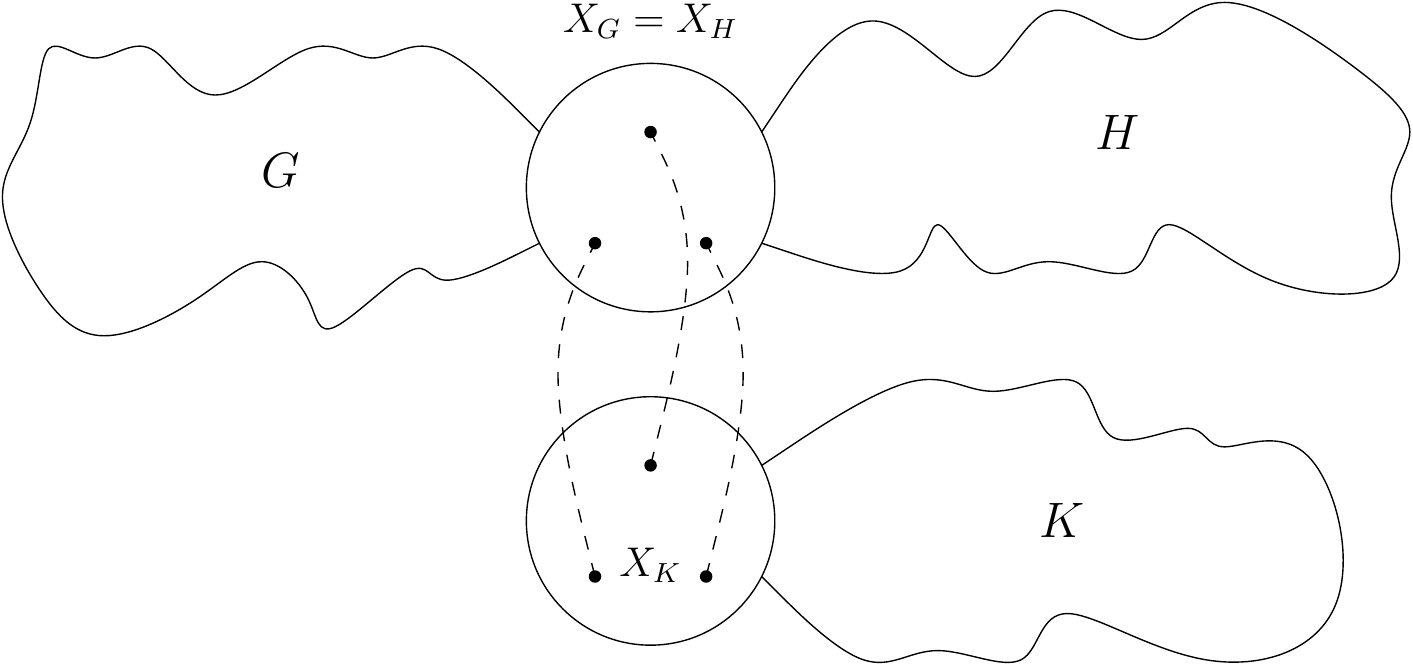}
	\caption{Terminal graphs $G, H$ and $K$ as in the proof of Proposition
	\ref{propOplusGlue}. The dashed lines indicate, which vertices are being
	identified in the corresponding $\oplus$-operation.}
	\label{figOplusProof}
\end{figure}
\begin{proposition}\label{propOplusGlue}
	Let $G = (V_G, E_G, X_G)$ and $H = (V_H, E_H, X_H)$ be two terminal graphs with
	$X_G = X_H$. Let $G'$ and $H'$ be two terminal graphs with $G' \sim G$, $H'
	\sim H$, $X_G = X_{G'}$ and $X_H = X_{H'}$.
	Then,
	\begin{equation*}
		(G \oplus H) \oplus_T X_G \sim (G' \oplus H') \oplus_T X_{G'}.
	\end{equation*} 
\end{proposition}
\begin{proof}
	By Figure \ref{figOplusProof} we can observe the following.
 	\begin{equation*}
		K \oplus ((G \oplus H) \oplus_T X_G) = G \oplus ((K \oplus H) \oplus_T X_G)
 	\end{equation*}
	Regardless of the order in which we apply the operators, both graphs will have
	the same vertex and edge sets.
	As for the identifying step (using the $\oplus$-operator), one can see that for
	all $i = 1,\ldots,|X_K|$ we have that the $i$-th vertex in $X_K$ is identified
	with the $i$-th vertex in $X_G$ in the left-hand side of the equation and with
	the $i$-th vertex in $X_H$ in the right-hand side. The equality still holds,
	since $X_G = X_H$. We use this argument (and the fact that $X_{G'} = X_G =
	X_{H} = X_{H'}$) to show the following.
	\begin{align*}
		\forall K : ~&P(K \oplus ((G \oplus H) \oplus_T X_G)) \Leftrightarrow P(G
		\oplus ((K \oplus H) \oplus_T X_G)) \\
		\Leftrightarrow &P(G' \oplus ((K \oplus H) \oplus_T X_{G'})) \Leftrightarrow
		P (H \oplus ((K \oplus G') \oplus_T X_{H})) \\
		\Leftrightarrow &P(H' \oplus ((K \oplus G') \oplus_T X_{H'})) \Leftrightarrow 
		P(K \oplus ((G' \oplus H') \oplus_T X_{G'})) \\
	\end{align*}
	Hence, our claim follows.
\qed \end{proof}
\begin{figure}[t]
	\centering
	\includegraphics[width=.7\textwidth]{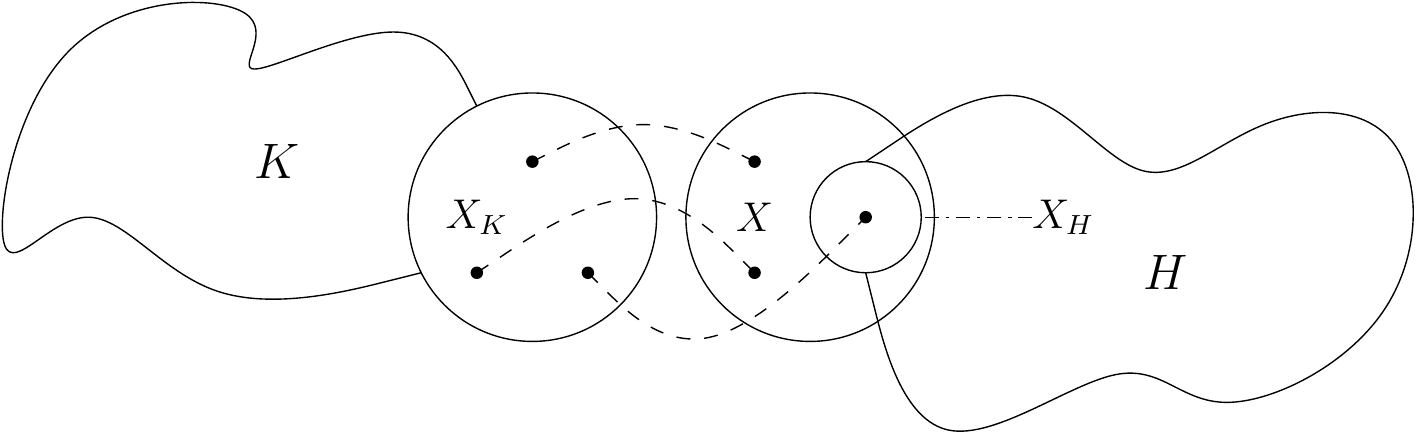}
	\caption{Terminal graphs $H$ and $K$, and a terminal set $X$. The dashed
	lines indicate, which vertices are being identified in the corresponding
	$\oplus$-operation.}
	\label{figOplusTProof}
\end{figure}
\begin{lemma}\label{lemOplusTGlue}
	Let $H, H'$ be terminal graphs with $H \sim H'$, $X_H = X_{H'}$ and $X$ an
	ordered vertex set.
	Then, $H \oplus_T X \sim H' \oplus_T X$.
\end{lemma}
\begin{proof}
	By Figure \ref{figOplusTProof}, one can derive a similar argument as in the
	proof of Proposition \ref{propOplusGlue}. Note that $|X_K| = |X|$ (otherwise,
	$\oplus$ is not defined) and let $K_X = K \oplus (X, \emptyset, X)$, i.e. the
	graph obtained by identifying each $i$-th vertex in $X_K$ with each $i$-th
	vertex in $X$, where $1 \le i \le |X_K|$. Then,
	\begin{equation*}
	  K \oplus (H \oplus_T X) = H \oplus (K_X \oplus_T X_H).
	\end{equation*}
	In the left-hand side, we first extend the terminal graph $H$ to have terminal
	set $X$ and then glue the resulting graph to $K$. Thus the $i$-th vertex in
	$X_K$ is identified with the $i$-th vertex in $X$, $i = 1,\ldots,|X_K|$. The
	same vertices are being identified in the first step in computing the
	right-hand side, which is constructing the graph $K_X$. We then extend this
	graph to have terminal set $X_H$ and glue it to the graph $H$. Since again, in both
	of the computations the same vertices get identified and both graphs have equal
	vertex and edge sets, we see that our claim holds.
	We use this argument (and the fact that $X_H = X_{H'}$) to conclude our proof
	as follows.
	\begin{align*}
		\forall K :~& P(K \oplus (H \oplus_T X)) \Leftrightarrow P(H \oplus (K_X
		\oplus_T X_H)) \\
		\Leftrightarrow &P(H' \oplus (K_X \oplus_T X_{H'})) \Leftrightarrow P(K \oplus
		(H' \oplus_T X))
	\end{align*}
\qed \end{proof}
This concludes our proof of Lemma \ref{lemEquivDerJoin}.
\qed \end{proof}
The methods used in this proof also allow us
to handle intermediate nodes in a tree decomposition. For an illustration see
Figure \ref{figEQCInt}.
Lemma \ref{lemOplusTGlue} suffices as an argument that we can derive the
equivalence class of $G$ from graphs equivalent to $H$.
\par
Next, we generalize the situation of Lemma \ref{lemEquivDerJoin},
where we were dealing with two child nodes of a branch bag, to handle any
constant number of children at a time (see Figure \ref{figEQCJoinBDeg}). We will
apply this result to tree decompositions that are not ordered but instead have
bounded degree.
\begin{figure}[t]
	\centering
	\subfloat[Intermediate node, where $G = H \oplus_T X_G$.]{
			\includegraphics[width=.255\textwidth]{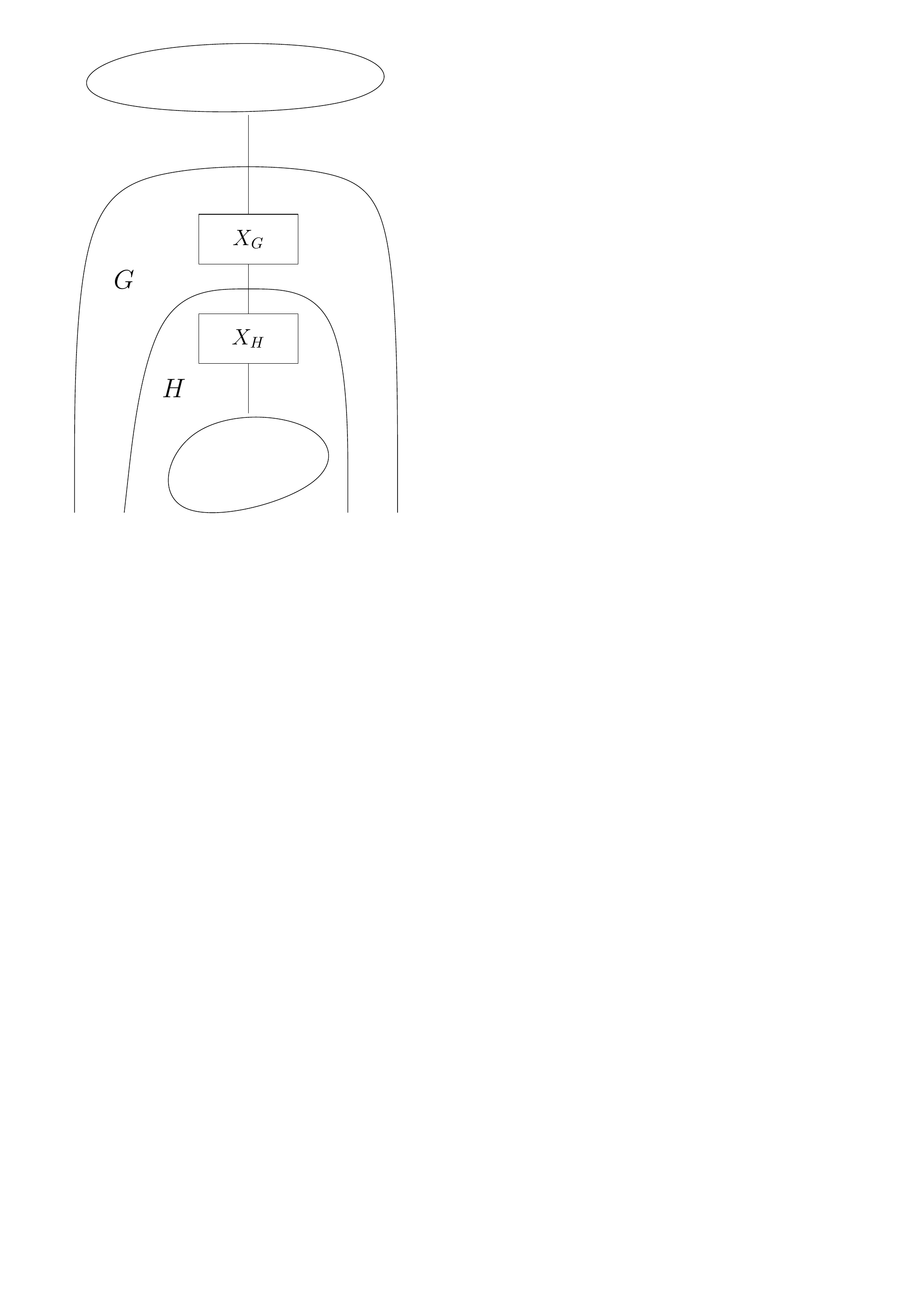}
			\label{figEQCInt}
		}
	\qquad
	\subfloat[Bounded degree branch node. Note that $G = (H_1 \oplus_T X_G)
	\oplus_\rhd (H_2 \oplus_T X_G) \oplus_\rhd (H_3 \oplus_T X_G)$.]{
		\includegraphics[width=.555\textwidth]{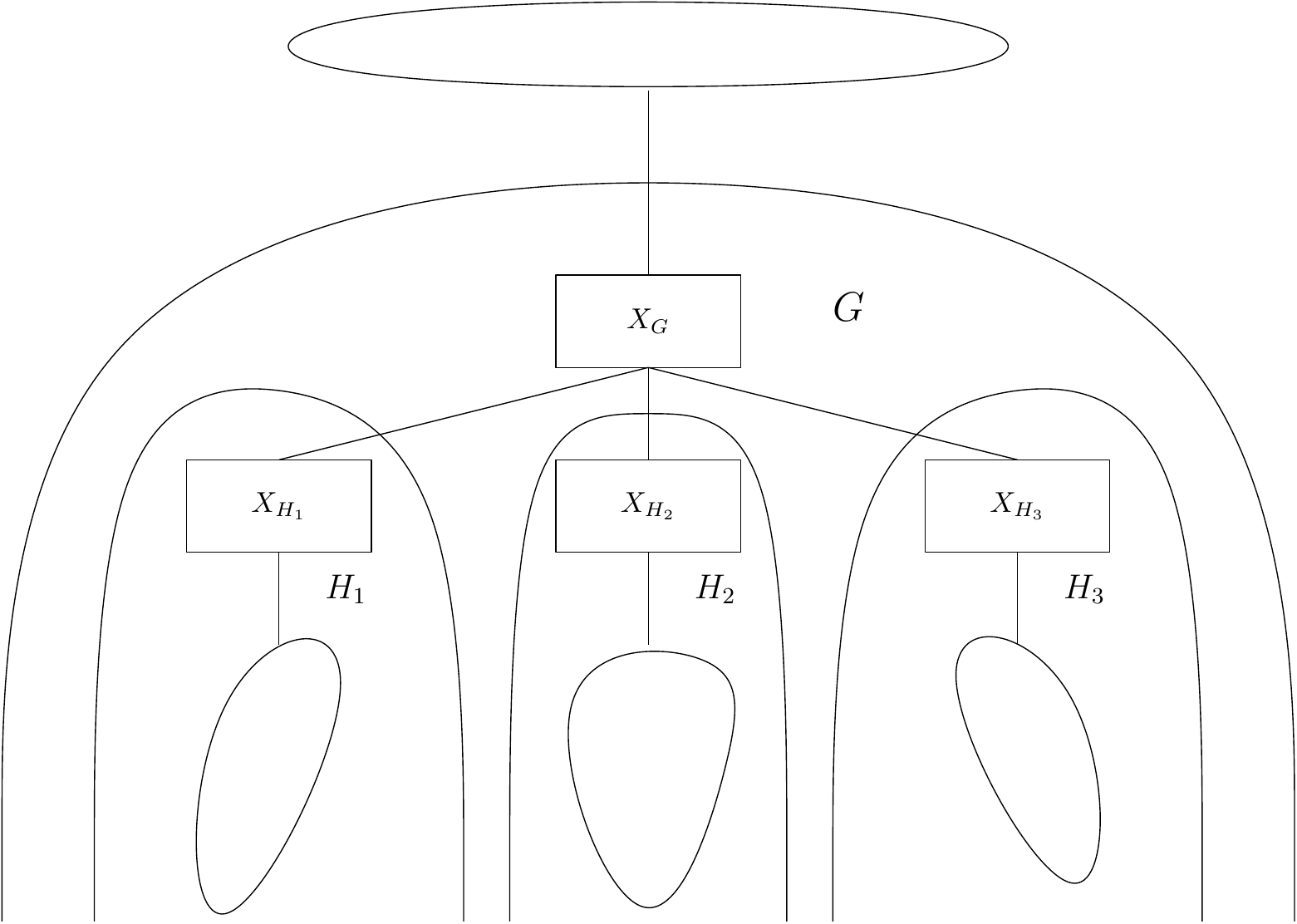}
		\label{figEQCJoinBDeg}
	}
	\caption{Intermediate and bounded degree branch node in a tree decomposition.}
\end{figure}
\begin{lemma}\label{lemEQCJoinConstCh}
	Let $X_G$ be a branch bag in a tree decomposition with a constant number of
	child bags $X_{H_1},\ldots,X_{H_c}$. Let $H_1 = \termSG{X_{H_1}}$,\ldots, $H_c
	= \termSG{X_c}$. If $H_1' \sim H_1,\ldots,H_c' \sim H_c$ and $X_{H_1'} =
	X_{H_1},\ldots,X_{H_c'} = X_{H_c}$, then
	\begin{align*}
		(H_1 \oplus_T X_G) \oplus_\rhd \cdots \oplus_\rhd (H_c \oplus_T X_G)
		\sim (H_1' \oplus_T X_G) \oplus_\rhd \cdots \oplus_\rhd (H_c' \oplus_T X_G)
	\end{align*}
\end{lemma}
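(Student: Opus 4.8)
The plan is to reduce Lemma \ref{lemEQCJoinConstCh} to repeated applications of the tools already established for the two-child case, namely Lemma \ref{lemEquivDerJoin} (equivalently, Proposition \ref{propOplusGlue} together with Lemma \ref{lemOplusTGlue}). First I would observe that the expression $(H_1 \oplus_T X_G) \oplus_\rhd \cdots \oplus_\rhd (H_c \oplus_T X_G)$ is built up left-to-right, so it can be parenthesized as $G_c$, where $G_1 = H_1 \oplus_T X_G$ and, for $2 \le j \le c$, $G_j = G_{j-1} \oplus_\rhd (H_j \oplus_T X_G)$. The analogous primed quantities $G_1', \ldots, G_c'$ are defined from the $H_i'$. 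The goal is then $G_c \sim G_c'$, and the natural route is induction on $j$ with the invariant $G_j \sim G_j'$ (and $X_{G_j} = X_{G_j'} = X_G$, which is immediate since every $\oplus_\rhd$ and $\oplus_T$ here produces terminal set $X_G$).

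For the base case, $G_1 = H_1 \oplus_T X_G$ and $G_1' = H_1' \oplus_T X_G$; since $H_1 \sim H_1'$ and $X_{H_1} = X_{H_1'}$, Lemma \ref{lemOplusTGlue} gives $G_1 \sim G_1'$. For the inductive step, assume $G_{j-1} \sim G_{j-1}'$ with $X_{G_{j-1}} = X_{G_{j-1}'} = X_G$. We also have $H_j \sim H_j'$ with $X_{H_j} = X_{H_j'}$, hence $H_j \oplus_T X_G \sim H_j' \oplus_T X_G$ by Lemma \ref{lemOplusTGlue}, and both of these terminal graphs have terminal set $X_G$. Now $G_j = G_{j-1} \oplus_\rhd (H_j \oplus_T X_G)$ and $G_j' = G_{j-1}' \oplus_\rhd (H_j' \oplus_T X_G)$; applying Lemma \ref{lemEquivDerJoin} with the roles of $G, H$ played by $G_{j-1}$ and $H_j \oplus_T X_G$ (and $G', H'$ by $G_{j-1}'$ and $H_j' \oplus_T X_G$) yields $G_j \sim G_j'$. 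After $c-1$ steps we reach $G_c \sim G_c'$, which is exactly the claim.

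One small technical point to check is that Lemma \ref{lemEquivDerJoin} as stated refers to $G$ being a partial terminal subgraph $\pTermSG{X_G}{X_H}$ and $H$ a terminal subgraph $\termSG{X_H}$ of an actual tree decomposition, whereas in the inductive step I am feeding it arbitrary terminal graphs $G_{j-1}$ and $H_j \oplus_T X_G$. However, inspecting its proof shows that the hypotheses actually used are only that $G_{j-1} \sim G_{j-1}'$, $H_j \oplus_T X_G \sim H_j' \oplus_T X_G$, the terminal sets match pairwise, and — for the rewriting of $\oplus_\rhd$ via Proposition \ref{propOplusRhdRewrite} — that both operands share the same terminal set $X_G$; all of these hold here. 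So the main obstacle is not a deep one but this bookkeeping: making sure the partial terminal subgraph $(H_1 \oplus_T X_G) \oplus_\rhd \cdots \oplus_\rhd (H_{j-1} \oplus_T X_G)$ genuinely has terminal set $X_G$ (it does, by definition of $\oplus_\rhd$ and $\oplus_T$) so that Lemma \ref{lemEquivDerJoin} applies at each stage, and noting explicitly that the lemma's proof only uses the abstract equivalence hypotheses rather than the concrete tree-decomposition provenance of its inputs. With that observation in place, the induction goes through cleanly.
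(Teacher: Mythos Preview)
Your proposal is correct and follows essentially the same approach as the paper: both reduce the $c$-child case to repeated invocations of Lemma~\ref{lemOplusTGlue} and Lemma~\ref{lemEquivDerJoin}, with the paper replacing one $H_i$ by $H_i'$ at a time (using commutativity of $\oplus_\rhd$ when all terminal sets equal $X_G$) while you run a left-to-right induction replacing both operands simultaneously. Your explicit observation that the proof of Lemma~\ref{lemEquivDerJoin} only uses the abstract equivalence and terminal-set hypotheses, not the tree-decomposition provenance of its inputs, is a point the paper leaves implicit.
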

\begin{proof}
	Let $G$ and $H$ be the two terminal graphs as indicated below.
	\begin{equation*}
		\underbrace{(H_1 \oplus_T X_G)}_{G} \oplus_\rhd \underbrace{(H_2 \oplus_T
		X_G) \oplus_\rhd \cdots \oplus_\rhd (H_c \oplus_T X_G)}_{H}
	\end{equation*}
	Since $H_1 \sim H_1'$, we know by Lemma \ref{lemOplusTGlue}, that $(H_1
	\oplus_T X_G) \sim (H_1' \oplus_T X_G)$. Let $G' = (H_1' \oplus_T X_G)$, then
	we have that $G \sim G'$. Now, by Lemma \ref{lemEquivDerJoin}, we know that
	$(G \oplus_\rhd H) \sim (G' \oplus_\rhd H)$ and hence:
	\begin{equation*}
		G \oplus_\rhd H \sim (H_1' \oplus_T X_G) \oplus_\rhd H
	\end{equation*}
	We can apply this argument repeatedly and our claim follows. Note that the
	child bags $X_{H_1},\ldots,X_{H_c}$ do not need a specific ordering, as in this
	context the operation $\oplus_\rhd$ is commutative (all graphs, which it is
	applied to, have terminal set $X_G$).
\qed \end{proof}

\section{Halin Graphs}\label{secHalin}
This section is devoted to proving our first main result, which is that
MSOL-definability equals recognizability for the class of Halin graphs. As
outlined before, we will prove that finite index implies
MSOL-definability. In a first step, we will show that we can define a certain
orientation on the edges of a Halin graph together with an ordering on edges
with the same head vertex in monadic second order logic (Section
\ref{secOrdHG}), which we then will use to construct MSOL-definable tree
decompositions of Halin graphs (Section \ref{secTDHG}). We conclude the
proof in Section \ref{secFIIDHalin}. \par
In many of the proofs of MSOL-definability of graph (or tree
decomposition) properties, we use other graph properties that have been shown to
be MSOL-definable before, and refer for more precise expressions to
the appendix.

\subsection{Edge Orientation and Ordering}\label{secOrdHG}
In the following we will develop an orientation on the edges of a
Halin graph, together with an ordering on edges with the head vertex, which is
MSOL-definable. Our goal is that in this orientation, the edges that form the
cycle connecting the leaves is a directed cycle and the tree of the
Halin graph forms a directed tree with some arbitrary root on the outer cycle.
\begin{lemma}[Cf. \cite{Cou95}, Lemma 4.8 in \cite{Kal00}]\label{lemTWKOrd}
	Let $G$ be a graph of treewidth $k$. Any orientation $\msolOri$ on its edges
	using predicates $\head(e, v)$ and $\tail(e, v)$ is MSOL-definable.
\end{lemma}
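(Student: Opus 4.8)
The plan is to prove Lemma~\ref{lemTWKOrd} via the standard technique of first extracting an MSOL-definable structure from the graph---here a tree decomposition of width $k$, or more conveniently a rooted one---and then using it to orient the edges consistently. The key idea is that if $G$ has treewidth $k$, then $G$ has a tree decomposition $(T,X)$ of width $k$, and we may assume $T$ is rooted. From such a decomposition one obtains a linear order (or at least a partial order with bounded antichains) on the vertices of $G$ by, for each $v$, looking at the highest node of $T$ whose bag contains $v$; orienting every edge $\{u,v\}$ from the endpoint whose "top bag" is higher in $T$ toward the one whose top bag is lower (breaking ties within a bag by the fixed order on that bag's at most $k+1$ vertices) yields a well-defined orientation. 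The point of invoking treewidth is that this orientation has the crucial property that each vertex has in-degree at most $k$ (all in-neighbours of $v$ lie in the at most $k+1$ bags on the path from $v$'s top bag down, intersected appropriately), which is what makes the head/tail predicates expressible.

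First I would make precise what "MSOL-definable orientation" means: we seek a formula (using a constant number of monadic set-variable colourings of the edges, or equivalently quantified edge sets $F_1,\dots,F_r$) together with predicates $\head(e,v)$, $\tail(e,v)$ such that the formula forces a unique orientation and $\head,\tail$ read it off. Second, I would recall or cite the fact (this is the content of \cite{Cou95} and Lemma~4.8 of \cite{Kal00}) that graphs of treewidth $k$ admit an MSOL-definable colouring encoding a width-$k$ tree decomposition, or more directly that one can MSOL-define an orientation of bounded in-degree: a graph has an orientation with in-degree $\le d$ iff every subgraph has average degree $\le 2d$, and for fixed $d$ the existence of such an orientation---and indeed a canonical choice of one---is expressible in MSOL by guessing the orientation as $d$ edge-colour classes per vertex and checking local consistency. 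Third, given any orientation of in-degree $\le d = f(k)$, I would exhibit the explicit predicates: guess sets $E_1,\dots,E_d$ partitioning $E$ such that each $v$ has at most one incident edge in each $E_i$ directed into it, and define $\head(e,v)$ to hold when $e\in E_i$ and $v$ is the unique endpoint of $e$ with an incident $E_i$-edge pointing at it; $\tail(e,v)$ is then the other endpoint. Finally I would note uniqueness/definability: among all such orientations one can single out a canonical one (e.g. lexicographically least with respect to the guessed decomposition), so the orientation is genuinely \emph{defined}, not merely \emph{some} orientation is shown to exist---though for the purposes of this paper it suffices that the predicates pin down one orientation per model.

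The main obstacle I expect is bookkeeping rather than conceptual: one must be careful that the "orientation" is defined per graph in a way that MSOL can reference, since MSOL cannot quantify over functions directly---hence the encoding of the orientation as a bounded tuple of edge sets and the verification that the consistency conditions (each edge gets exactly one head among its two endpoints; the in-degree bound holds) are first-order over those sets. A secondary subtlety is that the lemma as stated asserts \emph{any} orientation obtained this way is MSOL-definable, so I would phrase the conclusion as: there is an MSOL formula with free set-variables $E_1,\dots,E_d$ whose satisfying assignments are exactly the bounded-in-degree orientations, and relative to such an assignment $\head$ and $\tail$ are quantifier-free; combined with the treewidth-$k$ hypothesis guaranteeing a satisfying assignment exists, this gives the result. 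I would then simply cite \cite{Cou95,Kal00} for the case that one wants the \emph{specific} orientation coming from a tree decomposition, since that is the form used later in Section~\ref{secOrdHG}.
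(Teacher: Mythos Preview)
Your approach is substantially more complicated than the paper's and, as written, has a genuine gap. The paper's argument is a one-liner once you notice the right fact: a graph of treewidth $k$ is properly $(k{+}1)$-colourable. Fixing such a colouring $X_0,\ldots,X_k$, every edge $e=\{v,w\}$ already carries a canonical direction (from the endpoint of smaller colour to the larger). An \emph{arbitrary} orientation then differs from this canonical one on some edge set $F\subseteq E$, so one simply existentially quantifies over the colour classes and the flip set $F$ and defines $\head(e,v)$ by ``$\col(v)<\col(w)$ xor $e\in F$''. That is the whole proof; no tree decomposition, no in-degree bound, and only a single edge-set variable beyond the colouring.

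Your route has two problems. First, it is circular within this paper: you appeal to an MSOL-definable tree decomposition (``the content of \cite{Cou95} and Lemma~4.8 of \cite{Kal00}''), but constructing MSOL-definable tree decompositions is precisely the goal of Sections~\ref{secTDHG} and~\ref{secGen}, and Lemma~\ref{lemTWKOrd} is a \emph{prerequisite} for those constructions (it is invoked already in Lemma~\ref{lemHalinEdgOri}). Second, your fallback via bounded in-degree orientations does not prove the lemma as stated: it only encodes orientations with in-degree $\le d$, whereas the lemma asserts that \emph{any} orientation is definable (e.g.\ a star directed toward its centre has unbounded in-degree). Moreover, your proposed $\head$ predicate is circular---you define $\head(e,v)$ using the phrase ``directed into it'', which presupposes the orientation you are trying to encode. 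The colouring trick avoids all of this: it supplies a globally consistent tiebreaker between the two endpoints of every edge with no reference to the orientation itself.
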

\begin{proof}
	Since $G$ has treewidth $k$, we know that it admits a $k+1$-coloring on its
	vertices.
	We assume we are given such a coloring and denote the color set by
	$\{0,1,\ldots,k\}$. Now let $F$ be a set of edges of $G$ and $e = \{v, w\}$
	an edge in the graph. We know that $\col(v) \neq \col(w)$ and thus we either
	have $\col(v) < \col(w)$ or $\col(v) > \col(w)$. We let the edge $e$ be
	directed from $v$ to $w$, if
	\begin{enumerate}[label={(\roman*)}]
	  \item $\col(v) < \col(w)$ and $e \in F$, or
	  \item $\col(v) > \col(w)$ and $e \notin F$
	\end{enumerate}
	and otherwise from $w$ to $v$. Thus we can choose any orientation of the edge
	set of $G$ by choosing the corresponding set $F$. Assuming that $\msolOri$
	uses predicates $\head(e, v)$ and $\tail(e, v)$ as shown in Appendix
	\ref{appSecMSOLHalinOrd}, we can define our sentence as
	\begin{equation*}
		\exists X_0 \cdots \exists X_k (\exists F \subseteq E)(\kCol{k+1}(V,
		X_0,\ldots,X_k) \wedge \msolOri).
	\end{equation*}
\qed \end{proof}
\begin{lemma}\label{lemHalinEdgOri}
	Let $G = (V, E)$ be a Halin graph. The orientation on the edge set of $G$ such
	that its spanning tree forms a rooted directed tree and the outer cycle is a directed
	cycle, is MSOL-definable.
\end{lemma}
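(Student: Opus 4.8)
The plan is to reduce the desired orientation to an application of Lemma~\ref{lemTWKOrd}. Since Halin graphs have treewidth~$3$ (as noted in the introduction), Lemma~\ref{lemTWKOrd} tells us that \emph{any} orientation of the edges of $G$ that is itself MSOL-definable (i.e.\ the set of edges directed ``the wrong way'' relative to a fixed $4$-colouring can be picked out by an MSOL-formula) is realizable. So the real content is to describe the target orientation purely in terms of MSOL-definable properties of $G$ together with a guessed edge set $F$ (or equivalently, an MSOL formula with a free set variable for the orientation), and then verify that this specification pins down exactly the orientations we want.

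First I would isolate the two edge classes of a Halin graph within MSOL: the \emph{tree edges} $T$ and the \emph{cycle edges} $C = E \setminus T$. The cycle is characterized as the unique cycle all of whose vertices have degree exactly (something) — more robustly, the leaves of the tree are exactly the vertices of degree $3$ that lie on the cycle, but cleanest is: $C$ is a set of edges inducing a simple cycle that passes through every vertex of degree... — actually the safe route is to note that ``$G$ minus the cycle edges is a tree'' is expressible, and the cycle is the unique induced cycle through all degree-$\ge 3$... I would instead quantify existentially over a set $C \subseteq E$, assert that $C$ forms a single simple cycle (connected, every vertex incident to $C$ has exactly two $C$-edges), that $E \setminus C$ is acyclic and connected (a spanning tree), and that every vertex \emph{not} incident to a $C$-edge has degree $\neq 2$ in $E\setminus C$ and every vertex incident to $C$ is a leaf of $E \setminus C$; standard arguments show this forces $C$ to be the Halin cycle and $E\setminus C$ the Halin tree. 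All of these are classic MSOL-definable graph/subgraph properties (connectivity, acyclicity, degree conditions), so I would cite the appendix for the precise formulas.

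Next, given the partition $(T, C)$, I would specify the orientation. Pick a vertex $r$ on the cycle as root (``there exists a vertex $r$ incident to a $C$-edge''). Orient $T$ away from $r$: an edge $e=\{v,w\} \in T$ is directed $v \to w$ iff $v$ lies on the tree path from $r$ to $w$; equivalently, the component of $T \setminus e$ containing $r$ contains $v$. This ``is on the path from $r$'' / ``is in the $r$-side component after deleting $e$'' relation is MSOL-definable. Orient $C$ as a directed cycle: this just says each vertex incident to $C$ has exactly one incoming and one outgoing $C$-edge (an Eulerian-type local condition on the cycle), and connectivity of $C$ then forces a single consistent cyclic direction; there are exactly two such orientations, and either is fine. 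Combining, I would write the final sentence as an existential quantification over $C$, $r$, and the orientation-witnessing set $F$ from Lemma~\ref{lemTWKOrd}, asserting $\kCol{4}$-colourability, the Halin-structure predicate for $C$, the predicate that $\head,\tail$ (derived from $F$ and the colouring as in Lemma~\ref{lemTWKOrd}) restrict to a root-directed tree on $T$ and a directed cycle on $C$.

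The main obstacle I expect is not the high-level structure but getting the ``$v$ is on the tree path from $r$ to $w$'' / ``$r$-side component of $T \setminus e$'' predicate right within MSOL, and confirming that the conjunction of all the local conditions (degrees, one-in-one-out on the cycle, away-from-$r$ on the tree) really does characterize \emph{exactly} the intended orientations rather than admitting spurious ones — in particular checking that a locally-consistent orientation of a connected cycle is globally a directed cycle, and that ``each tree edge points away from $r$'' is equivalent to the component condition. Both are routine but need care. A secondary point is making sure the uniqueness of the Halin cycle $C$ is genuinely forced by the MSOL conditions I impose (so that the existential quantifier over $C$ does not accidentally allow a different edge partition); here the degree-$\neq 2$ condition on internal tree vertices and the ``every cycle vertex is a tree leaf'' condition are what rule out alternatives, and I would spell that combinatorial argument out in a short paragraph before invoking MSOL-definability of the individual pieces.
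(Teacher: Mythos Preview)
Your proposal is correct and follows essentially the same approach as the paper: invoke Lemma~\ref{lemTWKOrd} using treewidth~$3$, existentially quantify over a partition of $E$ into tree edges $E_T$ and cycle edges $E_C$, and assert via MSOL predicates that the resulting orientation makes $E_T$ a rooted directed tree and $E_C$ a directed cycle. Your worry about uniqueness of the partition is unnecessary --- since the partition is existentially quantified, it suffices that \emph{some} valid partition exists (the canonical Halin one does), so the paper simply writes $\exists E_T\,\exists E_C\,(\Part_E(E,E_T,E_C)\wedge\dir{\Tree}(V,E_T)\wedge\dir{\Cycle}(\IncV(E_C),E_C))$ without any uniqueness argument.
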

\begin{proof}
	Since Halin graphs have treewidth 3, we can use Lemma \ref{lemTWKOrd}.
	Let $E_T$ denote the edges in the spanning tree and $E_C$ the edges on the
	outer cycle. The orientation stated above can be defined in MSOL as
	\begin{equation*}
		\msolOri = \exists E_T \exists E_C (\Part_E(E, E_T, E_C) \wedge \dir{\Tree}(V,
		E_T) \wedge \dir{\Cycle}(\IncV(E_C), E_C)).
	\end{equation*}
	The MSOL-predicates given in Appendix \ref{appSecMSOLHalinOrd} complete the
	proof.
\qed \end{proof}
Next, we define an ordering on all edges with the same head vertex in a Halin
graph, which we can define in monadic second order logic using the orientation
of the edges given above and its fundamental cycles. This is a central step in
our proof, as it allows us to avoid using the counting predicate in the
construction of our tree decomposition. The main idea in the proof of Lemma
\ref{lemHalinNBOrd} is that we can order the child edges of a vertex in the
order in which their leaf descendants appear on the outer cycle.
\begin{lemma}\label{lemHalinNBOrd}
	For any vertex in a Halin graph there exists an ordering $\oriNB$ on
	its child edges that is MSOL-definable.
\end{lemma}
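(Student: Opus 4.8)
The plan is to order the child edges of a vertex $v$ by the position of their leaf descendants along the directed outer cycle, cut open at the root. First I would invoke Lemma~\ref{lemHalinEdgOri} to fix the orientation in which the spanning tree $T$ of $G$ is a rooted directed tree whose root $r$ lies on the outer cycle, and in which the outer cycle is a directed cycle; I may assume access to the orientation predicates $\head$ and $\tail$ exactly as in Lemma~\ref{lemTWKOrd}. Here $r$ is MSOL-identifiable as the unique vertex with no incoming tree edge, and the non-tree edges of $G$ with respect to $T$ are precisely the outer-cycle edges. For a child edge $e$ of $v$, i.e.\ a tree edge directed $v\to c$, I write $\mathrm{below}(e)$ for the set of leaves of $T$ reachable from $c$ by a directed path in $T$; this is MSOL-definable, and, linking up with Definition~\ref{defFundCyc}, a non-tree edge $f=\{a,b\}$ carries $e$ on its fundamental cycle exactly when precisely one of $a,b$ lies in $\mathrm{below}(e)$.

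The combinatorial core is a planarity fact about Halin graphs: the cyclic order of the leaves along the outer cycle coincides with their cyclic order in the plane embedding of $T$, so for every vertex $v$ and every child edge $e$ of $v$ the set $\mathrm{below}(e)$ is a \emph{contiguous} (and nonempty) arc of the outer cycle, the arcs belonging to distinct child edges of $v$ are pairwise disjoint, and none of them contains $r$ (since $r$ is the root and hence below no edge). Deleting $r$ turns the directed outer cycle into a directed path that linearly orders the remaining leaves, and therefore linearly orders the family of arcs $\{\mathrm{below}(e): e \text{ a child edge of } v\}$. I would then define $\oriNB(v,e,e')$ to hold exactly when $e\neq e'$ are both child edges of $v$ and there exist leaves $a\in\mathrm{below}(e)$ and $b\in\mathrm{below}(e')$ such that $b$ is reachable from $a$ along the directed outer cycle without passing through $r$; by the preceding observations this holds iff the arc of $e$ precedes that of $e'$, so $\oriNB(v,\cdot,\cdot)$ is a strict total order on the child edges of $v$. (For the degenerate vertices nothing is required: $r$ has a single child edge and every other leaf has none.) All ingredients here --- ``$v\in V$'', ``$\ell$ is a leaf of $T$'', ``$e$ is a child edge of $v$'', directed reachability within $T$, and directed reachability within the outer cycle while avoiding a fixed vertex --- are standard MSOL-definable notions, and together with the explicit predicates collected in Appendix~\ref{appSecMSOLHalinOrd} this exhibits $\oriNB$ as an MSOL formula. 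If an immediate-successor version is wanted (mirroring $\oriNBA$ for tree decompositions), one uses
\[
	\oriNB(v,e,e') \wedge \neg\exists e''\bigl(\oriNB(v,e,e'') \wedge \oriNB(v,e'',e')\bigr).
\]

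I expect the main obstacle to be not the logic but the correctness of the definition: $\oriNB$ is a well-formed formula for any graph, yet that the relation it defines is genuinely a linear order rests entirely on the contiguity of the arcs $\mathrm{below}(e)$, a consequence of the planar structure of Halin graphs that MSOL cannot detect on its own. So the real work lies in (i) proving contiguity cleanly and (ii) checking that cutting the directed cycle at $r$ --- rather than at an arbitrary leaf --- is exactly what makes the induced order on the arcs well defined, because $r$, being the root, avoids every arc. A more fundamental-cycle-flavoured alternative would be to first characterise \emph{consecutive} child edges of $v$ as those pairs $e,e'$ for which some non-tree edge $f$ has $v$, $e$ and $e'$ all lying on the fundamental cycle of $f$, and then orient and close up the resulting path of siblings; but promoting such an ``adjacent siblings'' relation to a total order again needs the cycle-order argument above, so the direct definition seems the cleaner route.
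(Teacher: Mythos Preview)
Your proposal is correct and follows essentially the same approach as the paper: both linearise the child edges of $v$ by the position of their leaf descendants along the directed outer cycle, cut open at the root $r$ (which is taken to lie on the cycle). The only difference is in the MSOL encoding --- the paper phrases the comparison via fundamental cycles (each tree edge lies on exactly two fundamental cycles, singling out two boundary cycle edges, and one then compares the directed $E_C$-paths from $r$ to their tails by subset inclusion), whereas you phrase it directly via descendant leaves and directed reachability in $E_C$ avoiding $r$; your ``fundamental-cycle-flavoured alternative'' at the end is precisely the paper's formulation.
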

\begin{proof}
	\begin{figure}[t]
		\centering
		\includegraphics[width=.45\textwidth]{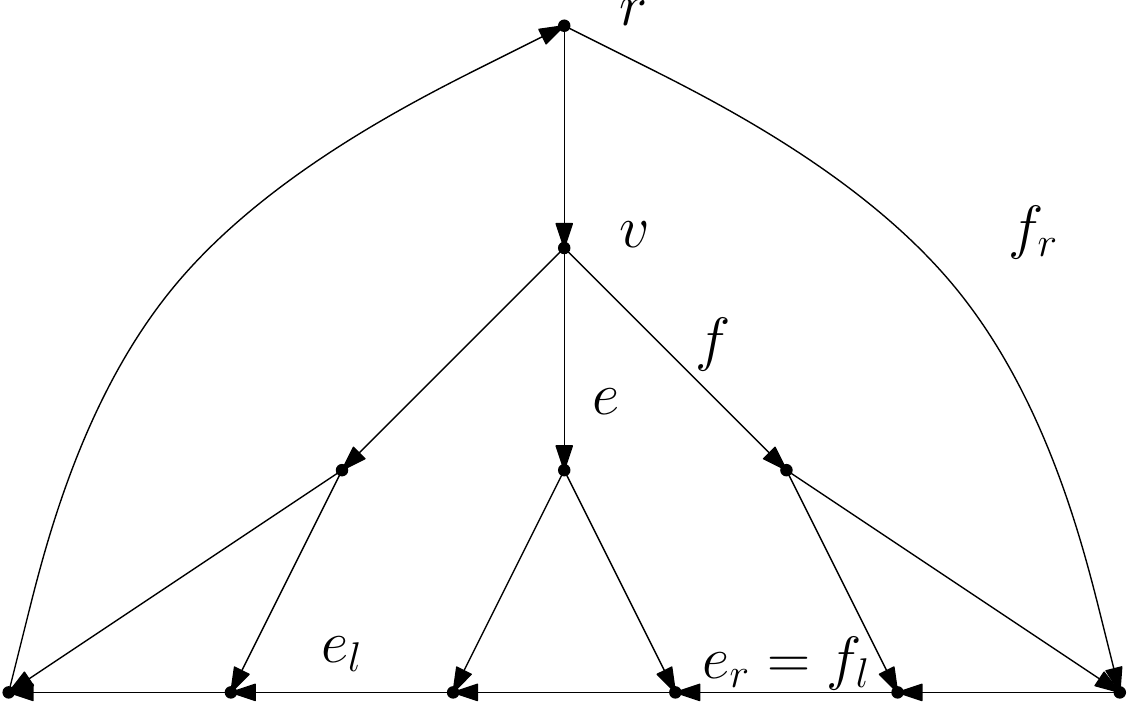}
		\caption{Example of a Halin graph with edge orientation.}
		\label{figHalinCHEdgOrd}
	\end{figure}
	Let $G = (V, E)$ be a Halin graph with an orientation on its edges as shown in
	Lemma \ref{lemHalinEdgOri}, $E_T$ its edges of the spanning tree, $E_C$ the edges of
	the outer cycle and $r$ the root of the tree $E_T$. Now, consider an
	inner vertex $v \in V$ (a non-leaf vertex w.r.t.\ the tree) and two
	child edges $e$ and $f$ of $v$ (with $e \neq f$). Every edge of a Halin graph
	is contained in exactly two fundamental cycles. Assume we have an ordering on
	the child edges of $v$ and $f$ is the right neighbor of $e$. We denote the
	edges in $E_C$, whose fundamental cycles contain $e$ and
	$f$ by $e_\ell$, $e_r$, $f_\ell$ and $f_r$, such that $e_\ell$ and $f_\ell$ ($e_r$
	and $f_r$) are contained in the left (right) fundamental cycles of $e$ and $f$,
	respectively. (See Figure \ref{figHalinCHEdgOrd} for an example.) \par
	Now consider directed paths in $E_C$ from $r$ to the tail vertices of the above
	mentioned edges. If $f$ is on the right-hand side of $e$, then the path from
	$r$ to the tail of $f_r$ is always the shortest of the four.
	The MSOL-predicates given in Appendix \ref{appSecMSOLHalinChOrd} define such
	an ordering $\oriNB(e, f)$.
\qed \end{proof}

\subsection{MSOL-Definable Tree Decompositions}\label{secTDHG}
In this section we will describe how to construct a width-3 tree decomposition
of a Halin graph that is definable in monadic second order logic. \par
First we introduce the notion of \emph{left} and \emph{right boundary} vertices
of a Halin graph with an edge orientation and ordering as described in the
previous section.
\begin{definition}[Left and Right Boundary Vertex]\label{defBoundVertHalin}
	Given a vertex $v \in V$ of a Halin graph $G$, a vertex is called its
	\emph{left boundary vertex}, denoted by $bd_l(v)$ if there exists a (possibly
	empty) path $E_P$ from $v$ to $bd_l(v)$ in $E_T$, such that the tail vertex of
	each edge in $E_P$ is the leftmost child of its parent. Similarly, we define a
	\emph{right boundary vertex} $bd_r(v)$. The \emph{boundary} of a vertex $v$ is
	the set containing both its left and right boundary vertex, denoted as $bd(v)$.
\end{definition}
Note that for all cycle vertices $v \in V_C$, we have $v = bd_l(v) = bd_r(v)$.
We now state the main result of this section.
\begin{figure}[t]
		\centering
		\subfloat[Structural overview of a Halin graph.]{
			\includegraphics[width=.6\textwidth]{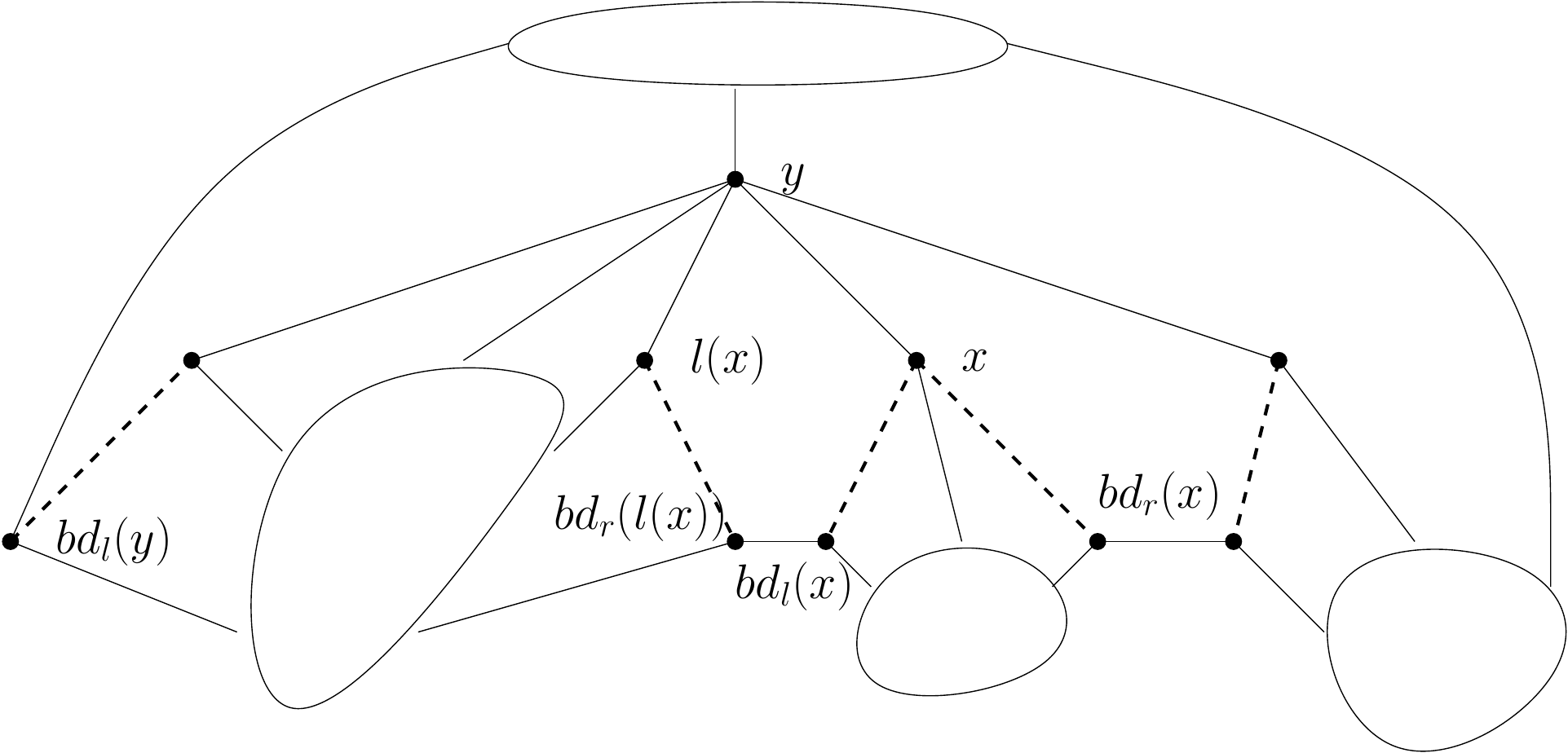}
			\label{figHalinTD1}}
 		\qquad
		\subfloat[The component created for each edge.]{
			\includegraphics[width=.235\textwidth]{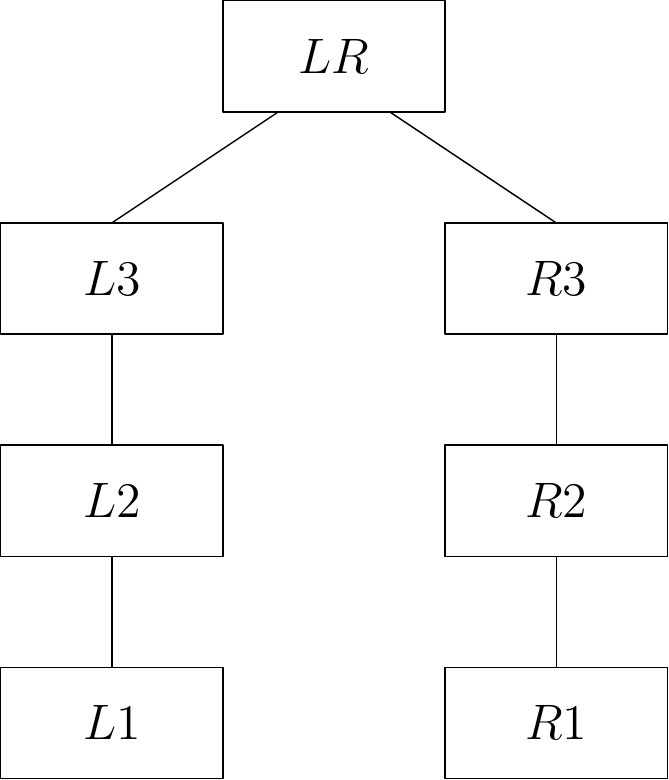}
			\label{figHalinTD2}}
		\caption{Constructing a component of a tree decomposition for an edge
 			of a Halin graph.}
 		\label{figHalinTD}
	\end{figure}
\begin{lemma}\label{lemHalinTD}
	Halin graphs admit width-3 MSOL-definable tree decompositions.
\end{lemma}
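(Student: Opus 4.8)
The plan is to construct the tree decomposition explicitly by walking along the oriented, ordered structure of the Halin graph set up in Section~\ref{secOrdHG}, associating one small ``component'' of the decomposition to each edge of $G$, and then to verify that the resulting decomposition satisfies the definition of an MSOL-definable tree decomposition. Concretely, fix the edge orientation of Lemma~\ref{lemHalinEdgOri} (tree $E_T$ a rooted directed tree with root $r$ on the outer cycle, outer cycle $E_C$ a directed cycle) and the child-edge ordering $\oriNB$ of Lemma~\ref{lemHalinNBOrd}. For each edge $e=\{v,w\}$ with $v$ the parent, I would let the associated bag(s) contain $v$, $w$, and the two boundary vertices $bd_l(w),bd_r(w)$ from Definition~\ref{defBoundVertHalin}; since a bag has at most $4$ vertices this gives width at most $3$. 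The intuition (Figure~\ref{figHalinTD}) is that the subtree rooted at $w$ together with the arc of the outer cycle spanned by its leaf descendants is ``summarised'' by the $4$-vertex separator $\{v,w,bd_l(w),bd_r(w)\}$, and the components are wired together following the tree structure, with the ordering $\oriNB$ dictating how a parent's child-components are chained (so that consecutive children share their adjacent boundary vertex, which is exactly a cycle vertex by the remark after Definition~\ref{defBoundVertHalin}).

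The steps, in order, would be: (1) define precisely, for each edge $e$, its component --- how many nodes it has, which vertices sit in each bag, and which node is its ``top'' and which its ``bottom''; handle the outer-cycle edges, the tree edges incident to leaves, and internal tree edges as the separate cases suggested by Figure~\ref{figHalinTD2}. (2) Specify the parent--child wiring: the bottom node of the component of a tree edge $e=\{v,w\}$ attaches to the top nodes of the components of the child edges of $w$, threaded left-to-right according to $\oriNB$; the outer-cycle edges' components are hooked in where the corresponding leaves attach. (3) Verify the three tree-decomposition axioms: every vertex and every edge appears in some bag (edges of $E_T$ trivially, edges of $E_C$ via the boundary-vertex bags), and the occurrences of each vertex are connected --- this last point is where the boundary-vertex machinery does its work, since $v$ occurs in its own component and in the top of each child component, and a cycle vertex $u$ occurs contiguously along the chain of components meeting it. (4) Define the predicates: $\Bag^V$/$\Bag^E$ types indexed by edge and by ``which node of the component'', $\Parent$, and $\oriNBA$, each expressed in MSOL using the already-MSOL-definable orientation, $\oriNB$, fundamental cycles, boundary vertices, and standard reachability/path predicates in $E_T$ and $E_C$; then cite the concrete formulas in the appendix.

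I expect the main obstacle to be step~(3) together with the bookkeeping in step~(4): getting the components and their wiring defined so that the connectivity axiom genuinely holds is delicate, because a boundary vertex $bd_l(w)$ may be ``inherited'' through several levels of leftmost-child edges, and one must ensure it occurs in a connected set of bags all the way up that path and nowhere else in a way that breaks contiguity. The degree-two-free condition on the Halin tree and the fact that every edge lies on exactly two fundamental cycles (used in Lemma~\ref{lemHalinNBOrd}) are the structural facts that make a clean, bounded-size component design possible, so I would lean on them to keep each component to a constant number of $4$-vertex bags. Once the combinatorial design is fixed, translating ``$X$ is the bag of type $\tau$ for edge $e$'' and the $\Parent$/$\oriNBA$ relations into MSOL is routine given Lemmas~\ref{lemTWKOrd}, \ref{lemHalinEdgOri} and~\ref{lemHalinNBOrd}, so I would state those predicates, defer the formulas to Appendix material analogous to~\ref{appSecMSOLHalinOrd}, and conclude.
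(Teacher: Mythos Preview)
Your plan is essentially the paper's own approach: build one constant-size component of the decomposition per tree edge using the oriented, $\oriNB$-ordered structure of Section~\ref{secOrdHG}, with bags built from the endpoints together with the left/right boundary vertices of Definition~\ref{defBoundVertHalin}, chain sibling components left-to-right, and then read off the $\Bag$- and $\Parent$-predicates. The only noteworthy design difference is that the paper does \emph{not} create separate components for the outer-cycle edges; instead each tree-edge component is given two short branches (types $R1$--$R3$ and $L1$--$L3$, joined by an $LR$ bag), and the $L$-branch is precisely where the cycle edge $\{bd_r(l(x)),bd_l(x)\}$ between consecutive sibling subtrees is covered. This choice is what makes the sibling-chaining and the connectivity axiom (your anticipated obstacle in step~(3)) come out cleanly: the $LR$ bag of one edge, after dropping $bd_r(x)$, \emph{is} the $L1$ bag of its right sibling (or the $R1$ bag of the parent edge if $x$ is rightmost), so no extra glue is needed. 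Your variant with separate cycle-edge components can be made to work too, but you would have to be careful that hooking them in at the leaves does not disconnect the occurrences of a boundary vertex that is simultaneously inherited upward along a leftmost path; the paper's $L$-branch sidesteps this by keeping $bd_l(y)$ present throughout the sibling chain.
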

\begin{proof}
	Let $G = (V, E)$ be a Halin graph and suppose we have an orientation and
	ordering on its edges as described in Section \ref{secOrdHG}. That is, we have
	a partition $(E_C, E_T)$ of $E$ such that $E_C$ forms the (directed) outer
	cycle and $E_T$ the (directed) tree of $G$ and there is an ordering on edges
	with the same head vertex in $E_T$. \par
	For each edge $e \in E_T$ we construct a component in the tree decomposition
	that covers the edge itself and one edge on the outer cycle. A component for
	an edge $e = \{x, y\}$, where $y$ is the parent of $x$ in $E_T$ covers the
	edges $\{x, y\}$ and the edge $\{bd_r(l(x)), bd_l(x)\}$ on $E_C$, whose
	fundamental cycle both contains $\{x, y\}$ and $\{l(x), y\}$ (see Figure
	\ref{figHalinTD1} for an illustration).
	For the former we create a branch of bags of types $R1, R2$ and $R3$ and for
	the latter bags of types $L1, L2$ and $L3$, joined by a bag of type $LR$,
	containing the following vertices. \\
	\textbf{R1}. This bag contains the vertex $x$ and its boundary vertices
	$bd(x)$. \\
	\textbf{R2}. This bag contains the vertices $x$ and $y$ and the vertices
	$bd(x)$. \\
	\textbf{R3}. This bag forgets the vertex $x$ and thus contains $y$ and $bd(x)$.
	\\
	\textbf{L1}. This bag contains the vertices $y, bd_l(y)$ and $bd_r(l(x))$. \\
	\textbf{L2}. This bag introduces the vertex $bd_l(x)$ to all vertices in the
	bag $L1$. \\
	\textbf{L3}. This bag forgets the vertex $bd_r(l(x))$ and thus contains $y,
	bd_l(y)$ and $bd_l(x)$. \\
	\textbf{LR}. This bag contains the union of $L3$ and $R3$, and hence contains
	the vertices $y, bd_l(y)$ and $bd(x)$. \\
	Figure \ref{figHalinTD2} illustrates the structure of the component described
	above.
	\par
	To continue the construction, we note that removing $bd_r(x)$ from the bag of
	type $LR$ results in a bag of type $L1$ for the right neighbor edge, if such an
	edge exists.
	If $x$ is the rightmost child of $y$, then removing $bd_r(x)$ results in a bag
	of type $R1$ for the edge between $y$ and its parent in $E_T$.
	This way we can glue together components of edges using the orientation and
	ordering of the edge set of the graph. Note that if $x$ is the leftmost child
	of $y$, then it is sufficient to only create bags of types $R1, R2$ and $R3$,
	since we do not have to cover an edge on the outer cycle.
	\par
	Once we reach the root (i.e. $y$ is the root vertex
	of the graph), we only create the bags of type $R1$ and $R2$ and our
	construction is complete.
	\par
	One can verify that this construction yields a tree decomposition of $G$ and
	since the maximum number of vertices in one bag is four, its width is indeed
	three.
	\par
	To show that these tree decompositions are MSOL-definable, we note that
	we can define each bag type in MSOL in a straightforward way, once we defined
	a predicate for boundary vertices.
	The predicate $\Parent(X_p, X_c)$ requires that there are no two bags in the tree
	decomposition that contain the same vertex set and so we contract all edges
	between bags with the same vertex set. \par
	The MSOL-predicates given in Appendix	\ref{appSecMSOLHalinTD}
	complete the proof.
\qed \end{proof}
From the construction given in this proof, we can immediately derive a
consequence that will be useful in the proof of Section \ref{secFIIDHalin}.
\begin{corollary}\label{corHalinTDLS1}
	Halin graphs admit binary width-3 MSOL-definable tree decompositions such that
	all their leaf bags have size one.
\end{corollary}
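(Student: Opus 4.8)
The plan is to show that the tree decomposition already produced by the construction in the proof of Lemma~\ref{lemHalinTD} satisfies both extra requirements without any modification, so that its MSOL-definability is inherited verbatim and essentially nothing new has to be built.

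First I would verify that the decomposition is binary. Inspecting the component built for an edge $e = \{x,y\}$, the only bag type that acquires a second child is $LR$: by construction $X_{LR} = X_{L3} \cup X_{R3}$, so the bags of types $L3$ and $R3$ of the same component are precisely the two children of the $LR$-bag. Every remaining bag type --- $R1$, $R2$, $R3$, $L1$, $L2$, $L3$ and the special root bags of types $R1$ and $R2$ --- is followed in the construction by at most one bag: an $R1$-bag for an edge $\{x,y\}$ has a child exactly when $x$ has children in $E_T$ (namely the $LR$-bag of the component of the rightmost child edge of $x$), and an $L1$-bag always sits above the $R3$- or $LR$-bag of the component of the left-sibling edge. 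Hence every node has at most two children, and the branch nodes are exactly the $LR$-bags.

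Next I would pin down the leaf bags. From the previous paragraph the only bags without a child are the bags of type $R1$ for an edge $\{x,y\}$ such that $x$ has no children in $E_T$. By the definition of a Halin graph the childless vertices of $E_T$ are precisely the vertices on the outer cycle, and for such a vertex $x$ Definition~\ref{defBoundVertHalin} gives $bd_l(x) = bd_r(x) = x$, so the corresponding bag of type $R1$ equals $\{x\}$ and therefore has size one. Thus all leaf bags have size one.

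Finally, since the construction has not been altered, the same $\Bag$-, $\Parent$- and sibling-ordering predicates exhibited in the proof of Lemma~\ref{lemHalinTD} witness MSOL-definability, so the construction is still MSOL-definable. The one point that needs a remark is the edge-contraction performed there to make $\Parent$ well defined (contracting an edge between two bags with equal vertex sets): I would observe that such a contraction merges a node only with its unique parent or with one of its children, so it can neither create a node with three children nor enlarge a leaf bag --- in particular an $R1$-leaf of size one has nothing below it to be merged with. I expect this small amount of bookkeeping around the contraction step to be the only mildly delicate part; everything else is a direct reading-off of the construction of Lemma~\ref{lemHalinTD}.
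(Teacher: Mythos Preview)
Your proposal is correct and follows essentially the same approach as the paper: both arguments simply read off from the construction of Lemma~\ref{lemHalinTD} that only $LR$-bags can branch, that leaf bags are precisely the $R1$-bags for edges whose tail $x$ lies on the outer cycle, and that for such $x$ one has $bd_l(x)=bd_r(x)=x$ so the leaf bag is $\{x\}$. Your version is more explicit (in particular about the contraction step), but there is no substantive difference.
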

\begin{proof}
	It is easy to see by the construction given in the proof of Lemma
	\ref{lemHalinTD} that this tree decomposition is binary. All leaf bags are 
	of type $R1$ and are associated with edges whose tail vertex $x$ is a
	vertex on the outer cycle. Hence, $x = bd_l(x) = bd_r(x)$ and our claim follows.
\qed \end{proof}
We will illustrate the construction of a tree decomposition given in the proof
of Lemma \ref{lemHalinTD} with the following example.
\begin{example}
	\begin{figure}
		\centering
		\subfloat[An example Halin graph.]{
			\includegraphics[width=.32\textwidth]{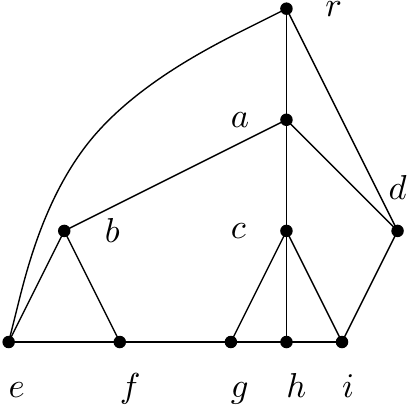}
			\label{figHalinTDExG}}
		\qquad
		\subfloat[The component of the tree decomposition corresponding to the
 			denoted edges.]{
 				\includegraphics[width=.51\textwidth]{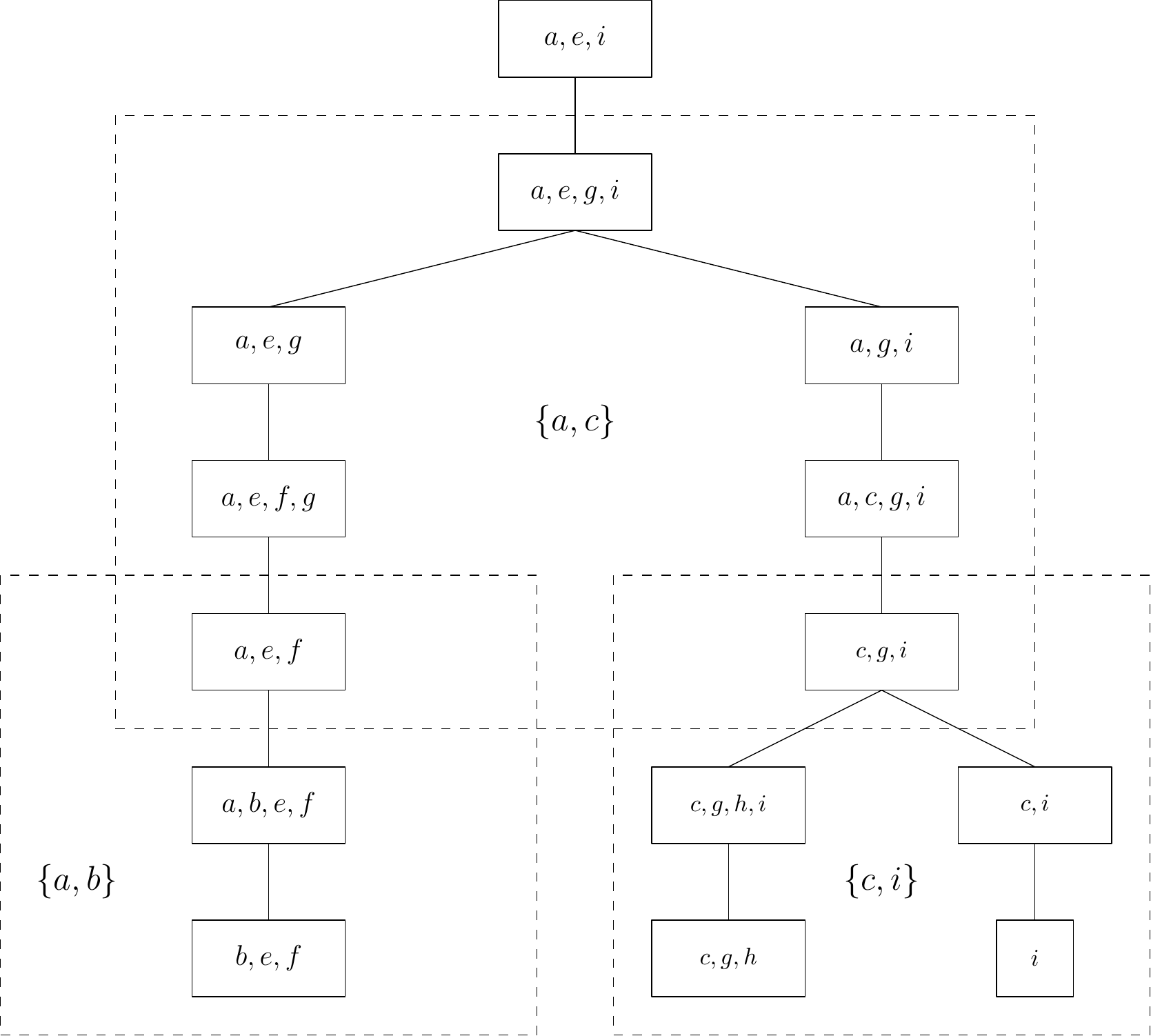}
 				\label{figHalinTDExTD}}
 		\caption{An example subtree of a tree decomposition of a Halin graph.}
 		\label{figHalinTDEx}
	\end{figure}
	Consider the graph depicted in Figure \ref{figHalinTDExG}. We are going to show
	how to create the component of its tree decomposition corresponding to the
	edges $\{a, b\}$, $\{a, c\}$ and $\{c, i\}$.
	\begin{itemize}
	  \item $\{a, b\}$: Since the vertex $b$ does not have a left sibling, we only
	  create bags $R1$, $R2$ and $R3$. Note that $LR = R3$, since $LR = L3 \cup
	  R3$, and we do not have a bag of type $L3$.
	  \item $\{c, i\}$: Since $i$ is a leaf vertex we have that $bd_l(i) = bd_r(i)
	  = i$ and so the right path starts with a bag $\{i\}$. For the same reason we
	  have that the bags $R2$ and $R3$ are equal and we contract the edge. For the
	  left path this has the effect that $L3$ and $LR$ are equal, so the
	  edge between them gets contracted as well.
	  \item $\{a, c\}$: This component can be constructed in a straightforward
	  manner. The bag $L1$ is the parent of the bag $LR$ w.r.t.\ $\{a, b\}$ and
	  $R1$ is the parent of $LR$ w.r.t.\ $\{c, i\}$. Since in both cases the vertex
	  sets are equal, we also contract these edges.
	\end{itemize}
	Figure \ref{figHalinTDExTD} shows the resulting part of the tree decomposition.
\end{example}

\subsection{Finite Index Implies MSOL-Definability}\label{secFIIDHalin}
In this section we complete the proof of our first main result, stated below. We
will also use ideas that we give here first for extending our results to
other graph classes, see Section \ref{secGen}.
\begin{lemma}\label{lemFIID}
	Finite index implies MSOL-definability for Halin graphs.
\end{lemma}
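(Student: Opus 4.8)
The plan is to combine the three ingredients assembled so far: the MSOL-definable width-3 tree decomposition of a Halin graph (Lemma~\ref{lemHalinTD}, together with its binary, size-one-leaf refinement from Corollary~\ref{corHalinTDLS1}), the transition-style results on equivalence classes of (partial) terminal subgraphs from Section~\ref{secConstEQ} (Lemmas~\ref{lemEquivDerJoin}, \ref{lemOplusTGlue}, \ref{lemEQCJoinConstCh}), and the Myhill--Nerode Theorem for graphs of bounded treewidth (Theorem~\ref{thmMyhNerTWK}). Fix a graph property $P$ and assume $\sim_P$ has finite index; let $C_1,\ldots,C_m$ be its equivalence classes on terminal graphs with at most $4$ terminals. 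The idea is to write an MSOL-sentence that ``runs the finite automaton'' along the tree decomposition: it guesses, for every bag $X_t$, which class the terminal subgraph $\termSG{X_t}$ belongs to, verifies that the guess is consistent with the guesses at the children via the transition relation described in Section~\ref{secConstEQ}, and finally checks that the class guessed at the root is an accepting one (in the sense of the remark following Theorem~\ref{thmMyhNerTWK}: $G\in C_P$ implies $G\oplus(X_G,\emptyset,X_G)$ has property~$P$, and at the root the terminals can be forgotten).

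Concretely, first I would fix a Halin graph $G$, invoke Lemma~\ref{lemHalinTD} to obtain the MSOL-definable width-3 tree decomposition, and recall that the relevant bag-to-bag transitions are exactly those analysed in Section~\ref{secConstEQ}: at a leaf bag the class is a constant (the class of a one- or two-vertex terminal graph, which is fixed data about $\sim_P$); at an intermediate (``forget'') node the new class is a function of the child's class by Lemma~\ref{lemOplusTGlue}; and at a branch node — which in our construction is always the $LR$ bag joining an $R$-branch and an $L$-branch — the class is a function of the two children's classes by Lemmas~\ref{lemEquivDerJoin} and~\ref{lemEQCJoinConstCh}. Because $\sim_P$ has finite index, each of these is a finite function $\delta_{\mathrm{leaf}}$, $\delta_{\mathrm{int}}$, $\delta_{\mathrm{branch}}:\{C_1,\ldots,C_m\}^{\le 2}\to\{C_1,\ldots,C_m\}$, i.e.\ a finite table; moreover the tables depend also on the ``local'' edge set added at the node and on which bag type ($R1,R2,R3,L1,L2,L3,LR$) is involved, but all of this is bounded information that can be read off the node in MSOL.

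Next I would encode the run as a guessed partition of the node set: since each node of the tree decomposition is, by the construction, identified with an edge of $G$ together with a type, and each vertex of $G$ lies in a bounded number of bags, I can represent ``bag $X_t$ is in class $C_i$'' by a guessed family of edge-set variables $B_1,\ldots,B_m$ (one per class, refined by type) using the $\Bag^{V}_{\tau}$ and $\Bag^{E}_{\sigma}$ predicates to pin down which bag an edge stands for. The sentence then asserts: (a) the $B_i$ partition the set of all bags; (b) for every leaf bag, its guessed class equals $\delta_{\mathrm{leaf}}$ of its local data; (c) for every pair $\Parent(X_p,X_c)$ where $X_p$ is an intermediate node, the guessed class of $X_p$ is $\delta_{\mathrm{int}}$ applied to the guessed class of $X_c$ and the local data; (d) for every branch node $X_p$ with the two children given by $\Parent$, the guessed class of $X_p$ is $\delta_{\mathrm{branch}}$ applied to the guessed classes of its children; (e) the guessed class of the root bag is one of the accepting classes. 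All of (a)--(e) are quantifications over bounded tuples of bags with checks against fixed finite tables, hence expressible in MSOL over the vocabulary provided by the MSOL-definable tree decomposition. By induction along the tree decomposition, a satisfying assignment exists iff the actual classes satisfy the recurrences, which by Section~\ref{secConstEQ} they do, so the sentence is true iff $\termSG{X_r}\in C_P$ for some accepting $C_P$ iff $G$ has property~$P$; conversely truth of the sentence forces $P(G)$ by soundness of the transition tables. This establishes MSOL-definability of $P$, completing the proof.

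The main obstacle I expect is the bookkeeping in step~(d)/(c): one must be careful that the transition tables of Section~\ref{secConstEQ} apply to $\termSG{X_t}$ built from the \emph{full} subtree below $t$, whereas the construction in Lemma~\ref{lemHalinTD} glues sibling components in a specified left-to-right order, so at a branch node with several $L$-siblings one uses Lemma~\ref{lemEQCJoinConstCh} (bounded number of children, with $\oplus_\rhd$ commutative there) while at the $R$/$L$-chains one iterates Lemma~\ref{lemOplusTGlue}; the binary, bounded-width shape from Corollary~\ref{corHalinTDLS1} keeps the arity of every $\delta$ at most two, which is what makes the finiteness of the tables — and hence the MSOL encoding — go through. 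A secondary subtlety is handling the leaf bags of size one and the ``accepting at the root'' condition: here I would use the fact, recorded after Theorem~\ref{thmMyhNerTWK}, that membership of $\termSG{X_r}$ in an accepting class certifies $P\bigl(\termSG{X_r}\oplus(X_r,\emptyset,X_r)\bigr)$, and note that forgetting the (bounded) root terminals is itself one more application of the intermediate-node transition, so the root check is just ``the guessed class, after one final $\delta_{\mathrm{int}}$-style step that empties the terminal set, lies in the accepting set.''
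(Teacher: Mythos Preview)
Your proposal is correct and follows essentially the same approach as the paper: invoke the MSOL-definable binary width-3 tree decomposition (Lemma~\ref{lemHalinTD} and Corollary~\ref{corHalinTDLS1}), use the transition results of Section~\ref{secConstEQ} to obtain finite tables $f_I$ and $f_J$ on the equivalence classes $C_1,\ldots,C_m$, encode a run of this automaton by existentially quantified edge sets indexed by class and bag type, and check leaf, internal-transition, and root-acceptance conditions --- exactly the paper's $\phi_{Leaf}\wedge\phi_{TSG}\wedge\phi_{Root}$. One small simplification: you do not need the extra ``final $\delta_{\mathrm{int}}$-style step that empties the terminal set'' at the root; as the paper notes after Theorem~\ref{thmMyhNerTWK}, the accepting classes are precisely those $C$ with $G\in C\Rightarrow P(G\oplus(X_G,\emptyset,X_G))$, so it suffices to check that the root bag's guessed class is among these.
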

\begin{proof}
	By Lemma \ref{lemHalinTD} we know that Halin graphs admit MSOL-definable tree
	decompositions of bounded width and thus what is left to show is that we can
	define the equivalence class membership of terminal subgraphs w.r.t.\ its bags
	in monadic second order logic. \par
	We know that the graph property $P$ has finite index, so in the following we
	will denote the equivalence classes of $\sim_P$ by $C_1,\ldots,C_r$. By Lemmas
	\ref{lemEquivDerJoin} and \ref{lemEQCJoinConstCh} we know that we can derive
	the equivalence class of a terminal subgraph w.r.t.\ a node by the equivalence
	class(es) of terminal subgraphs w.r.t.\ its descendant nodes in the tree
	decomposition.
	Hence, we can conclude that the following two functions exist, also taking into account
	that our tree decomposition is binary (Corollary \ref{corHalinTDLS1}).
	\begin{proposition}\label{propEQCIndices}
		There exist two functions $f_I : \bN \times \cP(V) \to \bN$ and $f_J :
		\cP_2(\bN) \times \cP(V) \to \bN$, such that:
		\begin{enumerate}[label={(\roman*)}]
		  \item If $X$ is an intermediate bag in a tree decomposition with child bag
		  $X_c$ and $\termSG{X_c} \in C_i$, then $\termSG{X} \in C_{f_I(i, X)}$.
		  \item If $X$ is a branch bag with child bags $X_1$ and $X_2$, $\termSG{X_1}
		  \in C_i$ and $\termSG{X_2} \in C_j$, then $\termSG{X} \in C_{f(\{i, j\},
		  X)}$.\label{propEQCIndices2}
		\end{enumerate}
	\end{proposition}
	Roughly speaking, these functions can be seen as a representation of the
	transition function of an automaton that we are given in the original
	formulation of the conjecture (cf. Theorem \ref{thmMyhNerTWK}). \par
	Next, we	mimic the proof of B{\"u}chi's famous classic
	result for words over an alphabet \cite{Bue60}, as shown in \cite[Theorem
	3.1]{Tho96}.
	For each equivalence class $i$ we define sets $C_{i, \sigma}^E
	\subseteq E$ for each type $\sigma$ (see the proof of Lemma \ref{lemHalinTD})
	and equivalence class $i$. An edge $e$ is contained in set $C_{i, \sigma}^E$,
	if and only if the terminal subgraph rooted at a bag of type $\sigma$ w.r.t.\
	the edge $e$ is in equivalence class $i$.
	\par
	Our MSOL-sentence consists of three parts. First, we identify the
	equivalence classes corresponding to leaf nodes of the tree decomposition,
	and we will denote this predicate as $\phi_{Leaf}$. This is
	rather trivial, since we know that all leaf bags contain exactly one vertex
	(Corollary \ref{corHalinTDLS1}) and there is one unique equivalence class to
	which they all belong, in the following denoted by $C_{Leaf}$. Note that these
	bags are always of type $R1$.
	\par
	Second, we derive the equivalence class membership for terminal subgraphs
	using Proposition \ref{propEQCIndices}, assuming we already determined the
	equivalence class to which the terminal subgraphs w.r.t.\ its descendants
	belong.
	We denote this predicate by $\phi_{TSG}$.
	\par
	Lastly, we check if the graph corresponding to the terminal subgraph of the
	root bag of the tree decomposition is in an equivalence class satisfying $P$,
	which we denote by $\phi_{Root}$. We know that we can identify these
	equivalence classes by (the discussion given after) Theorem \ref{thmMyhNerTWK}
	and will denote them by $C_{A_1},\ldots,C_{A_p}$.
	\par
	Our MSOL-sentence then combines to:
	\begin{equation}\label{eqMSOLSentence}
		\phi_{Leaf} \wedge \phi_{TSG} \wedge \phi_{Root}
	\end{equation}
	Sentence \ref{eqMSOLSentence} together with the details for the subsentences
	given in Appendix \ref{appSecMSOLEQCMH} complete the proof.
\qed \end{proof}
Combining Lemma \ref{lemFIID} with Theorem \ref{thmMyhNerTWK} and
\cite{Cou90}, we directly obtain the following.
\begin{theorem}\label{thmDefRecHalin}
	MSOL-definability equals recognizability for Halin graphs.
\end{theorem}

\section{Extensions}\label{secGen}
The methods we used in the proofs of Section \ref{secHalin} can be
generalized and applied to a number of other graph classes, some of which we are
going to discuss in this section. The main results are presented in Sections
\ref{secTDMSOL} and \ref{secKOuterPlBD}. In the former we show that
MSOL-definability equals recognizability for any graph class that admits either
a bounded degree or an ordered MSOL-definable tree decomposition and in the
latter we give the proof for bounded degree $k$-outerplanar graphs. Furthermore
we study another subclass of $k$-outerplanar graphs in Section \ref{secKCycTr} and
graphs that can be constructed with bounded size feedback edge and vertex sets
in Section \ref{secFEFVS}. 

\subsection{MSOL-Definable Tree Decompositions}\label{secTDMSOL}
We will now turn to generalizing the proof for Halin graphs to any graph class
that admits MSOL-definable tree decompositions that are either ordered or have
bounded degree.
The proof works analogously as the proof of Lemma \ref{lemFIID}.
This result will give us a useful tool to prove Courcelle's Conjecture for a
number of graph classes, since it will follow immediately from the construction
of MSOL-definable tree decompositions.
\begin{lemma}\label{lemFIIDMTD}
	Finite index implies MSOL-definability for each graph class that admits
	MSOL-definable ordered tree decompositions of bounded width.
\end{lemma}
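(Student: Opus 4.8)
The plan is to mirror the proof of Lemma \ref{lemFIID} almost verbatim, replacing the concrete Halin tree decomposition by an abstract MSOL-definable ordered tree decomposition of width at most some fixed $w$. Let $P$ be a graph property of finite index and let $C_1,\ldots,C_r$ be the equivalence classes of $\sim_P$ (restricted to terminal graphs of boundary size at most $w+1$). Fix a graph $G$ in the class together with its MSOL-definable ordered tree decomposition, given by the predicates $\Bag_{\tau_i}^V$, $\Bag_{\sigma_j}^E$, $\Parent$ and $\oriNBA$. First I would observe that, since the decomposition is ordered, every branch bag can be decomposed into a left-to-right sequence of partial terminal subgraphs, so Lemmas \ref{lemEquivDerJoin}, \ref{lemOplusTGlue} and \ref{lemEQCJoinConstCh} apply unchanged: the equivalence class of $\termSG{X}$ depends only on the equivalence classes of the terminal subgraphs of the children of $X$ (processed left to right) and on the "local data" of $X$ itself — namely which vertices lie in $X$, which of them are terminals shared with the child, and which edges of $G$ are covered at $X$. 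This yields transition functions analogous to Proposition \ref{propEQCIndices}: one for leaf bags (a constant class, or a class determined by the single covered vertex/edge), one for intermediate bags, one for "absorbing a right sibling" at a branch bag, and one for closing off a branch node.

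Next I would encode, exactly as in Lemma \ref{lemFIID}, the run of this "automaton" on the tree decomposition by guessing, for each equivalence class index $i$ and each bag type $\tau$ or $\sigma$, a set $C_{i,\tau}^V \subseteq V$ (resp.\ $C_{i,\sigma}^E \subseteq E$) collecting the vertices/edges whose associated bag of that type has its terminal subgraph in class $C_i$. The MSOL-sentence then asserts: (i) these sets partition, for each type, the set of vertices/edges carrying a bag of that type (so each bag is assigned exactly one class); (ii) $\phi_{Leaf}$ — every leaf bag is assigned the correct class, which is read off via the $\Bag$ predicates and the fact that a leaf bag's terminal subgraph is determined by its local data; (iii) $\phi_{TSG}$ — for every non-leaf bag $Y$ with parent $X$, the class assigned to $X$ is consistent with the classes assigned to $X$'s children according to the transition functions, where "children in left-to-right order" is expressed using $\Parent$ and $\oriNBA$; and (iv) $\phi_{Root}$ — the unique bag with no $\Parent$-predecessor is assigned one of the accepting classes $C_{A_1},\ldots,C_{A_p}$, which by the discussion after Theorem \ref{thmMyhNerTWK} certify that $G = \termSG{X_{root}}\oplus(X_{root},\emptyset,X_{root})$ has property $P$. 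All of these are first-order conditions over the (finitely many) guessed set variables combined with the tree-decomposition predicates and previously-established MSOL-definable auxiliary predicates, hence MSOL-definable; and the whole sentence is satisfiable over $G$ if and only if $G\in P$. Combining with Lemma \ref{lemHalinTD}'s analogue (the decomposition predicates are assumed given) and Courcelle's theorem \cite{Cou90} then gives the full equivalence, though for this lemma only the "finite index $\Rightarrow$ MSOL" direction is needed.

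The one genuinely delicate point — and the step I expect to be the main obstacle — is handling branch bags of \emph{unbounded} degree in the ordered case. There, unlike the bounded-degree variant covered by Lemma \ref{lemEQCJoinConstCh}, I cannot simply fold in all children at once; I must process them one at a time along the sibling order, so I need to track, in addition to the class of each $\termSG{X_{H_i}}$, the class of each \emph{partial} terminal subgraph $\pTermSG{X}{X_{H_i}}$. This forces a second family of guessed set variables (indexed by class and attached to the vertex/edge carrying the bag, recording the partial terminal subgraph accumulated just before the child $X_{H_i}$), together with consistency clauses saying: the partial class before the leftmost child is the "empty" class $(X,\emptyset,X)$'s class; the partial class before the right sibling $X_r$ of $X_\ell$ equals $f_\rhd(\text{partial class before }X_\ell,\ \text{class of }\termSG{X_\ell})$ via the $\oplus_\rhd$-transition of Lemma \ref{lemEquivDerJoin}; and the class of $\termSG{X}$ is obtained from the partial class after the rightmost child. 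Expressing "the rightmost child" and "$X_r$ is the immediate successor of $X_\ell$ among the children of $X$" is exactly what $\oriNBA$ and $\Parent$ were introduced for, so this is tedious but routine once the right set of auxiliary variables is in place; I would relegate the explicit predicates to an appendix in the style of the Halin-graph proof.
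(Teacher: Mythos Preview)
Your proposal is correct and takes essentially the same approach as the paper, including the key step you flag as ``delicate'': tracking the class of each partial terminal subgraph $\pTermSG{X}{Y}$ along the sibling order via an additional family of guessed sets, with transitions governed by Lemma~\ref{lemEquivDerJoin}. The only bookkeeping differences are that the paper normalizes leaf bags to size one (by appending a single-vertex child, selected via a $(k{+}1)$-colouring, of a newly introduced bag type) rather than reading the leaf class off from local data, and it encodes the partial class of $\pTermSG{X}{Y}$ using a \emph{pair} of sets $C_{i,\tau}^{V|P}, C_{i,\tau'}^{V|C}$ attached to both the parent's and the child's carrier rather than a single set attached to the child.
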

\begin{proof}
	It is easy to see that the predicate $\phi_{Root}$
	can be defined in the same way as in the proof of Lemma \ref{lemFIID}, only
	adding a short case analysis, since we do not necessarily know of which type
	the root bag is. Since leaf bags might not necessarily always have size one, we
	apply a small change to the tree decomposition. Assume that its width is $k$
	and that we have a $(k+1)$-coloring on the vertices of the graph, such that
	each vertex in a bag has a different color.
	Then, for each leaf bag of size greater than one, we add one child bag
	containing only the vertex with the lowest numbered color. This bag will be identified by a newly
	introduced type and associated with the same vertex/edge as its parent. We
	modify the $\Bag$- and $\Parent$-predicates accordingly and can define
	$\phi_{Leaf}$ in the same way as in Lemma \ref{lemFIID}, again including a
	case analysis as for the $\phi_{Root}$-predicate.
	\par
	Hence, in the following we only need to show how to define $\phi_{TSG}$
	to prove the claim. Again assume that the equivalence classes of $\sim_P$ are
	denoted by $C_1,\ldots,C_r$.
	We can use the function $f_I$ defined in Proposition \ref{propEQCIndices}
	to describe the relations between the equivalence classes for
	intermediate nodes. We need another function to handle
	partial terminal subgraphs w.r.t.\ a branch node, whose existence is guaranteed
	by Lemma \ref{lemEquivDerJoin}.
	\begin{proposition}
		There exists a function $f_J : \bN \times \bN \to \bN$, such that the
		following holds. If $X$ is a branch bag with child bag $Y$,
		$\pTermSG{X}{Y} \in C_i$ and $\termSG{Y} \in C_j$, then:
		\begin{enumerate}[label={(\roman*)}]
		  \item If $Y$ is the rightmost child of $X$, then $\termSG{X} \in C_{f_J(i,
		  j)}$.
		  \item Otherwise $\pTermSG{X}{r(Y)} \in C_{f_J(i, j)}$, where $\oriNBA(Y,
		  r(Y))$.
		\end{enumerate} 
	\end{proposition}
	\par 
	In the following, let $\tau \in \{\tau_1,\ldots,\tau_t\}$
	and $\sigma \in \{\sigma_1,\ldots,\sigma_s\}$. We define a number of sets, each
	one associated with an equivalence class $i$, containing either vertices or
	edges in the graph (as indicated by their upper indices), $C_{i,
	\tau}^V$ and $C_{i, \sigma}^E$. If a vertex $v$ is contained in the set
	$C_{i, \tau}^V$ this means that the terminal subgraph rooted at the bag for
	vertex $v$ of type $\tau$ is in equivalence class $i$. $C_{i, \sigma}^E$ is the
	edge set analogous to $C_{i, \tau}^V$.
	These sets can be used to define the
	equivalence class membership of terminal subgraphs rooted at intermediate
	nodes. \par
	Now let $X$ be a bag in the tree decomposition with child $Y$,
	such that the node containing $X$ is an intermediate node. We have to
	distinguish four cases when deriving the membership of a vertex/an edge in the
	respective sets, which are:
	\begin{enumerate}[label={\arabic*.}]
	  \item Both $X$ and $Y$ correspond to a vertex. \label{lemFIIDMTDCaseV}
	  \item Both $X$ and $Y$ correspond to an edge. \label{lemFIIDMTDCaseE}
	  \item $X$ corresponds to a vertex and $Y$ to an edge.
	  \label{lemFIIDMTDCaseVE}
	  \item $X$ corresponds to an edge and $Y$ to a vertex.
	  \label{lemFIIDMTDCaseEV}
	\end{enumerate}
	The predicates defining these cases for intermediate nodes are given in
	Appendix \ref{appSecMSOLEQCMGI}.
	\par
	When considering a branch node and the partial terminal subgraphs associated
	with it, we have to analyze at most eight such cases. We first turn to the
	definition of sets representing the equivalence class membership of a partial terminal
	subgraph rooted at a branch bag w.r.t. one of its children.
	Assume that a bag $X$ is of type
	$\tau$ for vertex $v$ and one of its child bags $Y$ is of type $\tau'$ for the
	vertex $v'$. Let $C_{i, \tau}^{V|P}$ and $C_{i, \tau'}^{V|C}$ be sets of
	vertices. We express that the partial terminal subgraph rooted at the bag of
	type $\tau$ for vertex $v$ w.r.t.\ the bag of type $\tau'$ for vertex $v'$ is
	in equivalence class $i$ by having $v \in C_{i, \tau}^{V|P}$ and $v' \in C_{i,
	\tau'}^{V|C}$. We define edge sets $C_{i, \sigma}^{E|P}$ and $C_{i,
	\sigma}^{E|C}$ with the same interpretation.
	The predicates for branch nodes can be found in Appendix
	\ref{appSecMSOLEQCMGJ}, which complete the proof.
\qed \end{proof}
If we are given an MSOL-definable tree decomposition that does not have an
ordering on the children of branch nodes, but instead we know that each branch
node has a constant number of children, we can prove a similar result.
\begin{lemma}\label{lemFIIDMTDBC}
	Finite index implies MSOL-definability for each graph class that admits
	bounded degree MSOL-definable tree decompositions of bounded width.
\end{lemma}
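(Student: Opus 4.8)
The plan is to mirror the proof of Lemma~\ref{lemFIIDMTD} almost verbatim, replacing the use of the ordering predicate $\oriNBA$ by the bounded-degree structure of branch nodes, and using Lemma~\ref{lemEQCJoinConstCh} in place of the combination of Lemmas~\ref{lemEquivDerJoin} and~\ref{lemOplusTGlue}. As before, we first normalize the given MSOL-definable tree decomposition so that every leaf bag has size one: assuming width $k$ and a $(k+1)$-coloring with distinct colors inside each bag, we attach below every leaf bag of size greater than one a new child bag holding only the lowest-colored vertex, give it a fresh type, associate it with the same vertex/edge as its parent, and update the $\Bag$- and $\Parent$-predicates accordingly. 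With this modification $\phi_{Leaf}$ and $\phi_{Root}$ are defined exactly as in Lemma~\ref{lemFIID}, each with a short case analysis over the (now finitely many) possible types of the leaf resp.\ root bag. So the only real work is defining $\phi_{TSG}$.

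For $\phi_{TSG}$ I would again introduce, for each equivalence class $i$ and each vertex-type $\tau$ (resp.\ edge-type $\sigma$), a set $C_{i,\tau}^V \subseteq V$ (resp.\ $C_{i,\sigma}^E \subseteq E$), with the interpretation that $v \in C_{i,\tau}^V$ iff the terminal subgraph rooted at the bag of type $\tau$ for $v$ lies in $C_i$; and I quantify existentially over all these (constantly many) sets. Intermediate nodes are handled exactly as in Lemma~\ref{lemFIIDMTD} via the function $f_I$ of Proposition~\ref{propEQCIndices}, with the same four cases (vertex/vertex, edge/edge, vertex/edge, edge/vertex) for how the bag $X$ and its child $Y$ are associated with vertices or edges. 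For a branch node $X$ with children $Y_1,\dots,Y_c$, where $c$ is bounded by the (constant) maximum degree $d$ of branch nodes, Lemma~\ref{lemEQCJoinConstCh} guarantees a function $f_J^{(c)} : \bN^c \to \bN$ such that if $\termSG{Y_1} \in C_{i_1},\dots,\termSG{Y_c} \in C_{i_c}$ then $\termSG{X} \in C_{f_J^{(c)}(i_1,\dots,i_c)}$ --- and crucially this function is well-defined (independent of any ordering of the children) precisely because $\oplus_\rhd$ acts commutatively on graphs all sharing terminal set $X_G$, which is exactly the setting of that lemma. The branch-node part of $\phi_{TSG}$ then says: for every branch bag $X$ (of each possible type, a constant number of cases) and every way to pick its children (there are at most $d$ of them, identified via $\Parent$), if the children lie in classes $C_{i_1},\dots,C_{i_c}$ then $X$ lies in $C_{f_J^{(c)}(i_1,\dots,i_c)}$. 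Since $d$ is a constant and there are finitely many equivalence classes and types, this is a fixed finite conjunction/disjunction of MSOL-predicates, so $\phi_{TSG}$ is MSOL-definable. The resulting sentence $\phi_{Leaf} \wedge \phi_{TSG} \wedge \phi_{Root}$ defines $P$.

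The main obstacle is the branch-node step: one must be careful that the predicate quantifies correctly over the \emph{unordered} set of up to $d$ children of a branch bag and that the value assigned to $\termSG{X}$ genuinely does not depend on the order in which the $\oplus_\rhd$-operations are performed. This is where the bounded-degree hypothesis is essential --- without it we could not enumerate the children with a fixed-size block of quantifiers, and without the commutativity observation in Lemma~\ref{lemEQCJoinConstCh} the function $f_J^{(c)}$ would not be well-defined on unordered tuples. The remaining details (the explicit case analyses for how $X$ and each child are associated with vertices/edges, and writing out the $\Bag$/$\Parent$/boundary-style predicates) are entirely analogous to those in Appendices~\ref{appSecMSOLEQCMGI} and~\ref{appSecMSOLEQCMGJ} and the proof of Lemma~\ref{lemFIIDMTD}, and I would defer them there.
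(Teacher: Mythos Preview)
Your proposal is correct and follows essentially the same approach as the paper: both reduce to the proof of Lemma~\ref{lemFIIDMTD}, handle leaves and the root identically, keep the sets $C_{i,\tau}^V$ and $C_{i,\sigma}^E$ for intermediate nodes via $f_I$, and replace the ordered-branch argument by an appeal to Lemma~\ref{lemEQCJoinConstCh} to obtain a transition function on the (unordered) multiset of children's classes. The only minor slip is the signature of your $f_J^{(c)}$: as in the paper's Proposition (and already in Proposition~\ref{propEQCIndices}), the function must also take the bag $X$ as an argument, since the class of $\termSG{X}$ depends on the contents of $X$ and not only on the children's classes.
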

\begin{proof}
	Since this proof works almost exactly as the proof of Lemma \ref{lemFIIDMTD},
	we only state the differences. Let $c+1$ denote the maximum degree of a
	(branch) node in the tree decomposition and again we refer to the equivalence classes of
	$\sim_P$ as $C_1,\ldots,C_r$.
	Using Lemma \ref{lemEQCJoinConstCh} we know that the following holds
	(generalizing Proposition \ref{propEQCIndices}\ref{propEQCIndices2}).
	\begin{proposition}
		There exists a function $f_J : \cP_c(\bN) \times \cP(V) \to \bN$,
		such that if $X$ is a branch bag in a tree decomposition with child bags
		$X_1,\ldots,X_k$ (where $2 \le k \le c$), and each terminal subgraph
		$\termSG{X_i}$ is in equivalence class $C_{c_i}$, then the terminal subgraph
		$\termSG{X}$ is in equivalence class $f_J(\{c_1,\ldots,c_k\}, X)$.
	\end{proposition}
	Again, to define our predicate we use vertex sets $C_{i, \tau}^V$ to represent
	equivalence class membership of a terminal subgraph rooted at a vertex bag
	of type $\tau$ and edge sets $C_{i, \sigma}^E$ for edge bags of type
	$\sigma$ (and equivalence class $i$). We show how to define a predicate for
	branch bags in such tree decompositions in Appendix \ref{appSecMSOLEQCTDBD} and
	our claim follows.
\qed \end{proof}
Combining Lemmas \ref{lemFIIDMTD} and \ref{lemFIIDMTDBC} with Theorem
\ref{thmMyhNerTWK} and \cite{Cou90}, we obtain the following.
\begin{theorem}\label{thmDERSTD}
	MSOL-definability equals recognizability for graph classes that admit
	ordered or bounded degree MSOL-definable tree decompositions of width at most
	$k$.
\end{theorem}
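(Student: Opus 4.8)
The plan is to assemble the theorem from results already in hand, treating the two directions of the equivalence separately. Fix $k$ and let $\mathcal{G}$ be a graph class that admits either ordered or bounded degree MSOL-definable tree decompositions of width at most $k$; note that in particular every graph in $\mathcal{G}$ has treewidth at most $k$, so all results stated for the class of graphs of treewidth at most $k$ apply to $\mathcal{G}$.

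For the direction that MSOL-definability implies recognizability, I would simply invoke Courcelle's Theorem \cite{Cou90}: every property expressible by a CMSOL-sentence --- and hence every property expressible by an MSOL-sentence --- is recognizable for graphs of treewidth at most $k$. Since $\mathcal{G}$ is contained in the class of graphs of treewidth at most $k$, this gives one inclusion, and no new argument is needed here.

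For the converse, suppose $P$ is recognizable. By the direction \ref{thmMyhNerTWK1} $\Rightarrow$ \ref{thmMyhNerTWK3} of Theorem \ref{thmMyhNerTWK}, there exists an equivalence relation $\sim_P$ describing $P$ of finite index. Now I would split into the two cases allowed by the hypothesis on $\mathcal{G}$. If $\mathcal{G}$ admits ordered MSOL-definable tree decompositions of width at most $k$, Lemma \ref{lemFIIDMTD} applies and yields an MSOL-sentence defining $P$ on $\mathcal{G}$. If instead $\mathcal{G}$ admits bounded degree MSOL-definable tree decompositions of width at most $k$, the same conclusion follows from Lemma \ref{lemFIIDMTDBC}. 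In either case $P$ is MSOL-definable on $\mathcal{G}$, which completes the second inclusion and hence the proof.

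The real content of the theorem lies entirely in the lemmas cited above, so there is no genuine obstacle left at this stage; the only point that deserves a line of care is the bookkeeping that $\mathcal{G}$ is a subclass of the treewidth-$k$ graphs, so that both Courcelle's Theorem and the Myhill--Nerode characterization of Theorem \ref{thmMyhNerTWK} are available, and that the MSOL-sentence produced by Lemma \ref{lemFIIDMTD} or Lemma \ref{lemFIIDMTDBC} is only ever evaluated on graphs of $\mathcal{G}$, where the assumed MSOL-definable tree decomposition is guaranteed to exist.
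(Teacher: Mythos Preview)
Your proposal is correct and matches the paper's own argument exactly: the paper derives Theorem \ref{thmDERSTD} in one line by ``Combining Lemmas \ref{lemFIIDMTD} and \ref{lemFIIDMTDBC} with Theorem \ref{thmMyhNerTWK} and \cite{Cou90}'', and you have simply spelled out the two directions of that combination in full.
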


\subsection{$k$-Cycle Trees}\label{secKCycTr}
In this section we consider graph class which can be seen as a slight
generalization of Halin graphs.
\begin{definition}[$k$-cycle trees] A graph $G$ is called \emph{cycle tree}, if
it is a planar graph that can be obtained by a planar embedding of a tree with one
	distinguished vertex $c \in V$, called the \emph{central} vertex, such that all
	vertices of distance $d$ from $c$ are connected by a cycle. If each vertex
	(except for $c$) is contained in one cycle, the number of which is $k$, then
	$G$ is called a $k$-cycle tree. We will refer to the cycle of distance $d$ from
	$c$ as the cycle $C_d$.
\end{definition}
\begin{figure}[t]
	\centering
	\subfloat[$G$ without edge orientation]{
		\includegraphics[width=.22\textwidth]{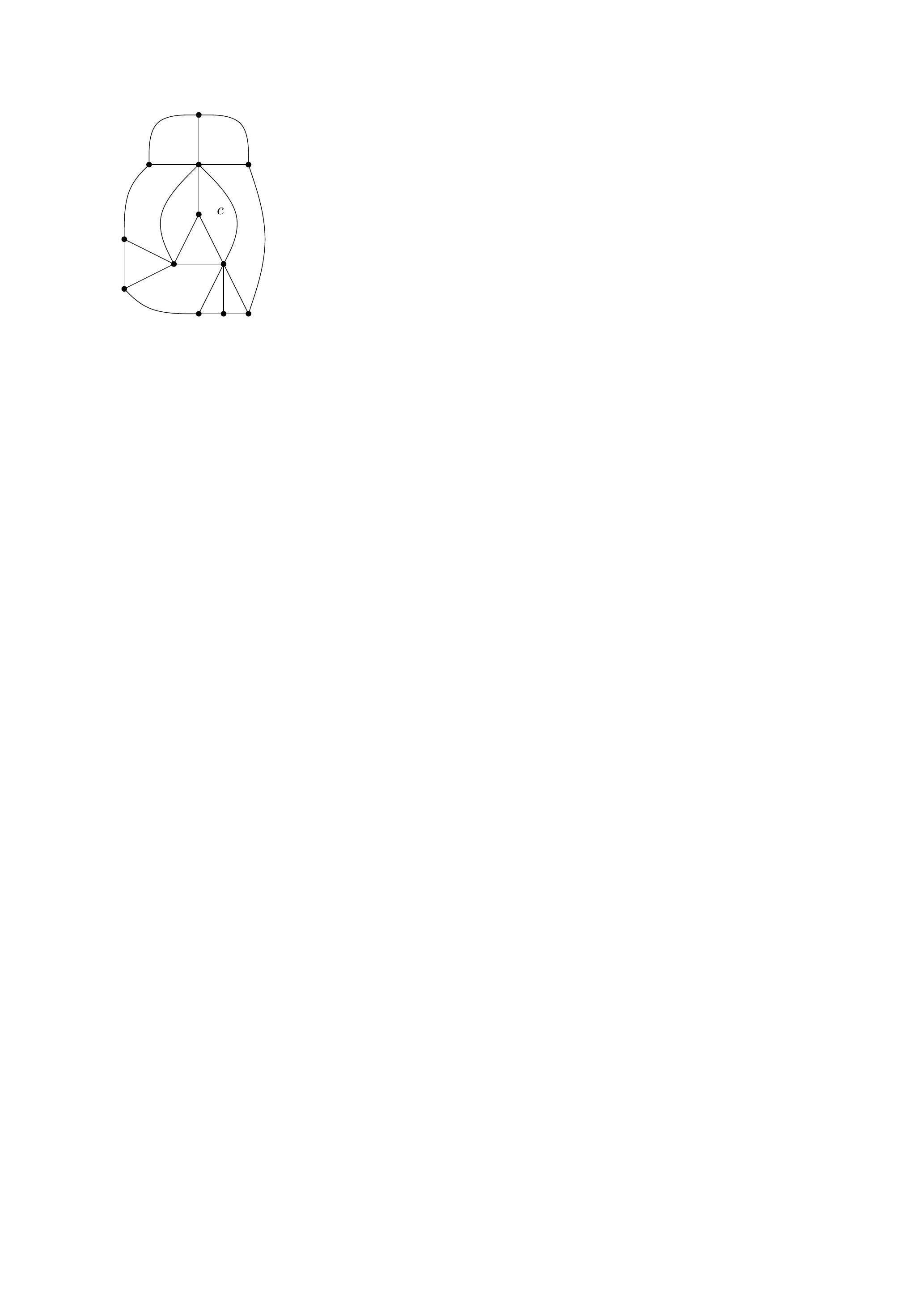}
		\label{figKLadNOri}}
	\qquad
	\subfloat[$G$ with edge orientation]{
		\includegraphics[width=.22\textwidth]{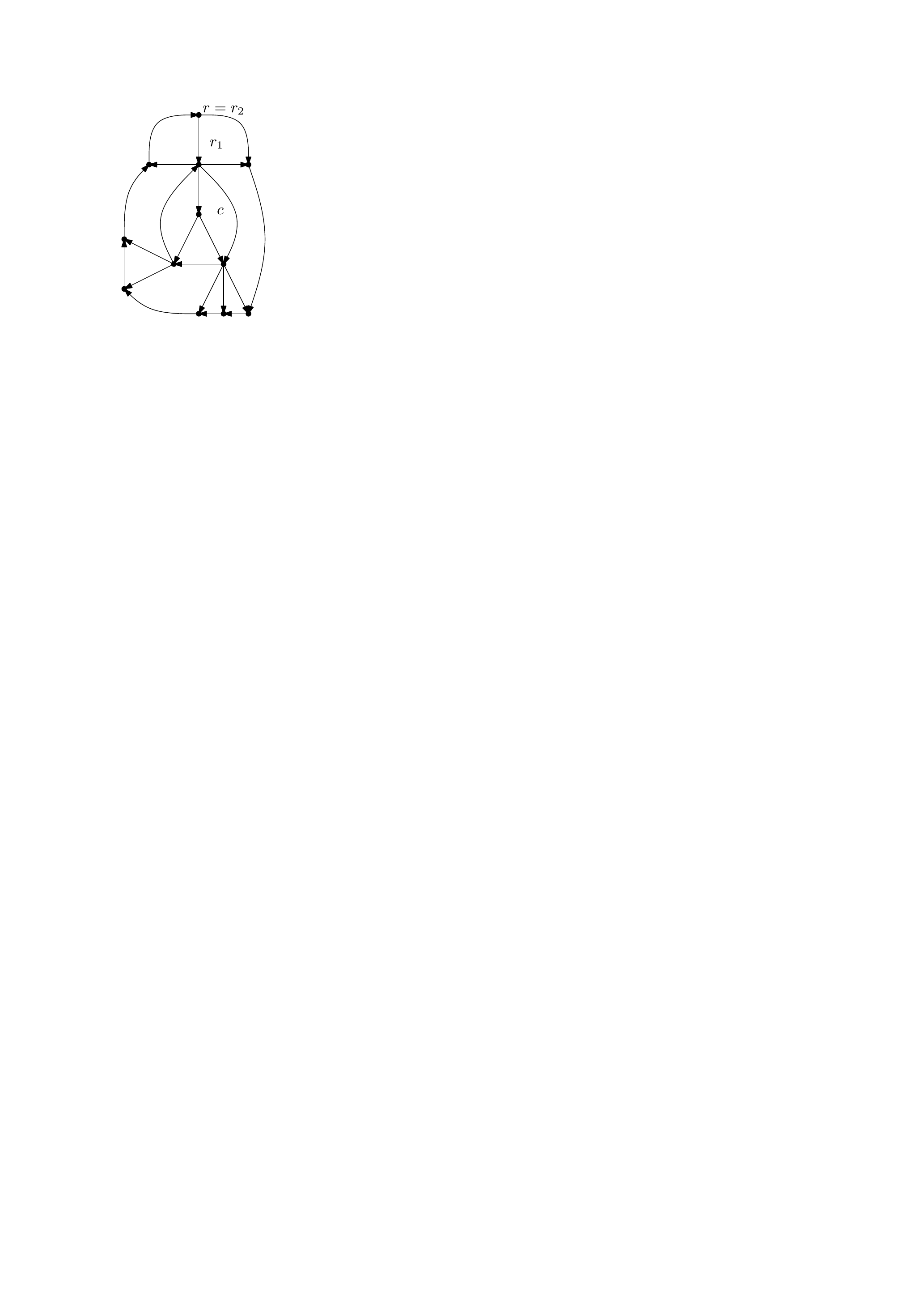}
		\label{figKLadGOri}}
	\caption{An example 2-cycle tree $G$ with central vertex $c$.}
	\label{figKLadG}
\end{figure}
Figure \ref{figKLadNOri} shows an example of a $2$-cycle tree. We easily observe
the following.
\begin{proposition}
	Each $k$-cycle tree is $k$-outerplanar.
\end{proposition}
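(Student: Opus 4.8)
The claim to prove is that each $k$-cycle tree is $k$-outerplanar. This is a structural statement about the relationship between the two graph classes, and the natural approach is to exhibit a $k$-outerplanar embedding directly from the defining planar embedding of the $k$-cycle tree.

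The plan is to start from the planar embedding of the underlying tree $T$ with central vertex $c$ together with the cycles $C_1,\ldots,C_k$ that connect the vertices at each fixed distance from $c$. First I would observe that, since all vertices at distance $d$ from $c$ lie on the cycle $C_d$, the vertex set of $G$ is partitioned into the "layers" $L_0 = \{c\}, L_1, \ldots, L_k$, where $L_d$ consists of the vertices at distance $d$ from $c$. The key claim is that in the given embedding, each layer $L_d$ is precisely the set of vertices lying on the $d$-th face "from the outside", or equivalently, that deleting the outermost layer repeatedly strips the graph one cycle at a time. I would argue this by induction on $k$: the cycle $C_k$ bounds the outer face (all its vertices lie on the exterior face), since a vertex of $C_k$ is a leaf of $T$ (being at maximum distance $k$) and in a planar embedding of a tree together with nested cycles through its layers, the outermost cycle must be the boundary of the outer face. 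Deleting $L_k$ (the vertices of $C_k$) then removes all leaves at depth $k$ together with the tree-edges incident to them and the cycle $C_k$ itself, leaving a $(k-1)$-cycle tree with the same central vertex $c$ and cycles $C_1,\ldots,C_{k-1}$, still planarly embedded as an induced subdrawing.

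The main steps in order: (1) define the layers $L_0,\ldots,L_k$ by distance from $c$ and note $|L_0| = 1$; (2) fix the planar embedding guaranteed by the definition of a $k$-cycle tree and argue that all vertices on $C_k$ lie on the outer face; (3) show that removing all vertices on the outer face of this embedding removes exactly $L_k$, and that what remains (with the inherited drawing) is a planar embedding of a $(k-1)$-cycle tree; (4) apply the induction hypothesis to conclude the remaining drawing is $(k-1)$-outerplanar, so by definition the original drawing is $k$-outerplanar; (5) handle the base case $k = 1$: a $1$-cycle tree is a star-like tree (central vertex $c$ joined to a cycle $C_1$ through all remaining vertices), which is clearly $1$-outerplanar since all vertices lie on the outer face — indeed this is essentially a wheel or a cycle with a pendant center, and one can place $c$ in the interior only if forced, but actually $c$ together with $C_1$ is outerplanar as $c$ has edges only to $C_1$-vertices and can be drawn on the outer face.

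The main obstacle I anticipate is step (2): carefully justifying that the cycle $C_k$ must bound the outer face of the embedding, rather than some inner face. This requires using the nested structure imposed by the tree — each vertex of $C_d$ has its tree-children among $C_{d+1}$, so the cycles are "concentric" in the embedding, and a crossing-free drawing forces $C_k$ to the outside (or symmetrically, $C_k$ to the inside, but then one simply relabels the "outer face" — the definition of $k$-outerplanar only requires that \emph{some} $k$-outerplanar embedding exists, so we may choose the outer face to be the one $C_k$ bounds). I would make this precise by noting that the tree edges between layer $d$ and layer $d+1$ lie in the region between $C_d$ and $C_{d+1}$, so the faces of the drawing are linearly ordered by layer and the two extreme faces are bounded by $C_k$ and by the "innermost" configuration around $c$; choosing the $C_k$-bounded face as the outer face completes the argument. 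The rest is routine once this concentric structure is established.
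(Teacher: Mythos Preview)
The paper offers no proof of this proposition; it is stated as an easy observation immediately after the definition. Your inductive peeling argument is the natural way to justify it, and the inductive step is fine: in the concentric embedding the vertices of $C_k$ bound the outer face, and deleting them leaves a $(k-1)$-cycle tree with the inherited embedding.

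Your base case, however, is wrong. A $1$-cycle tree is a wheel: the central vertex $c$ together with the cycle $C_1$ through all vertices at distance $1$, with $c$ adjacent (via the tree edges) to every vertex of $C_1$. Since $C_1$ must have at least three vertices to be a cycle, the resulting graph contains $K_4$ as a minor (indeed $W_3 = K_4$) and is therefore \emph{not} outerplanar; there is no planar drawing of a wheel with $c$ on the outer face. Your assertion that ``$c$ together with $C_1$ is outerplanar as $c$ has edges only to $C_1$-vertices and can be drawn on the outer face'' is simply false. The honest conclusion of your induction is that a $1$-cycle tree is $2$-outerplanar, and hence a $k$-cycle tree is $(k{+}1)$-outerplanar rather than $k$-outerplanar. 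This off-by-one is harmless for the paper's purposes (all that is used downstream is that $k$-cycle trees have treewidth bounded in terms of $k$, and $(k{+}1)$-outerplanarity gives that just as well), but you should flag the discrepancy rather than paper over it with an incorrect base case.
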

\begin{lemma}\label{lemKOuterplOri}
	Any edge orientation $\msolOri$ using predicates $\head(e, v)$ and $\tail(e,
	v)$ is MSOL-definable for $k$-outerplanar graphs .
\end{lemma}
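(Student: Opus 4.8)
The plan is to reduce Lemma~\ref{lemKOuterplOri} to Lemma~\ref{lemTWKOrd}, exactly as was done for Halin graphs in Lemma~\ref{lemHalinEdgOri}. The only obstacle to a direct application of Lemma~\ref{lemTWKOrd} is that lemma's hypothesis ``$G$ has treewidth $k$'' (with a \emph{fixed} constant $k$), whereas a $k$-outerplanar graph need not have bounded treewidth unless $k$ itself is treated as a constant. So the first step is to recall the classical fact of Bodlaender that a $k$-outerplanar graph has treewidth at most $3k-1$ (it suffices to know that the treewidth is bounded by some function of $k$); since we work with a fixed $k$ throughout, this is a genuine constant, and hence every $k$-outerplanar graph has treewidth bounded by a constant depending only on $k$.

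Given that, the second step is essentially a one-line invocation: apply Lemma~\ref{lemTWKOrd} with the treewidth bound $f(k) := 3k-1$ in place of $k$. That lemma already establishes that \emph{any} orientation $\msolOri$ of the edge set, specified via the predicates $\head(e,v)$ and $\tail(e,v)$, is MSOL-definable for graphs of bounded treewidth, via the device of guessing a proper $(f(k)+1)$-coloring of the vertices (which exists because bounded treewidth implies bounded chromatic number) together with a subset $F \subseteq E$ encoding the desired orientation. Concretely, the sentence
\begin{equation*}
	\exists X_0 \cdots \exists X_{f(k)} (\exists F \subseteq E)(\kCol{f(k)+1}(V, X_0,\ldots,X_{f(k)}) \wedge \msolOri)
\end{equation*}
witnesses MSOL-definability of the orientation, and nothing about the argument uses planarity or outerplanarity beyond the treewidth bound.

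The main (and only) thing to be careful about is making the dependence on $k$ explicit and harmless: the coloring, and therefore the number of set variables $X_0,\ldots,X_{f(k)}$ in the sentence, grows with $k$, so the resulting MSOL-sentence is not a single uniform sentence but one sentence per value of $k$ --- which is exactly the convention already in force for Lemma~\ref{lemTWKOrd} (``for any fixed $k$''), so no new issue arises. I do not anticipate a substantive difficulty here; the lemma is a direct corollary of Lemma~\ref{lemTWKOrd} once the treewidth bound for $k$-outerplanar graphs is cited, and its role in the paper is purely to set up the subsequent construction of an edge orientation and ordering for $k$-cycle trees (and later bounded-degree $k$-outerplanar graphs) in the same spirit as Lemmas~\ref{lemHalinEdgOri} and~\ref{lemHalinNBOrd}.
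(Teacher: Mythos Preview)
Your proposal is correct and follows exactly the paper's approach: invoke Lemma~\ref{lemTWKOrd} together with Bodlaender's bound that $k$-outerplanar graphs have treewidth at most $3k-1$. The paper's proof is a single sentence to this effect, and your additional remarks about the sentence length depending on $k$ are accurate but not needed for the argument.
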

\begin{proof}
	This follows immediately from Lemma \ref{lemTWKOrd} and the fact that
	$k$-outerplanar graphs have treewidth at most $3k-1$ \cite[Theorem 83]{Bod98}.
\qed \end{proof}
To prove our result for $k$-cycle trees, we need the notion of the $i$-th left
and right boundary of a vertex, referring to vertices on the $i$-th cycle of the
graph.
\begin{definition}[$i$-th boundary vertex]\label{defIBound}
	Given a vertex $v$, we say that $w$ is its \emph{$i$-th left boundary vertex},
	denoted by $bd_i^l(v)$, if $w$ lies on $C_i$ and there exists a path $E_P^l$
	from $v$ to $w$, only using edges of the tree of the graph, such that no other
	path from $v$ to any vertex on $C_i$ exists that uses an edge that lies on the
	left of one of the edges in $E_P^l$. Similarly, we define the \emph{$i$-th
	right boundary vertex} $bd_i^r(v)$.
\end{definition}
Now we are ready to prove the main result of this section.
\begin{lemma}\label{lemkCycTrees}
	$k$-Cycle trees admit MSOL-definable binary tree decompositions of width
	at most $4k$.
\end{lemma}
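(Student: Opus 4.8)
The plan is to follow the template of Section~\ref{secHalin}: first orient and order the edges, then build a width-bounded MSOL-definable tree decomposition componentwise along the tree, now carrying all $k$ nested cycles $C_1,\dots,C_k$ at once. Since a $k$-cycle tree is $k$-outerplanar, Lemma~\ref{lemKOuterplOri} lets me MSOL-define, by a single existential guess, an orientation in which the tree becomes a directed tree rooted at the central vertex $c$ and every $C_d$ becomes a directed cycle; I would guess at the same time one child $v^\ast$ of $c$ and require the $C_d$'s to be ``co-oriented'' — a condition on the fundamental cycles of the cycle edges with respect to the tree — so that on each $C_d$ the vertex immediately following the arc occupied by the subtree of $v^\ast$ is a well-defined start point $s_d$, turning the cyclic order of $C_d$ into a linear one. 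Mirroring Lemma~\ref{lemHalinNBOrd}, this makes an ordering $\oriNB$ of the child edges of every vertex MSOL-definable: the children of a vertex $w$ at distance $d$ all lie on $C_{d+1}$ and occupy a contiguous arc there, and I would order them according to the linear order of $C_{d+1}$ induced by its orientation and start point $s_{d+1}$, pinning down which endpoint of the arc comes first exactly by the Halin-style fundamental-cycle argument. With $\oriNB$ available — and since path-existence between two vertices is MSOL-expressible — the $i$-th boundary vertices $bd_i^l(v),bd_i^r(v)$ of Definition~\ref{defIBound} become MSOL-definable: $bd_i^l(v)$ is the vertex $w\in C_i$ for which there is a tree path from $v$ to $w$ such that no tree path from $v$ to $C_i$ uses an edge lying to the left of an edge of that path, and symmetrically for $bd_i^r(v)$.

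Next I would construct the tree decomposition exactly as in Lemma~\ref{lemHalinTD}, one component per tree edge. For a tree edge $e=\{x,y\}$ with $y$ the parent of $x$ and $x$ lying on $C_d$, the component covers $e$ together with, for every cycle $C_i$ ($d\le i\le k$) that meets the subtree of $x$, the ``gap'' cycle edge between the subtree of the left sibling $l(x)$ and the subtree of $x$, namely $\{bd_i^r(l(x)),bd_i^l(x)\}$ (for $i=d$ this is $\{l(x),x\}$; when $x$ is a leftmost child there are no gap edges, and when the lower endpoint is a leaf its boundary collapses to a single vertex so the leaf bags have size one). Its bags introduce and forget one vertex at a time and together contain the vertices of $\{y\}\cup bd(l(x))\cup bd(x)$ — or $\{y\}\cup bd(x)$ when $x$ is leftmost — where $bd(\cdot)$ now abbreviates the union over all relevant $i$ of $\{bd_i^l(\cdot),bd_i^r(\cdot)\}$; this is at most $3+4(k-d)\le 4k-1$ vertices, so the width is at most $4k$. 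Components are glued using $\oriNB$ precisely as for Halin graphs: forgetting the left-boundary vertices of $x$ from the bottom bag of $e$'s component yields the top bag of the component of the right-sibling edge of $e$, or — if $x$ is rightmost — of the edge above $y$; and the subtree decomposition below $x$ is attached under the bag carrying $bd(x)$. The one genuinely new issue compared with Halin graphs is that $c$ lies on no cycle, so the $k$ ``closing'' cycle edges $\{bd_d^r(c),bd_d^l(c)\}$ are picked up nowhere along the tree; I would cover all of them by one extra bag at the top of the decomposition containing $c$ together with all $bd_d^l(c),bd_d^r(c)$ for $1\le d\le k$, which has at most $2k+1$ vertices and doubles as the branch node joining the chains of components of $c$'s children. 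To keep the decomposition binary I would process the children of every vertex one at a time, so that each branch node has exactly two children, cascading where necessary. Finally, each of the $O(k)$ bag types is MSOL-definable once the boundary vertices are, and the predicates $\Parent$ and $\oriNBA$ follow from the orientation and from $\oriNB$ after contracting edges between bags with identical vertex sets, exactly as in the proof of Lemma~\ref{lemHalinTD}.

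I expect the main obstacle to lie in the first step: making ``left/right'', the child-edge ordering, and the $i$-th boundary vertices genuinely MSOL-definable for all $k$ nested cycles simultaneously. In a Halin graph there is only one cycle, and rooting the tree at one of its vertices breaks the single cyclic symmetry and lets the two fundamental cycles through each tree edge fix the ordering; here $c$ lies on no cycle, so one must break $k$ cyclic symmetries at once and consistently — this is the role of the single guess of $v^\ast$ and of the co-orientation condition — and then verify that the arc cut out on each $C_i$ by a subtree really is an interval of the resulting linear order and not a wrap-around, before Definition~\ref{defIBound} can be formalised in MSOL. The rest — checking the three tree-decomposition axioms for the glued components and the closing-edge bag, the width bound $4k$, and making the decomposition binary — is a routine extension of the Halin construction.
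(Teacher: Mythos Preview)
Your proposal is plausible and would likely go through, but it diverges from the paper at the very first design choice: you root the directed tree at the central vertex $c$, whereas the paper roots it at a vertex on the \emph{outermost} cycle $C_k$, exactly as in the Halin case. With the paper's rooting, the unique tree path from the root down to $c$ automatically furnishes a distinguished vertex $r_i$ on every cycle $C_i$ (namely the vertex at distance $k-i$ from the root), and these $r_i$ play the role your guessed $v^\ast$ and ``co-orientation'' condition are meant to play: each $r_i$ is the start point that turns the cyclic order on $C_i$ into a linear one, and the $\oriNB^i$ ordering is then defined verbatim as in Lemma~\ref{lemHalinNBOrd} with $r$ and $E_C$ replaced by $r_i$ and $E_{C_i}$. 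Your main self-identified obstacle --- breaking the $k$ cyclic symmetries consistently --- thus disappears for free.

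The second consequence of the different rooting is that the paper needs no extra ``closing'' bag. With the root on $C_k$, every cycle edge gets absorbed by the $L$-branch of some component, just as in Halin; your top bag $\{c\}\cup\{bd_d^l(c),bd_d^r(c):1\le d\le k\}$ is an artefact of rooting at $c$. The paper also keeps a fixed set of six bag types ($R1,R2,L1,L2,L3,LR$, introducing or forgetting whole boundary sets at once) rather than your $O(k)$ types that move one vertex at a time; both choices are fine for MSOL-definability since $k$ is fixed, and both yield width at most $4k$ via the $LR$ bag. In short: your route can be made to work, but the paper's rooting at the outer cycle is the cleaner and more direct generalisation of the Halin construction and sidesteps precisely the difficulties you anticipate.
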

\begin{proof}
	We can show this in almost exactly the same way as for Halin graphs (Lemma
	\ref{lemHalinTD}), so we will focus on pointing out the differences.
	Again, at first we define an edge orientation on $k$-cycle trees.
	Instead of partitioning the edge set into one directed tree and one directed
	cycle we now have one directed tree $E_T$ and $k$ directed cycles,
	such that $E_{C_i}$ denotes the cycle of distance $i$ from the central vertex
	$c$.
	\par
	The root of the tree is a vertex incident to the outermost cycle and for each
	cycle $C_i$ we have one incident root vertex $r_i$, which will be used to
	define the neighbor ordering of edges with the same head vertex.
	For a cycle $C_i$ this will be a vertex of distance $k - i$ from the root
	vertex of the tree. One can verify that this edge orientation is MSOL-definable by
	Lemma \ref{lemKOuterplOri} and the predicates given in Appendix
	\ref{appSecMSOLLG}. For an illustration of the orientation see Figure
	\ref{figKLadGOri}. \par
	Using this orientation one can define a predicate $\oriNB^i(e, f)$ for ordering
	all edges with the same parent, which then can be utilized to define $i$-th
	boundary vertices. \par
	As in the proof of Lemma \ref{lemHalinTD}, we construct a component in the
	tree decomposition for each edge $e \in E_T$.
	\begin{figure}[t]
		\centering
		\includegraphics[width=.25\textwidth]{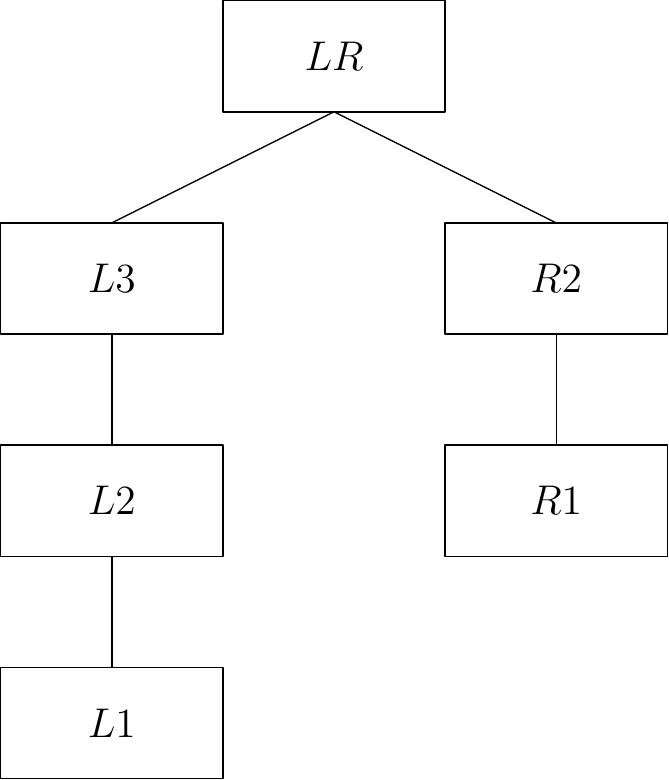}
		\caption{Bag types and edges for a component in the tree decompositions of a
		$k$-cycle tree.}
		\label{figKLadTDTypes}
	\end{figure}
	The definition of the bag types is somewhat different, since now we
	have to take into account at most $k$ cycle edges per component instead of a
	single one. Given an edge $e = \{x, y\}$ such that $y$ is the parent of $x$ and
	$y$ lies on cycle $C_i$, we have the following types of bags, with edges
	between them as shown in Figure \ref{figKLadTDTypes}.
	(Note that if in the following we refer to boundary vertices, we always mean
	the boundary vertices on higher numbered cycles.) \\
	\textbf{R1.} This bag contains the vertex $x$ and all its left and right
	boundaries. \\
	\textbf{R2.} This bag contains all vertices in the bag $R1$ plus the vertex
		$y$. \\
	\textbf{L1.} This bag contains the vertex $y$, all its left boundary vertices
	and the right boundary vertices of $y$ in the forest consisting of $E_T$
	without the edge $e$ and its right neighbors. \\
	\textbf{L2.} This bag contains all vertices of the bag $L1$ plus the left
	boundary vertices of $x$ (including $x$ itself, if $x \neq c$).
	\\
	\textbf{L3.} This bag contains the vertices of the bag $L2$ minus the right
	boundary vertices $z$ of $y$ without $e$ and its right neighbors, such that $z$
	has a matching left boundary vertex. That is, there is an edge between said
	boundary vertices and thus the vertex $z$ can be forgotten. \\
	\textbf{LR.} This bag contains the union of the bags $L3$ and $R2$.\\
	One can verify that this construction yields a tree decomposition for $k$-cycle
	trees. The largest of its bags is of type $LR$, which might contain four
	boundary sets, each of which has size at most $k$, plus the vertices $x$ and
	$y$. Since we have only one vertex, which is no boundary vertex (the central
	vertex $c$), we can conclude that the size of this bag is at most $4k + 1$ and
	hence this tree decomposition has width $4k$. The predicates in Appendix
	\ref{appSecMSOLLG} complete the proof.
\qed \end{proof}
Combining Lemma \ref{lemkCycTrees} with Theorem \ref{thmDERSTD}, we can derive
the following.
\begin{theorem}
	MSOL-definability equals recognizability for $k$-cycle trees.
\end{theorem}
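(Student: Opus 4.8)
The plan is to combine the structural result of Lemma~\ref{lemkCycTrees} with the meta-theorem of Theorem~\ref{thmDERSTD} essentially as a one-line deduction, in the same spirit as the passage from Lemma~\ref{lemHalinTD} to Theorem~\ref{thmDefRecHalin}. First I would recall that by Lemma~\ref{lemkCycTrees}, every $k$-cycle tree admits an MSOL-definable binary tree decomposition of width at most $4k$. Since $k$ is a fixed constant, this is a bounded-width MSOL-definable tree decomposition. Although the tree decomposition is binary rather than ordered in the sense of the $\oriNBA$-predicate, being binary means every branch node has exactly two (hence a constant number of) children, so it qualifies as a \emph{bounded degree} MSOL-definable tree decomposition of bounded width. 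Hence the hypotheses of Theorem~\ref{thmDERSTD} are met for the class of $k$-cycle trees.

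Next I would invoke Theorem~\ref{thmDERSTD} directly: for any graph class admitting ordered or bounded degree MSOL-definable tree decompositions of width at most some constant, MSOL-definability equals recognizability. Applying this with the class of $k$-cycle trees and the decompositions supplied by Lemma~\ref{lemkCycTrees} yields that every recognizable property of $k$-cycle trees is MSOL-definable. The converse direction, that MSOL-definability implies recognizability, is already known for graphs of bounded treewidth by Courcelle's Theorem~\cite{Cou90}; since each $k$-cycle tree is $k$-outerplanar and hence has treewidth at most $3k-1$ (by the Proposition above together with \cite[Theorem 83]{Bod98}), this direction applies here as well. Combining the two directions gives the claimed equivalence.

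The only point requiring a moment of care --- and the closest thing to an obstacle --- is checking that ``binary'' is an acceptable stand-in for the hypothesis of Theorem~\ref{thmDERSTD}, i.e.\ that Lemma~\ref{lemFIIDMTDBC} indeed covers binary decompositions. This is immediate since a binary tree decomposition has maximum degree $3$ (counting the parent edge), so $c + 1 = 3$ is a valid constant bound on the branch-node degree, and the bounded-degree machinery of Lemma~\ref{lemEQCJoinConstCh} and Lemma~\ref{lemFIIDMTDBC} applies verbatim. Everything else is a routine chaining of already-established results, so no further argument is needed.
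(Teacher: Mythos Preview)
Your proposal is correct and follows exactly the paper's own route: the paper's proof is the single line ``Combining Lemma~\ref{lemkCycTrees} with Theorem~\ref{thmDERSTD}, we can derive the following,'' and your argument is just a slightly more detailed unpacking of that combination. Your extra paragraph on the converse direction is harmless but redundant, since Theorem~\ref{thmDERSTD} already states the full equivalence (both directions), having itself been derived from Lemmas~\ref{lemFIIDMTD}/\ref{lemFIIDMTDBC}, Theorem~\ref{thmMyhNerTWK}, and~\cite{Cou90}.
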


\subsection{Feedback Edge and Vertex Sets}\label{secFEFVS}
In this section we consider graphs that can be obtained by the composition of a
graph that admits an MSOL-definable (ordered) tree decomposition and some
feedback edge or vertex sets, defined below.
\begin{definition}
	Let $G = (V, E)$ be a graph. An edge set $E' \subseteq E$
	is called \emph{feedback edge set}, if $G' = (V, E \setminus E')$ is acyclic.
	Analogously, a vertex set $V'$ is called \emph{feedback vertex set}, if the
	graph $G' = (V \setminus V', E \setminus E')$ is acyclic, where $E'$ denotes
	the set of incident edges of $V'$ in $E$.
\end{definition}
\begin{theorem}\label{thmFESFVS}
	Let $G = (V, E)$ be a graph with spanning tree $T = (V, F)$, which
	admits an MSOL-definable (ordered) tree decomposition of width $k$, such that its vertex
	and edge bag predicates are associated with either (a subset of the) vertices
	of the graph or (a subset of the) edges in the spanning tree. \par
	Let $l$ be a constant. A graph $G'$ admits an MSOL-definable (ordered)
	tree decomposition of width $k + l$, if one of the following holds.
	\begin{enumerate}[label={(\roman*)}]
	  \item Let $E'$ denote a set of edges, such that each biconnected component of
	  the graph $T' = (V, F \cup E')$ has a feedback edge set of size at most
	  $l$, where $G' = (V, E \cup E')$.\label{thmFESFVSE}
	  \item Let $V'$ denote a set of vertices and $E' \subseteq (V \times V') \cup
	  (V' \times V')$ a set of incident edges, such that each biconnected component
	  of the graph $T' = (V \cup V', F \cup E')$ has a feedback vertex set of size
	  at most $l$, where $G' = (V \cup V', E \cup E')$.\label{thmFESFVSV}
	\end{enumerate}
\end{theorem}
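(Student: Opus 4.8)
The plan is to take the given MSOL-definable tree decomposition $(T,X)$ of $G$ of width $k$ and enlarge it so that it also covers the additional edges (and, in case (ii), the additional vertices), while increasing the maximum bag size by at most $l$. The MSOL-sentence for the new decomposition first existentially guesses the spanning tree $F$ and the edge set $E'$ (and in case (ii) the vertex set $V'$ and its incident edge set), and then checks that this guess is admissible: that $F$ is a spanning tree of $G'-E'$, that $G'-E'$ satisfies the hypothesis by running its own (assumed) MSOL-definable-tree-decomposition formula, and that every biconnected component of $T'=(V,F\cup E')$ (resp.\ of $(V\cup V',F\cup E')$) has a feedback edge (resp.\ vertex) set of size at most $l$. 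Biconnectivity, acyclicity, and ``some set of at most $l$ edges/vertices whose removal yields an acyclic graph'' are all MSOL-expressible (the last because $l$ is constant), so admissibility is an MSOL-sentence; since the hypothesis guarantees that an admissible guess exists, it suffices to define a correct decomposition under the assumption of admissibility.

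For case (i) I would argue as follows. For $e'=\{u,v\}\in E'$ let $P_{e'}$ be its fundamental cycle with respect to $F$ (Definition~\ref{defFundCyc}); all tree edges and vertices of $P_{e'}$ together with $e'$ lie in one biconnected component $B(e')$ of $T'$, and because the edge set of a graph is partitioned among its biconnected components, the set of $e'\in E'$ whose fundamental cycle uses a fixed tree edge $f$ is contained in $E'\cap E(B)$ for the unique $B$ with $f\in E(B)$, hence has size at most $l$. The decomposition of $G'$ is obtained from $(T,X)$ by (a) reordering the children of every node so that, for each vertex $w$, the child-edge components of $w$ belonging to a common biconnected component of $T'$ form a contiguous block (this is MSOL-definable, since ``$f,g$ lie on a common cycle'' is MSOL-expressible and the blocks can be ordered by the position of their least edge in the original ordering), and (b) for each $e'=\{u,v\}\in E'$ adding one canonically chosen endpoint to every bag labelled by a tree edge or vertex on $P_{e'}$, routed so that its occurrences stay connected. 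The counting above together with the contiguity from (a) bounds the number of new vertices in any bag by $l$; edge $e'$ gets covered because the bag labelled by the tree edge of $P_{e'}$ incident to the ``far'' endpoint already contains that endpoint; and the decomposition axioms are preserved. As the endpoint choice, the routing and the revised $\Bag$-, $\Parent$- and $\oriNBA$-predicates are all phrased via fundamental cycles, biconnectivity and bounded-size sets, the new decomposition is again MSOL-definable, and ordered if $(T,X)$ was.

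Case (ii) reduces to the same idea once the new vertices have been attached. For each biconnected component $B$ of $T'$ fix a feedback vertex set $S_B$ with $|S_B|\le l$, so $B-S_B$ is a forest; the part of $B-S_B$ not already present in $(T,X)$ consists of the vertices of $(V'\cap V(B))\setminus S_B$ together with a subforest of $E'$, a tree-like appendage that can be grafted onto $(T,X)$ by letting each such new vertex occur along a path of bags starting at the bag containing its parent in that forest, which raises the width by at most one; adding the at most $l$ vertices of $S_B$ to the (contiguous, after (a)) region of $B$ then accounts for the remaining $l$. Combining the two statements with Theorem~\ref{thmMyhNerTWK}, \cite{Cou90} and Theorem~\ref{thmDERSTD} yields that MSOL-definability equals recognizability for all such classes of graphs $G'$.

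I expect the main obstacle to be controlling the width increase to exactly $l$ at bags that represent cut vertices shared by several biconnected components: a naive routing of the endpoints of $E'$ lets two or more components — those whose fundamental cycles both pass ``through'' the cut vertex — contribute to a single bag, giving a bound of $2l$ or worse. Overcoming this is exactly what forces the reordering in (a) together with a careful choice of which endpoint of each $e'$ to route and in which direction, so that the fundamental cycles of a biconnected component only touch bags inside that component's own contiguous block and each block is affected only by its own at most $l$ edges; making this simultaneously correct and MSOL-definable is the technical heart of the proof.
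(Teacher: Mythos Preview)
Your core construction for case~(i) --- add one canonically chosen endpoint of each $e'\in E'$ to every bag whose associated vertex or tree edge lies on the fundamental cycle of $e'$ --- is exactly what the paper does; the canonical choice is made via a $(k{+}1)$-colouring of $V$ (take the endpoint of smaller colour). However, the paper does \emph{not} reorder children and does not single out cut vertices: the width bound is obtained by a bare citation of Lemmas~6 and~73 of \cite{Bod98}, and MSOL-definability by rewriting the $\Bag$-predicates through $\FundCyc$ (Appendix~\ref{appSecMSOLFESFVS}). For case~(ii) the paper is again much simpler than your grafting construction: it adds each $v\in V'$ to every bag whose associated vertex or tree edge lies in the same biconnected component of $T'$ as $v$, citing Lemmas~6 and~72 of \cite{Bod98} for the width. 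There is no separate treatment of $S_B$ versus $(V'\cap V(B))\setminus S_B$, and your accounting (``$+1$ from grafting, then $+l$ from $S_B$'') does not obviously hit $k+l$.

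There is also a gap in the mechanism you propose for the subtlety you (correctly) identify. Reordering siblings cannot control the width of the bag associated to a cut vertex $v$: that bag is indexed by $v$ itself, independently of any sibling order, and a fundamental cycle from \emph{each} biconnected component meeting $v$ can pass through $v$. Permuting children changes which partial terminal subgraphs arise at a branch node, but it does not change the contents of $X_v$, so step~(a) does not deliver the ``each block is affected only by its own at most $l$ edges'' conclusion at that bag. If you want to replace the paper's citation by a self-contained width argument, the device has to be something other than a sibling reordering.
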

\begin{proof}
	\ref{thmFESFVSE}. Let $e = \{v, w\}$ be an edge in $E'$ and note that since $G$
	has bounded treewidth $k$, there exists
	a $(k+1)$-coloring on its vertices. 
	Assume wlog.\ that the coloring set is a set of natural
	numbers $\{1,\ldots,k+1\}$ and $\col(v) < \col(w)$. Then we add the vertex
	$v$ to each bag that is associated with either a vertex or an edge in $T$ that
	lie on the fundamental cycle of $e$. The width of the tree decomposition
	increased by at most $l$ (by Lemmas 6 and 73 in \cite{Bod98}).
	\par
	\ref{thmFESFVSV}. Let $v$ be a vertex in $V'$. We add
	$v$ to all bags that correspond to vertices/edges contained in the same
	biconnected component as $v$ (in $T'$). The fact that the treewidth increased
	by at most $l$ follows from \cite[Lemmas 6 and 72]{Bod98}.
	\par
	In Appendix \ref{appSecMSOLFESFVS} we show how to extend all predicates to
	include the newly introduced vertices in the bags for both cases.
\qed \end{proof}
As an example we apply Theorem \ref{thmFESFVS} to both Halin graphs and
$k$-cycle trees, which - in combination with Theorem \ref{thmDERSTD} - yields
the following result.
\begin{theorem}
	Let $\cC$ denote a graph class such that its members can be constructed from a
	Halin graph or a $k$-cycle tree together with either an edge set or vertex set
	as described in Theorem \ref{thmFESFVS}. Then, MSOL-definability equals
	recognizability for all members of $\cC$.
\end{theorem}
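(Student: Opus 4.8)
The plan is to combine the two black-box results that have just been established: Theorem~\ref{thmDERSTD}, which says that MSOL-definability equals recognizability for any graph class admitting ordered or bounded-degree MSOL-definable tree decompositions of bounded width, and Theorem~\ref{thmFESFVS}, which says that if a graph $G$ admits such a tree decomposition (with bag predicates tied to the graph's vertices and to edges of a spanning tree), then adding a bounded-size feedback edge set or feedback vertex set per biconnected component yields a graph $G'$ that again admits an MSOL-definable tree decomposition of bounded width. So the argument is essentially a two-line deduction once the hypotheses of Theorem~\ref{thmFESFVS} are checked for the base classes.

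First I would verify that Halin graphs and $k$-cycle trees satisfy the precondition of Theorem~\ref{thmFESFVS}: they admit MSOL-definable (ordered, in fact binary) tree decompositions of width $3$ and $4k$ respectively (Lemma~\ref{lemHalinTD}, Corollary~\ref{corHalinTDLS1}, Lemma~\ref{lemkCycTrees}), and in both constructions every bag is associated with either a vertex of the graph or an edge of the spanning tree $E_T$ — indeed the components are built per spanning-tree edge, with the cycle edges covered but not used as the identifying object of any bag. Hence the spanning tree $T = (V, E_T)$ witnesses the hypothesis. Applying Theorem~\ref{thmFESFVS}\ref{thmFESFVSE} or \ref{thmFESFVSV} with the constant $l$ bounding the feedback set size, every member $G'$ of $\cC$ obtained from a Halin graph or $k$-cycle tree by adding such an edge or vertex set admits an MSOL-definable (ordered) tree decomposition of width $3+l$ or $4k+l$, a constant.

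Then I would invoke Theorem~\ref{thmDERSTD} directly: since every $G' \in \cC$ has an ordered MSOL-definable tree decomposition of width at most this constant, MSOL-definability equals recognizability for $\cC$. Concretely, the forward direction (MSOL-definable implies recognizable) is Courcelle's Theorem \cite{Cou90} together with Theorem~\ref{thmMyhNerTWK}, and the reverse direction (recognizable, i.e.\ finite index, implies MSOL-definable) is exactly what Lemma~\ref{lemFIIDMTD} gives for classes with ordered MSOL-definable tree decompositions of bounded width. I expect no genuine obstacle here: the only mild subtlety is bookkeeping — confirming that "ordered" is preserved by the bag-augmentation in Theorem~\ref{thmFESFVS} (it is, since that construction only adds vertices to existing bags and modifies the $\Bag$/$\Parent$ predicates, leaving the sibling-ordering predicate $\oriNBA$ untouched) and that the width stays uniformly bounded over the whole class (it does, since $l$ and $k$ are fixed constants for $\cC$). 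Thus the theorem follows immediately by chaining Theorem~\ref{thmFESFVS} into Theorem~\ref{thmDERSTD}.
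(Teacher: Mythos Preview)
Your proposal is correct and is exactly the approach the paper takes: the paper does not give a separate proof but simply states that applying Theorem~\ref{thmFESFVS} to Halin graphs and $k$-cycle trees, in combination with Theorem~\ref{thmDERSTD}, yields the result. Your additional checks---that the bag predicates in Lemmas~\ref{lemHalinTD} and~\ref{lemkCycTrees} are indeed indexed by spanning-tree edges, and that orderedness is preserved under the augmentation of Theorem~\ref{thmFESFVS}---are valid and fill in the details the paper leaves implicit.
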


\subsection{Bounded Degree $k$-Outerplanar Graphs}\label{secKOuterPlBD}
We now give another method for proving Courcelle's conjecture based on the
notion of vertex and edge remember numbers, which will enable us to prove it for
$k$-outerplanar graphs of bounded degree. We first give the necessary
definitions.
\begin{definition}[Vertex and Edge Remember Number]
	Let $G = (V, E)$ be a graph with maximal spanning forest $T = (V, F)$. The
	\emph{vertex remember number} of $G$ (with respect to $T$), denoted by
	$vr(G, T)$, is the maximum number over all vertices $v \in V$ of fundamental
	cycles that use $v$. Analogously, we define the \emph{edge remember number}, denoted by
	$er(G, T)$.
\end{definition}
\begin{theorem}\label{thmVRER}
	Let $G = (V, E)$ be a graph with a spanning tree $T =
	(V, F)$ and let $k = \max\{vr(G, T), er(G, T) + 1\}$.
	$G$ admits \begin{enumerate}[label={(\roman*)}]
	  \item a width-$k$ MSOL-definable tree decomposition of bounded degree, if $G$
	  has bounded degree.\label{thmVRERBD}
	  \item a width-$k$ MSOL-definable ordered tree decomposition, if there is
	  an MSOL-definable ordering $\oriNB(e, f)$ over all edges $e, f \in F$ with
	  the same head vertex.\label{thmVREROri}
	\end{enumerate}
\end{theorem}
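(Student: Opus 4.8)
The plan is to realise, by an explicit construction from $T$, the classical bound $\max\{vr(G,T),er(G,T)+1\}$ on the treewidth of $G$ (see \cite{Bod98}), and then to observe that the resulting decomposition is MSOL-definable. First I would fix a root $r$ of $T$ and orient $F$ towards $r$; using Lemma \ref{lemTWKOrd} together with a predicate stating that $F$ forms a tree directed at a single root (exactly as $\dir{\Tree}$ is used in Lemma \ref{lemHalinEdgOri}), this orientation is MSOL-definable, so the parent $p(v)$ of a non-root vertex $v$, its children, and the fact that the edges of $F$ with head $v$ are precisely the edges from $v$ to its children, are all available. Since $k$ is a constant, $G$ has treewidth at most $k$ by \cite{Bod98}, hence a proper $(k+1)$-colouring $\col$, which is MSOL-definable via $\kCol{k+1}$; for each non-tree edge $e=\{a,b\}\in E\setminus F$ I let $c(e)$ be the endpoint of smaller colour. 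Writing $\gamma(e)$ for the set of vertices lying on the fundamental cycle of $e$ (equivalently, the vertices on the tree path joining its endpoints), the decomposition $\mathcal D$ has one \emph{vertex bag} $B_v$ for each $v\in V$ and one \emph{edge bag} $B_{\{v,p(v)\}}$ for each $\{v,p(v)\}\in F$, arranged as the subdivision of $T$ (so $B_{p(v)}$ is the parent of $B_{\{v,p(v)\}}$, which is the parent of $B_v$), with
$$
B_v=\{v\}\cup\{\,c(e) : e\in E\setminus F,\ v\in\gamma(e)\,\},\qquad
B_{\{v,p(v)\}}=\{v,p(v)\}\cup\{\,c(e) : e\in E\setminus F,\ \{v,p(v)\}\subseteq\gamma(e)\,\}.
$$

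Next I would verify that $\mathcal D$ is a tree decomposition of width $k$. Every vertex and every tree edge is covered by construction, and a non-tree edge $e=\{a,b\}$ is covered because $c(e)$ is placed, among others, into the vertex bag of the other endpoint, which already contains that endpoint. For connectivity, the bags containing a fixed $u\in V$ are $B_u$, $B_{\{u,p(u)\}}$ and the edge bags at $u$'s children, together with the subdivided tree paths of all non-tree edges $e$ with $c(e)=u$; each such path passes through $B_u$, so the union is a subtree of $\mathcal D$. For the width, a vertex bag acquires one extra vertex per fundamental cycle through $v$, so $|B_v|\le 1+vr(G,T)\le k$, and an edge bag acquires one extra vertex per fundamental cycle through that tree edge, so $|B_{\{v,p(v)\}}|\le 2+er(G,T)\le k+1$; hence $\mathcal D$ has width $k$.

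For MSOL-definability I would introduce one bag type for vertex bags and one for edge bags, associate $B_v$ with $v$ (so $\Bag^V$) and $B_{\{v,p(v)\}}$ with the tree edge $\{v,p(v)\}\in F$ (so $\Bag^E$), and note that ``$w\in B_v$'' unfolds to ``$w=v$ or $\exists e\in E\setminus F:\ w=c(e)\wedge v\in\gamma(e)$'': membership on a fundamental cycle is membership on a tree path, and ``$w=c(e)$'' is decided by comparing the colour classes of the two endpoints of $e$, so everything is MSOL once the orientation of $F$ and $\col$ are fixed; the $\Parent$ predicate is read off from the subdivided-tree shape. Since $\Parent$ requires distinct vertex sets, I would then contract every edge between two bags with equal vertex sets; a maximal chain of equal bags meets a tree vertex $u$ only by passing through $B_u$, and then $u$ lies in the common vertex set, which has at most $k+1$ elements, so such chains are short and contraction keeps the width finite and (in case \ref{thmVRERBD}) the degree bounded, without disturbing the left-to-right order of siblings (case \ref{thmVREROri}). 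For \ref{thmVRERBD}, bounded degree of $G$ gives bounded degree of $T$ and hence of $\mathcal D$, and hence of its contraction. For \ref{thmVREROri}, the children of a branch bag $B_v$ are exactly the edge bags of the edges of $F$ with head $v$; ordering them by the hypothesised MSOL-definable $\oriNB$ supplies the $\oriNBA$ predicate, so $\mathcal D$ is an MSOL-definable \emph{ordered} tree decomposition.

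I expect the main obstacle to be the interplay between the choice function $c(\cdot)$, the width bound, and MSOL-definability: one must check that at the least common ancestor of a non-tree edge's endpoints only the single vertex $c(e)$ is added (so that the bound $1+vr(G,T)$ is not exceeded) while covering is still guaranteed, and that the equal-bag contraction can be performed without spoiling clause \ref{thmVRERBD} or clause \ref{thmVREROri} of the definition of an (ordered) MSOL-definable tree decomposition. The remaining steps are routine once the orientation and colouring machinery from Sections \ref{secOrdHG}--\ref{secKCycTr} is invoked.
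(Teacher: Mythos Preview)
Your proposal is correct and follows essentially the same route as the paper: both build the subdivision of the rooted spanning tree, fill vertex- and edge-bags with the base vertex/edge together with one colour-chosen endpoint of every non-tree edge whose fundamental cycle passes through, and then read off the $\Bag$- and $\Parent$-predicates from this structure, handling \ref{thmVRERBD} via the degree of $T$ and \ref{thmVREROri} via the assumed $\oriNB$. Two remarks: your inequality $|B_v|\le 1+vr(G,T)\le k$ should read $\le k+1$ (the width, not the bag size, is $\le k$); and the paper additionally observes that the existence of a spanning tree with the required remember-number bounds is itself MSOL-expressible, which you may want to include for the downstream applications even though it is not needed for the theorem as stated.
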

\begin{proof}
	For both \ref{thmVRERBD} and \ref{thmVREROri} we can construct a tree
	decomposition $(T', X)$ as shown in the proof of Theorem 71 in \cite{Bod98}.
	That is, we create a tree $T' = (V \cup F, F')$, where $F' = \{\{v, e\} \mid v
	\in V, e \in F, \exists w \in V : e = \{v, w\}\}$, i.e. we add an extra node
	between each two adjacent vertices in the spanning tree. The construction of
	the sets $X_{t}, t \in V \cup F$ works as follows. For a bag associated with a
	vertex $v$ in the spanning tree we first add $v$ to $X_v$, and for a bag
	associated with an edge $e$, we add both its endpoints to $X_e$. Then, for each
	edge $e \in E \setminus F$, we add one of its endpoints to each bag
	corresponding to a vertex or edge on the fundamental cycle of $e$.
	To make sure that our method of choosing one endpoint of an edge is
	MSOL-definable, we use the same argument as in the proof of Theorem
	\ref{thmFESFVS}\ref{thmFESFVSE}. That is, we assume the existence of a vertex
	coloring in the graph and pick the vertex with the lower numbered color. \par
	One can verify that $(T', X)$ is a tree decomposition of $G$ and we have for
	all vertex bags $X_v$ that $|X_v| \le 1 + vr(G, T)$ and for all edge bags $X_e$
	that $|X_e| \le 2 + er(G, T)$ and thus the claimed width of $(T', X)$ follows.
	\par
	Now we show that finding a spanning tree such that its vertex
	and edge remember number are bounded by a constant, say $\kappa$, is
	MSOL-definable, if it exists. We can simply do this by guessing an edge set
	$E_T \subseteq E$ and checking whether $E_T$ is the edge set of a spanning tree
	in $G$ with the claimed bound on the resulting vertex and edge remember
	numbers. Since $\kappa$ is constant, this can be done in a straightforward way,
	see Appendix \ref{appSecMSOLVRER}.
	\par
	For defining the $\Bag$- and $\Parent$-predicates, we assume wlog.\ that we
	have a root and an MSOL-definable orientation on the edges
	in the spanning tree,\footnote{This clearly holds by Lemma \ref{lemTWKOrd},
	since trees have treewidth 1.} so we can directly define such predicates,
	see Appendix \ref{appSecMSOLVRER}. \par
	For case \ref{thmVRERBD} one easily sees that $(T', X)$ has bounded degree,
	since the degree of any node corresponding to a vertex $v \in V$
	in the tree decomposition is equal to the degree of $v$ in $G$. Nodes
	containing edge bags are always intermediate nodes.
	\par
	Case \ref{thmVREROri} holds, since we can define an orientation $\oriNB(X_a,
	X_b)$ for the children of each vertex bag by using the ordering of its
	corresponding edges. \par
	The predicates defined in Appendix \ref{appSecMSOLVRER} complete the proof.
\qed \end{proof}
In his proof for the treewidth of $k$-outerplanar graphs being $3k-1$,
Bodlaender used the following lemma.
\begin{lemma}[Lemma 81 in \cite{Bod98}]
	Let $G = (V, E)$ be a $k$-outerplanar graph with maximum degree 3. Then there
	exists a maximal spanning forest $T = (V, F)$ with $er(G, T) \le 2k$ and
	$vr(G, T) \le 3k - 1$.
\end{lemma}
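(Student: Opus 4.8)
The plan is to choose the maximal spanning forest by following the layered structure of a fixed $k$-outerplanar embedding, and then to bound the two remember numbers by a layer-by-layer counting argument. I would first reduce to the essential case: since a maximal spanning forest is a spanning tree of each connected component and every fundamental cycle lies within a single biconnected component, both $er(G,T)$ and $vr(G,T)$ are the maxima of the corresponding quantities taken over the biconnected components of $G$. Hence it suffices to treat a single $2$-connected $k$-outerplanar graph of maximum degree $3$, whose outer face is bounded by a cycle. I would fix the vertex layers $L_1, \dots, L_k$ obtained by repeatedly deleting the vertices on the current outer face, and record the standard fact that (after choosing the layering appropriately) every edge joins two vertices of the same layer or of two consecutive layers. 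Throughout I will use two combinatorial reformulations: for a tree edge $f$, the number of fundamental cycles through $f$ equals the number of non-tree edges whose endpoints lie in different components of $T - f$ (the cut reformulation); and for a vertex $v$ the number of fundamental cycles through $v$ equals the number of non-tree edges that are incident with $v$ or have their two endpoints in different components of $T - v$. Since $\deg_G(v) \le 3$, the vertex $v$ is incident with at most three edges and $T - v$ has at most three components, which is where the degree bound enters.

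Next I would construct $F$. For each layer $i$ let $G_i$ be the subgraph induced by $L_i$; I take a spanning forest $F_i$ of $G_i$ that follows the boundary structure of that layer, and then I add a minimal \emph{radial} set of inter-layer edges that links the forests $F_1, \dots, F_k$ into a single spanning tree, attaching each component of $L_i$ to $L_{i-1}$ through exactly one chosen edge. The point of this choice is locality: a horizontal non-tree edge (both endpoints in one layer $L_i$) has its fundamental cycle entirely inside $G_i$, while a radial non-tree edge between $L_{i-1}$ and $L_i$ has a fundamental cycle confined to the band formed by these two layers together with the radial spine connecting them. Consequently each tree edge and each vertex of layer $i$ can only be traversed by fundamental cycles originating in layers $i-1$, $i$, and $i+1$, so the global remember numbers are obtained by summing bounded per-layer contributions over the $\le k$ layers.

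The heart of the argument is the per-layer count, which I would carry out using the non-crossing property of fundamental cycles in the plane together with the degree bound. For the edge remember number I would show that, within its own layer and the two adjacent bands, a fixed tree edge $f$ is crossed (in the cut sense) by at most a constant number of non-tree edges: the orientation of $F_i$ along the boundary forces the horizontal cycles spanning $f$ to be nested, and maximum degree $3$ forbids too many nested chords sharing an edge, giving the aggregate bound $er(G,T) \le 2k$. For the vertex remember number I would argue that at a vertex $v$ the at-most-three components of $T - v$ can be separated by a non-tree edge in only boundedly many ways per layer; summing over the layers and saving one unit at the extremal (innermost or outermost) layer yields $vr(G,T) \le 3k - 1$.

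The main obstacle is precisely this counting: the constants $2$ per layer for edges and the total $3k-1$ for vertices are tight and do not follow from a crude ``bounded per layer'' estimate. Making them come out requires committing to a specific orientation of the within-layer forests $F_i$ and a specific rule for selecting the radial edges, and then arguing carefully that the nested horizontal chords through a single edge, together with the at most one or two radial cycles that can reach it, never exceed the stated totals; the $-1$ in $3k-1$ in particular must be recovered from a boundary effect at the outermost layer, where no further layer lies outside. An equivalent route, which I would keep as a fallback, is to induct on $k$ by peeling $L_1$: given a forest for the $(k-1)$-outerplanar graph on $L_2, \dots, L_k$ meeting the bounds $2(k-1)$ and $3(k-1)-1$, extend it across the outer layer and verify that attaching $L_1$ increases the per-edge count by at most $2$ and the per-vertex count by at most $3$, which is again the delicate step.
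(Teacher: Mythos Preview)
The paper does not prove this lemma; it is quoted as Lemma~81 of \cite{Bod98} and used as a black box (the paper only remarks that ``given the nature of its proof'' the bounded-degree generalization in the subsequent corollary is immediate). There is therefore no in-paper proof to compare your proposal against.

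On the merits of your sketch: the architecture --- build a spanning forest inside each layer of the $k$-outerplanar embedding, connect the layers by a sparse set of radial edges, and bound the remember numbers by summing per-layer contributions --- is indeed the shape of Bodlaender's original argument in \cite{Bod98}. But your proposal is, by your own admission, a plan rather than a proof: you name ``the main obstacle'' as the per-layer counting that produces the exact constants $2$ and $3$ (and the saving of $-1$), and you do not carry it out. The claims that the horizontal fundamental cycles through a fixed tree edge are nested and that maximum degree $3$ ``forbids too many nested chords sharing an edge'' are exactly where the work lies; neither the nesting nor the numerical bound follows without first fixing a concrete rule for the $F_i$ and the radial spine and then performing a case analysis. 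Your fallback induction on $k$ has the same gap: you assert that attaching the outer layer raises $er$ by at most $2$ and $vr$ by at most $3$, but this is precisely the statement to be proved. Until that step is actually executed, what you have is a correct outline of the strategy, not a proof of the lemma.
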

Given the nature of its proof, one immediately has the following
consequence.
\begin{corollary}\label{corKOuterplSFBD}
	Let $G = (V, E)$ be a $k$-outerplanar graph with maximum degree $\Delta$. Then
	there exists a maximal spanning forest $T = (V, F)$ with $er(G, T) \le 2k$ and
	$vr(G, T) \le \Delta k - 1$.
\end{corollary}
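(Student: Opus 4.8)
The plan is to follow the structure of the proof of Lemma 81 in \cite{Bod98} and observe that the only place where the maximum degree $3$ is used is in bounding the vertex remember number. First I would recall the construction behind Lemma 81: given a $k$-outerplanar embedding of $G$, one layers the graph into $k$ levels by repeatedly removing vertices on the outer face, and builds a maximal spanning forest $T$ by, roughly, taking for each level a spanning tree of its ``ring'' structure and connecting consecutive levels through carefully chosen edges. The fundamental cycles of the non-tree edges then come in two flavours: those running ``along'' a level and those connecting adjacent levels. The edge remember number bound $er(G,T) \le 2k$ comes from counting, for each tree edge, how many such fundamental cycles can pass through it, and this counting argument is entirely insensitive to the degree of $G$ --- so that part of the statement carries over verbatim.

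Next I would isolate the vertex-counting step. In Bodlaender's proof, the bound $vr(G,T) \le 3k-1$ is obtained by showing that at most $3k-1$ fundamental cycles pass through any single vertex $v$: there is a contribution of at most (something like) $2k$ from cycles within and between levels that is again degree-independent, plus a contribution bounded by the number of edges incident to $v$ that are not in $T$, which for a degree-$3$ graph is at most $1$ per relevant category, giving the extra additive term that makes $3k-1$. Replacing ``maximum degree $3$'' by ``maximum degree $\Delta$'' simply replaces that incidence contribution by at most $\Delta - 1$ (or, grouping with the degree-independent part, yields $\Delta k - 1$ after the same arithmetic). So the one modification needed is to re-run exactly that line of the estimate with $\Delta$ in place of $3$, and the corollary follows. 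I would phrase the proof as: ``Inspect the proof of Lemma 81 in \cite{Bod98}. The construction of the maximal spanning forest $T$ and the bound $er(G,T) \le 2k$ do not use the degree bound. In the estimate of $vr(G,T)$, the term accounting for non-tree edges incident to a fixed vertex is bounded by $\Delta - 1$ instead of $2$, which yields $vr(G,T) \le \Delta k - 1$.''

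The main obstacle I anticipate is purely bookkeeping: pinning down precisely which summand in Bodlaender's estimate is the degree-dependent one and confirming that every other summand is genuinely degree-free, so that the replacement $3 \rightsquigarrow \Delta$ propagates cleanly to $\Delta k - 1$ and does not inadvertently also inflate the $2k$ in $er(G,T)$. Since the paper explicitly says ``Given the nature of its proof, one immediately has the following consequence,'' I would keep the proof to one short paragraph of this inspection rather than reproducing the layering construction, and I would not attempt to re-derive the $k$-outerplanar treewidth bound from scratch.
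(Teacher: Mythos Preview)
Your proposal is correct and matches the paper's approach exactly: the paper gives no explicit proof but simply states that ``given the nature of its proof, one immediately has the following consequence,'' i.e., the corollary follows by re-reading the proof of Lemma~81 in \cite{Bod98} with $\Delta$ in place of $3$. Your one-paragraph inspection (the construction and the bound $er(G,T)\le 2k$ are degree-independent, while the vertex remember estimate picks up a factor $\Delta$ instead of $3$) is precisely the intended justification.
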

We can now prove the main result of this section.
\begin{theorem}
	MSOL-definability equals recognizability for $k$-outerplanar graphs of bounded
	degree.
\end{theorem}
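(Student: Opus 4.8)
The plan is to assemble the machinery developed in this section. Let $G = (V, E)$ be a $k$-outerplanar graph whose maximum degree $\Delta$ is bounded by a constant. First I would invoke Corollary \ref{corKOuterplSFBD} to obtain a maximal spanning forest $T = (V, F)$ with $er(G, T) \le 2k$ and $vr(G, T) \le \Delta k - 1$. Setting $\kappa = \max\{vr(G, T),\, er(G, T) + 1\}$, we get $\kappa \le \max\{\Delta k - 1,\, 2k + 1\}$, which is a constant since both $k$ and $\Delta$ are fixed.

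Next I would apply Theorem \ref{thmVRER}\ref{thmVRERBD}: as $G$ has bounded degree, it admits a width-$\kappa$ MSOL-definable tree decomposition of bounded degree. Recall from the proof of that theorem that the spanning forest is not assumed to be given but is guessed by an MSOL formula which checks that an edge-set candidate is a spanning forest whose vertex and edge remember numbers are at most the respective constants; Corollary \ref{corKOuterplSFBD} guarantees that such a guess succeeds for \emph{every} member of the class, so the construction is uniform over all bounded-degree $k$-outerplanar graphs. Moreover, in the tree decomposition $(T', X)$ produced there, the degree of a node associated with a vertex $v \in V$ equals the degree of $v$ in $G$, which is at most $\Delta$, and nodes associated with tree edges are intermediate; hence $(T', X)$ has degree bounded by a constant depending only on $\Delta$. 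Finally, the theorem follows from Theorem \ref{thmDERSTD}, applied to the class of bounded-degree $k$-outerplanar graphs with the fixed width bound $\kappa$. (The easy direction also follows directly from Courcelle's Theorem \cite{Cou90}, since $k$-outerplanar graphs have treewidth at most $3k - 1$.)

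Since the real work has already been carried out in Theorems \ref{thmVRER} and \ref{thmDERSTD}, I do not anticipate a genuine obstacle; the only points requiring care are verifying that $\kappa$ is a true constant, i.e.\ that the bound of Corollary \ref{corKOuterplSFBD} is independent of $|V|$, and confirming that the bounded-degree hypothesis of Theorem \ref{thmVRER}\ref{thmVRERBD} is exactly what the assumption on $G$ supplies once one notes that edge nodes of $(T', X)$ introduce no extra branching. Both are immediate from the cited constructions.
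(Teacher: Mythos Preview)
Your proposal is correct and follows essentially the same approach as the paper: invoke Corollary~\ref{corKOuterplSFBD} to bound the remember numbers, apply Theorem~\ref{thmVRER}\ref{thmVRERBD} to obtain a bounded-degree MSOL-definable tree decomposition of constant width, and conclude via Theorem~\ref{thmDERSTD}. Your additional remarks on uniformity of the spanning-forest guess and on why the tree-decomposition degree stays bounded are accurate elaborations of points the paper leaves implicit.
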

\begin{proof}
	Let $G = (V, E)$ be a $k$-outerplanar graph with maximum degree $\Delta$.
	By Corollary \ref{corKOuterplSFBD}, we know that there exists a maximal
	spanning forest $T = (V, F)$ of $G$ with $er(G, T) \le 2k$ and $vr(G, T) \le
	\Delta k-1$. By Theorem \ref{thmVRER}\ref{thmVRERBD}, we know that $G$ admits
	an MSOL-definable tree decomposition of bounded degree. If $\Delta < 3$, then
	the width of this tree decomposition is at most $4k+1$, and if $\Delta \ge 3$,
	it is at most $\Delta k - 1$, so in both cases the width is bounded by a
	constant. The rest now follows from Theorem \ref{thmDERSTD}.
\qed \end{proof}
Note that the theorem also holds, if we add feedback edge and vertex sets to a
$k$-outerplanar graph of bounded degree, as explained in Theorem \ref{thmFESFVS}.

\section{Conclusion}\label{secConc}
In this paper we showed that MSOL-definability equals recognizability for Halin
graphs, $k$-cycle trees, graph classes constructed using certain feedback
edge or vertex sets and bounded degree $k$-outerplanar graphs.
Hence we proved a number of special cases of Courcelle's Conjecture
\cite{Cou90}, which states that each graph property that is recognizable for graphs of
bounded treewidth is CMSOL-definable, additionally
strengthening it to MSOL-definability. \par
For our proofs, we introduced the concept of MSOL-definable tree decompositions,
and used MSOL-definable tree decompositions of bounded degree or ordered
MSOL-definable tree decompositions (i.e.\ admitting an ordering
on nodes with the same parent). We additionally showed that this conjecture
holds for any graph class that admits either one of these kinds of tree decompositions. \par
We hope that the techniques of our paper give useful tools to solve other
special cases in the future, and also help to establish the border between cases
that allow MSOL-definability versus cases that need the counting predicate of
CMSOL.
\par
We plan to further investigate the case of $k$-outerplanar graphs and believe
that the following conjecture holds. 
\begin{conjecture} Recognizability equals	
	\begin{enumerate}[label={(\roman*)}]
	  \item MSOL-definability for 3-connected $k$-outerplanar graphs.
	  \item CMSOL-definability for $k$-outerplanar graphs.
	 \end{enumerate}
\end{conjecture}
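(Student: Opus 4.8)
The plan is to reduce both parts of the conjecture to the construction of suitable definable tree decompositions. As throughout the paper, one direction---definability implies recognizability---is immediate from \cite{Cou90} (and for part (i) from the inclusion of MSOL in CMSOL), so in both cases the content lies in the converse, that finite index implies definability. Recall that $k$-outerplanar graphs have treewidth at most $3k-1$ \cite{Bod98}, so by Lemma \ref{lemTWKOrd} any edge orientation is MSOL-definable; the difficulty lies entirely in obtaining either an \emph{ordered} or a \emph{bounded degree} definable tree decomposition, to which Theorem \ref{thmDERSTD} (or a CMSOL analogue of it) then applies. For the bounded degree case we were able to order children only up to a constant (Lemma \ref{lemEQCJoinConstCh}); in the general setting a branch node can have unboundedly many children, so an ordering on siblings is indispensable, and defining such an ordering is exactly what forces us to understand the planar embedding.

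For part (i), I would exploit that a $3$-connected planar graph has, by Whitney's theorem, an essentially unique planar embedding (up to reflection and the choice of outer face). Consequently the layer structure of a $3$-connected $k$-outerplanar graph is determined by the graph alone, and the cyclic order of edges around each vertex is rigid. The first step is to show that this combinatorial embedding is MSOL-definable: for $3$-connected planar graphs one can characterise face boundaries combinatorially (a cycle bounds a face exactly when it is non-separating in the appropriate sense), and since planarity and the requisite cycle and separation properties are MSOL-expressible on graphs of bounded treewidth, the family of faces---and hence the rotation system---can be captured by a guessed edge and vertex labelling that is verified in MSOL. Given the embedding I would define the $k$ outerplanar layers, assign to each vertex its distance-to-outer-face layer, and derive an ordering $\oriNB(e, f)$ on edges sharing a head vertex in the spirit of Lemma \ref{lemHalinNBOrd}, reading off the order in which subtrees meet the fixed reference cycle of their layer. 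This yields an ordered MSOL-definable tree decomposition of width at most $3k-1$ (built as in the proof of Theorem \ref{thmVRER} from a spanning tree of bounded vertex and edge remember number), after which Theorem \ref{thmDERSTD} finishes part (i).

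For part (ii), the embedding is no longer unique and, more importantly, a $k$-outerplanar graph may contain many pairwise isomorphic branches hanging off a separation pair or a cut vertex that MSOL provably cannot linearly order. The plan is to work with the triconnected (SPQR-tree) decomposition: triconnected components are handled by part (i), while the S- and P-nodes concentrate precisely the symmetric configurations. Here I would pass to CMSOL and use its modular counting predicates to group isomorphic children of a branch node into classes and count the number in each class modulo the relevant constants. Since the equivalence-class derivation of Section \ref{secConstEQ} only needs the \emph{multiset} of equivalence classes of the children of a branch node (Lemma \ref{lemEQCJoinConstCh} already realises this as a commutative combination over same-boundary terminal graphs), and since the number of distinct classes is finite, counting each class modulo a suitable period suffices to determine the resulting class. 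I would therefore build a \emph{CMSOL}-definable tree decomposition in which branch nodes are annotated by these modular counts, and prove a CMSOL analogue of Theorem \ref{thmDERSTD}: the predicates $\phi_{Leaf}$, $\phi_{TSG}$, $\phi_{Root}$ carry over verbatim, with the ordered join replaced by a counting-based join justified by the commutativity of $\oplus_\rhd$ on equal-boundary graphs.

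The hardest part, and the genuine mathematical content, is the boundary between MSOL and CMSOL. For part (i) the obstacle is to make ``the embedding is MSOL-definable'' fully rigorous on non-Hamiltonian, possibly large-diameter $k$-outerplanar graphs, and to verify that the layer-respecting sibling ordering really is expressible without counting; $3$-connectivity is what I expect to rescue this, since it removes exactly the symmetric branches that defeated the ordering in the non-$3$-connected case. For part (ii) the crux is to prove that CMSOL's counting is \emph{sufficient}: one must show that at every S- or P-node the contribution of the symmetric branches to the equivalence class depends only on their class-multiset taken modulo a bounded period, and that this period is uniform over the class. Establishing this uniform periodicity, by a Myhill--Nerode style pumping argument on the finitely many classes of $\sim_P$, is where I expect the main effort to lie, and it is also what explains why MSOL alone is conjectured to be too weak in the general case.
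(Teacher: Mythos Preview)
The statement you are attempting to prove is listed in the paper as a \emph{conjecture}, not a theorem: it appears in Section~\ref{secConc} (Conclusion) with the preamble ``We plan to further investigate the case of $k$-outerplanar graphs and believe that the following conjecture holds.'' The paper offers no proof, nor even a proof sketch; it is explicitly stated as open future work. There is therefore nothing to compare your proposal against.

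That said, your proposal is a research plan, not a proof, and you are candid about this (``I would exploit'', ``The plan is'', ``the hardest part \ldots is where I expect the main effort to lie''). The outline is sensible and aligns with the paper's own hints: the paper remarks just after the conjecture that it hopes ``to establish that 3-connectedness is a necessary condition to avoid the counting predicate,'' which matches your identification of symmetric branches at separation pairs and cut vertices as the obstruction to MSOL ordering. Your use of Whitney's theorem for part~(i) and the SPQR/triconnected decomposition plus modular counting for part~(ii) is the natural strategy, and your observation that Lemma~\ref{lemEQCJoinConstCh} already shows the join is commutative over equal-boundary graphs is exactly the right hook for replacing an ordering by a multiset count. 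But none of the genuinely hard steps you flag---MSOL-definability of the rotation system in the 3-connected case, uniform periodicity of the class function at S/P-nodes---are actually carried out, so what you have written remains a plausible plan rather than a proof, which is consistent with the statement's status in the paper.
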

We also hope to establish that 3-connectedness is a necessary condition to avoid
the counting predicate in our proof, which for $k$-outerplanar graphs will
provide us a with clear separation between MSOL and CMSOL. \par
Another interesting graph property that might be used in such proofs is
Hamiltonicity (in our sense that means a graph admits a Hamiltonian
\emph{path}).
It is easy to see that one can order nodes with the same parent in
an MSOL-definable tree decomposition, if the underlying graph admits a
Hamiltonian path, hence we conjecture the following.
\begin{conjecture}
	MSOL-definability equals recognizability for (3-connected) Hamiltonian partial
	$k$-trees.
\end{conjecture}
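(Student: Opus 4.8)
The plan is to reduce the conjecture to Theorem~\ref{thmDERSTD}: it suffices to prove that every (3-connected) Hamiltonian partial $k$-tree admits an \emph{ordered} MSOL-definable tree decomposition whose width is bounded by a function of $k$. The ordering half is the easy one. A Hamiltonian path is itself MSOL-definable --- guess an edge set inducing a connected spanning subgraph of maximum degree two --- and it induces a linear order $<_P$ on $V$. As already observed before the statement, $<_P$ then supplies the $\oriNBA$-predicate for free: order the children of a branch node according to the $<_P$-minimal vertex occurring in the subtree below that child but not in the parent bag (such a vertex always exists, and these ``new'' vertex sets are pairwise disjoint by the connectivity axiom of tree decompositions). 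As in the Halin-graph proof, the Hamiltonian path would be existentially quantified at the top of the final MSOL sentence, with the $\Bag$-, $\Parent$- and $\oriNBA$-predicates expressed relative to it. Thus all the difficulty sits in producing a \emph{rooted, unordered} MSOL-definable tree decomposition of bounded width.

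For this I would aim at a \emph{canonical} tree decomposition, exploiting that every minimal separator of a partial $k$-tree has size bounded in terms of $k$. The target object is a Robertson--Seymour style tree decomposition into bounded-size parts --- e.g. the canonical one efficiently distinguishing all ``large'' pieces of $G$, refined to be automorphism-invariant in the spirit of Carmesin, Diestel, Hundertmark and Stein --- rooted at the part containing the $<_P$-minimum vertex. The hoped-for role of 3-connectedness is to force the relevant separations to behave almost laminarly, so that each part is pinned down by its (bounded-size) boundary together with a connectivity condition relative to that boundary, and the parent relation is similarly witnessed by a bounded-size separator plus a connectivity condition; both kinds of statements are MSOL-expressible, and from them one would read off the $\Bag$- and $\Parent$-predicates. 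A more elementary variant is to use $<_P$ to define a canonical chordal completion of $G$ and take its clique tree; this works only if the completion has bounded clique number, and since arbitrary elimination orders blow up, one would first have to show that $<_P$ --- or a canonical refinement of it built from the small separators of $G$ --- induces a good elimination order on 3-connected Hamiltonian partial $k$-trees.

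The main obstacle is exactly the step ``this canonical tree decomposition is MSOL-definable'': in full generality that is Courcelle's conjecture itself, and Hamiltonicity contributes nothing towards it beyond the sibling ordering. What 3-connectedness is meant to buy is the rigidity making each part determined by bounded data, but turning that intuition into an honest MSOL sentence is the crux. It is worth stressing that the spanning-tree route of Theorem~\ref{thmVRER} is unavailable here: the wheels (a cycle plus a hub joined to all of it) are 3-connected Hamiltonian partial $3$-trees, yet the hub lies on $n - O(1)$ fundamental cycles no matter how a spanning tree is chosen, so the vertex remember number is unbounded and one genuinely must build the decomposition from the separator structure of the graph rather than from a low-remember-number spanning forest. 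Finally, since being Hamiltonian, 3-connected, and of treewidth at most $k$ are all MSOL-definable, they may be assumed in the construction; what cannot be assumed given is a concrete tree decomposition or chordal completion, which is precisely why the canonical choice has to be made internally to MSOL.
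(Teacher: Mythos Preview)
The statement you are attempting is a \emph{conjecture} in the paper, not a theorem; the paper gives no proof. The only argument the paper offers is the sentence immediately preceding the conjecture: a Hamiltonian path supplies an MSOL-definable ordering on the children of each branch node of an already MSOL-definable tree decomposition. Your proposal reproduces and sharpens exactly this observation --- the ``$<_P$-minimal new vertex'' rule is a clean realisation of $\oriNBA$ --- and then isolates precisely what the paper leaves open: producing an unordered MSOL-definable tree decomposition of bounded width to begin with.

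Your analysis of that remaining gap is accurate and honest. The wheel example is well chosen: it shows that the only general-purpose construction the paper actually provides, Theorem~\ref{thmVRER}, is unavailable here, so Hamiltonicity and $3$-connectedness do not let one simply recycle the paper's machinery. The two directions you sketch (a canonical Robertson--Seymour/Carmesin--Diestel--Hundertmark--Stein decomposition, or a clique tree of a $<_P$-induced chordal completion) are plausible, but --- as you yourself say --- neither is carried through, and the step ``this canonical decomposition is MSOL-definable'' is exactly the unresolved core of Courcelle's conjecture restricted to this class. So what you have written is not a proof but a correct diagnosis of where the difficulty lies; it is fully consistent with the paper's intent and goes beyond the paper in articulating why the conjecture is nontrivial.
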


\subsection*{Acknowledgements}
The second author thanks Bruno Courcelle, Mike Fellows, Pinar Heggernes and Jan Arne 
Telle for inspiring discussions.

\bibliography{References}
\newpage
\appendix

\section{Monadic Second Order Predicates and Sentences}\label{appSecMSOL}
We build sentences in monadic second order logic from a collection
of predicates. Once we defined these predicates they will be the building blocks
of more complex expressions, joined by MSOL-connectives and/or quantification of
its declared variables. Hence, we follow the ideas of the work of Borie et al.
\cite{BPT92}, who also give a large list of predicates and their definitions. \\
Note that the length of our sentences and formulas always has to be bounded by
some constant, independent of the size of the input graph. \par
We will denote single element variables by small letters, where $v, w, v', w',
\ldots$ typically represent vertices and $e, f, e', f',\ldots$ edges. Set
variables will be denoted by capital letters. Unless stated otherwise
explicitly, $V$ always denotes the vertex set of some input graph $G$ and $E$
its edge set. Since we always assume our predicates to appear in the context of
such a graph we might drop these two variables as an argument of a predicate.
\par
By some trivial definition, the following predicates are MSOL-definable (see
also Theorem 1 in \cite{BPT92}). In our text we might refer to them as the
\emph{atomic} predicates of monadic second order logic over graphs.
\begin{enumerate}[label={(\Roman*)}]
  \item $v = w$ (Vertex equality)
  \item $\Inc(e, v)$ (Vertex-edge incidence)
  \item $v \in V$ (Vertex membership)
  \item $e \in E$ (Edge membership)
\end{enumerate}
Note that to shorten our notation we might omit statements such as $v \in V$ or
$e \in E$ when quantifying over a variable. In this case we are referring to some
vertex/edge in the whole graph and the interpretation of the variables will
always be obvious from the context or the notational conventions explained above. \par
From the atomic predicates, one can directly derive the following:
\begin{itemize}
  \item $\Adj(v, w, E)$ (Adjacency of $v$ and $w$ in $E$)
  \item $\Edge(e, v, w)$ ($e = \{v, w\}$)
\end{itemize}
In a straightforward way (and by Theorem 4 in \cite{BPT92}), one can see that
the following are MSOL-definable:
\begin{itemize}
  \item $V = V' \cup V''$, $V = V' \setminus V''$, $V = V' \cap V''$ (plus the
  edge set equivalents)
  \item $V' = \IncV(E')$ [$E' = \IncE(V')$] ($V'$ [$E'$] is the set of incident
  vertices [edges] of $E'$ [$V'$])
  \item $\deg(v, E) = k$ ($v$ has degree $k$ in $E$, where $k$ is a constant)
  \item $\Conn(V, E)$, $\Conn_k(V, E)$, $\Cycle(V, E)$, $\Tree(V, E)$, $\Path(V,
  E)$
\end{itemize}

\subsection{Edge Orientation of a Halin Graph}\label{appSecMSOLHalinOrd}
In the current section we show how to define an edge orientation on a Halin
graph as explained in the proof of Lemma \ref{lemHalinEdgOri}. That is, we will
define a partition of the edge set of the graph into a directed tree $E_T$ and a
directed cycle $E_C$. \par 
As outlined in the proof, we use a coloring on its vertex set to define the
orientation of edges. Since we will use this result in later sections as well,
we define the general case of a $k$-coloring on the vertices of a graph.
\begin{align*}
	\Part_V(V, X_1,\ldots,X_k) \Leftrightarrow & (\forall v \in V)\Big(
		\bigvee_{1 \le i \le k} v \in X_i \wedge \bigwedge_{\stackrel{1 \le i
		\le k}{j \neq i}} \neg v \in X_j\Big) \\
	\kCol{k}(X_1,\ldots,X_k) \Leftrightarrow & \Part_V(V, X_1,\ldots,X_k) \\
	 \wedge&\forall e\forall v\forall w \Big(\Edge(e, v, w) \to \bigwedge_{1 \le i
	 \le k} \neg (v \in X_i \wedge w \in X_i) \Big)
\end{align*}
Now we define a predicate $\head(e, v)$ that is true if and only if $v$ is the
head vertex of the edge $e$ in the given orientation by comparing the indices
of the color classes that contain an endpoint of $e$. Note that the following
predicates always appear in the scope of an edge set $F$ and a $k$-coloring
$X_1,\ldots,X_k$.
\begin{align*}
	\col_<(v, w) \Leftrightarrow & \bigvee_{1 \le i < j \le k}(v \in X_i \wedge w
		\in X_j) \\
	\head(e, v) \Leftrightarrow &\exists w (\Edge(e, v, w) \wedge e \in F
	\leftrightarrow \col_<(v, w)) \\
	\tail(e, v) \Leftrightarrow &\exists w (\Edge(e, v, w) \wedge \neg e \in F
	\leftrightarrow \col_<(v, w)) \\
	\Arc(e, v, w) \Leftrightarrow &\Edge(e, v, w) \wedge \head(e, v) ~~~[e = (v,
	w)]
\end{align*}
Analogously to the definition of vertex degree predicates $\deg(v, E)$, as shown
in \cite[Theorem 4]{BPT92}, we can define predicates $\indeg(v, E)$ and
$\outdeg(v, E)$ for the in-degree and out-degree of a vertex in a directed
graph. We show how to define that the in-degree of a vertex is equal to a certain
constant $k$.
\begin{align*}
	\indeg(v, E) \ge k \Leftrightarrow &\exists w_1\cdots\exists w_k
		\Big(\Big(\bigwedge_{1 \le i \le k} (\exists e \in E) \Arc(e, w_i, v) \Big) \\
		&\wedge \bigwedge_{1 \le i < j \le k} \neg w_i = w_j\Big) \\
	\indeg(v, E) \le k \Leftrightarrow &\forall w_1 \cdots \forall w_{k+1} 
		\Big(\Big(\bigwedge_{1 \le i \le k+1} (\exists e \in E) \Arc(e, w_i, v) \Big)
		\\ 
		&\to \bigvee_{1 \le i < j \le k+1} w_i = w_j \Big) \\
	\indeg(v, E) = k \Leftrightarrow &\indeg(v, E) \le k \wedge \indeg(v, E) \ge k
\end{align*}
In a similar way we can define predicates for the out-degree and regularity
of a vertex for in- and out-degree and both (denoted by $\inKReg{k}$,
$\outKReg{k}$ and $\inOutKReg{k}$, respectively).
This enables us to define predicates for directed trees and cycles.
\begin{align*}
	\dir{\Cycle}(V, E) \Leftrightarrow &\Conn(V, E) \wedge \inOutKReg{1}(V, E) \\
	\dir{\Tree}(V, E) \Leftrightarrow &\Tree(V, E) \wedge (\exists r \in V)
	(\forall v \in V) \Big((r = v \wedge \indeg(v, E) = 0) \\
	&\vee (\neg v = r \wedge \indeg(v, E) = 1)\Big)
\end{align*}

\subsection{Child Ordering of a Halin Graph}\label{appSecMSOLHalinChOrd}
This section concludes the proof of Lemma \ref{lemHalinNBOrd}, that is we define
an ordering on edges in a Halin graph that have the same parent in the
tree $E_T$. Therefor we define predicates for
directed paths and fundamental cycles. Note that $\dir{\Path}(s, t, E')$ is true
if and only if $E'$ is a directed $s-t$-path.
\begin{align*}
	\dir{\Path}(V, E) \Leftrightarrow &\dir{\Tree}(V, E) \wedge (\forall v \in V)
	\deg(v, E) \le 2 \\
	\dir{\Path}(s, t, E') \Leftrightarrow & \dir{\Path}(\IncV(E'), E') \wedge
	\indeg(s) = 0 \wedge \outdeg(t) = 0
\end{align*}
Now we turn to the notion of fundamental cycles. 
We assume that the following predicates appear within the
scope of an edge set $E_T$, which is a spanning tree of the given graph.
\begin{align*}
	\FundCyc(E') \Leftrightarrow &\Cycle(\IncV(E'), E') \wedge (\exists e \in
	E')(\forall e' \in E')(\neg(e = e') \leftrightarrow e \in E_T) \\
	\FundCyc(e, e') \Leftrightarrow &(\exists E' \subseteq E)(e \in E'
	\wedge e' \in E' \wedge \FundCyc(E'))
\end{align*}
Note that $\FundCyc(e, e')$ is true if and only if there exists a fundamental
cycle in the graph containing both $e$ and $e'$. Now we can define an ordering
$\oriNB(e, f)$ on edges with the same parent, as explained in the proof of Lemma
\ref{lemHalinNBOrd}.
\begin{align*}
	\oriNB(e, f) \Leftrightarrow & \head(e) = \head(f) \wedge (\exists f' \in
	E_C)(\forall e' \in E_C)(\forall F' \subseteq E_C)(\forall E' \subseteq E_C) \\
		\Big(\Big( &\FundCyc(e, e') \wedge \FundCyc(f, f') \wedge \dir{\Path}(r,
		\tail(e'), E') \\
		 &\wedge \dir{\Path}(r, \tail(f'), F') \Big) \to F' \subset E'
		\Big)
\end{align*}
Furthermore we define a predicate $\oriNBA(e, f)$ that is true if and only if
$f$ is the leftmost right neighbor of $e$ and vice versa. We also apply this
notion to vertex variables, which allows us to refer to left and right siblings
of a vertex. We denote these predicates by $\oriSIB(x, y)$ and $\oriSIBA(x, y)$.
\begin{align*}
	\oriNBA(e, f) \Leftrightarrow & \oriNB(e, f) \wedge \forall f'((\neg f = f'
	\wedge \oriNB(e, f')) \to \oriNB(f, f')) \\
	\oriSIB(x, y) \Leftrightarrow &\exists e \exists f (\tail(e, x) \wedge
	\tail(f, y) \wedge \oriNB(e, f)) \\ 
	\oriSIBA(x, y) \Leftrightarrow & \exists e\exists f (\tail(e, x)
	\wedge \tail(f, y) \wedge \oriNBA(e, f))
\end{align*}
In the following we will use the rewrite of $\oriSIBA$ to
\begin{equation*}
	y = l(x) \Leftrightarrow \oriSIBA(y, x).
\end{equation*}
This expresses that a vertex $y$ is the direct left sibling of the vertex $x$ in
our ordering.

\subsection{Tree Decomposition of a Halin Graph}\label{appSecMSOLHalinTD}
In this section we define predicates $\Bag_\sigma(e, X)$ for all bag types used
in the proof of Lemma \ref{lemHalinTD}, and $\Parent(X_p, X_c)$ according to
the given construction. 
In the following we assume that we are given an edge $e \in E_T$, $e = \{x,
y\}$, such that $y$ is the parent of $x$ in $E_T$.

\subsubsection{Boundary vertices}
For defining predicates for bag types in our tree decomposition, we need to
show how to define boundary vertices in MSOL. First, we define
predicates to check whether a vertex is the right-(/left-)most child of its
parent.
\begin{align*}
	\Child_{R+}(x) \Leftrightarrow &\forall y\forall z \forall e \forall e'
	((\Arc(e, y, x) \wedge \Arc(e', y, z)) \to \oriNB(e', e))
\end{align*}
Note that $\Child_{L+}(x)$ can be defined similarly, replacing $\oriNB(e',
e)$ by $\oriNB(e, e')$. In the following we let $V_C = \IncV(E_C)$.
\begin{align*}
	y = bd_r(x) \Leftrightarrow &(x \in V_C \wedge x = y) \vee \Big(x \in V
	\wedge y \in V_C \\
	\wedge &\Big((\exists E_P \subseteq E_T)(\dir{\Path}(x, y, E_P) \wedge (\forall
	e \in E_P) \\
	&(\forall z (\tail(e, z) \to \Child_{R+}(z))))\Big)\Big)
\end{align*}
Replaying $\Child_{R+}$ by $\Child_{L+}$ in the above predicate we can also
define $y = bd_l(x)$.

\subsubsection{Bag Types}
We define an MSOL-predicate for each bag type that we introduced in the proof of
Lemma \ref{lemHalinTD}. Using the definition of boundary vertices given above, we can
define them in a straightforward manner.
\begin{align*}
	\Bag_{R1}(e, X) \Leftrightarrow & (x' \in X) \leftrightarrow (x' = x \vee x'
		= bd_r(x) \vee x' = bd_l(x)) \\
	\Bag_{R2}(e, X) \Leftrightarrow & (x' \in X) \leftrightarrow (x' = y \vee x'
	= x \vee x' = bd_r(x) \vee x' = bd_l(x)) \\
	\Bag_{R3}(e, X) \Leftrightarrow & (x' \in X) \leftrightarrow (x' = y \vee x' =
	bd_r(x) \vee x' = bd_l(x)) \\
	\Bag_{L1}(e, X) \Leftrightarrow & (x' \in X) \leftrightarrow (x' = y \vee x' =
	bd_l(y) \vee bd_r(l(x))) \\
	\Bag_{L2}(e, X) \Leftrightarrow & (x' \in X) \leftrightarrow (x' = y \vee x' =
	bd_l(y) \vee x' = bd_r(l(x)) \vee x' = bd_l(x)) \\
	\Bag_{L3}(e, X) \Leftrightarrow & (x' \in X) \leftrightarrow (x' = y \vee x' =
	bd_l(y) \vee x' = bd_l(x)) \\
	\Bag_{LR}(e, X) \Leftrightarrow & (x' \in X) \leftrightarrow (x' = y \vee x' =
	bd_l(y) \vee x' = bd_r(x) \vee x' = bd_l(x))
\end{align*}
As a next step we will unify the above predicates, to deal with the
cases when certain bags do not need to be created for an edge. This is the case
when we reach the root vertex of the graph or whenever an edge is the leftmost
child edge of a vertex.
\begin{align*}
	\Bag(X) \Leftrightarrow &\exists e \Big(y = r \wedge (\Bag_{R1}(e, X) \vee
	\Bag_{R2}(e, X)) \\
	&\vee\Big(\neg y = r \wedge \Big((\Child_{L+}(x) \wedge (\Bag_{R1}(e, X) \vee
	\Bag_{R2}(e, X) \\
	&\vee \Bag_{R3}(e, X))) \vee (\neg \Child_{L+}(x) \wedge (\Bag_{R1}(e, X) 
	\\ &\vee \cdots \vee \Bag_{LR}(e, X)))\Big)\Big)\Big)
\end{align*}

\subsubsection{The Parent Relation}
We now turn to defining the predicate $\Parent(X_p, X_c)$,
which is true if and only if the bag $X_p$ is the parent bag of $X_c$ in the
tree decomposition. Due to the
contraction step we can only have edges between bags if
their vertex sets are not equal. Note that adding the term '$\neg X_p = X_c$' is
sufficient to represent these contractions.
The rest is a case analysis as implied by Figure
\ref{figHalinTD2} and the respective parent/child relationships between
components.
\begin{align*}
	\Parent(X_p, X_c) \Leftrightarrow &\Bag(X_p) \wedge \Bag(X_c) \wedge \neg X_p =
	X_c \wedge (\Parent_I(X_p, X_c) \\ &\vee\Parent_{NB}(X_p, X_c) \vee
	\Parent_P(X_p, X_c))
	\\
	\Parent_I(X_p, X_c) \Leftrightarrow & \exists e
	\Big((\Bag_{R1}(e, X_c) \wedge \Bag_{R2}(e, X_p)) \\
	&\vee (\Bag_{R2}(e, X_c) \wedge \Bag_{R3}(e, X_p)) \\
	&\vee ((\Bag_{R3}(e, X_c) \vee \Bag_{L3}(e, X_c)) \wedge \Bag_{LR}(e, X_p)) \\
	&\vee (\Bag_{L1}(e, X_c) \wedge \Bag_{L2}(e, X_p)) \\
	&\vee (\Bag_{L2}(e, X_c) \wedge \Bag_{L3}(e, X_p)) 
	\Big) \\
	\Parent_{NB}(X_p, X_c) \Leftrightarrow &\exists e \exists e' (\oriNBA(e, e')
	\wedge \Bag_{LR}(e, X_c) \wedge \Bag_{L1}(e', X_p)) \\
	\Parent_{P}(X_p, X_c) \Leftrightarrow &\exists e \exists e' (\Child_{R+}(x)
		\wedge \tail(e', y) \\
		&\wedge \Bag_{LR}(e, X_c) \wedge \Bag_{R1}(e', X_p))
\end{align*}

\subsection{Equivalence Class Membership for Halin
Graphs}\label{appSecMSOLEQCMH} 
In this section we complete the proof of Lemma \ref{lemFIID}, which states that
finite index implies MSOL-definability for Halin graphs. In particular we define
the predicates $\phi_{Leaf}$, $\phi_{TSG}$ and $\phi_{Root}$, which represent
the cases for leaf bags, inner bags (i.e., intermediate and branch bags that are
not the root) and the root bag, respectively. \par
The predicate $\phi_{Leaf}$ can be defined in a straightforward way, using
the fact that we know that all terminal subgraphs of leaf bags are in the
equivalence class $C_{Leaf}$ and that leaf bags are always of type $R1$.
\begin{align*}
	\phi_{Leaf} = \forall X\forall e((\Bag_{R1}(e, X) \wedge \Leaf(X)) \to e \in
	C_{Leaf, R1})
\end{align*}
Next, we turn to defining $\phi_{TSG}$, where we distinguish two cases. That is,
either $X$ is an intermediate or a branch bag. We conduct the case analysis as
implied by the construction of our tree decomposition as shown in Section
\ref{secTDHG}.
\begin{align*}
	\phi_{TSG} = &\Big(\exists C_{i, L1} \exists C_{i, L2} \exists
	C_{i, L3} \exists C_{i, R1} \exists C_{i, R2} \exists C_{i, R3} \exists C_{i,
	LR}\Big)_{i = 1,\ldots,r} \\
	&\forall X \forall Y \Big((\Parent(X, Y) \wedge \Intermediate(X)) \to
	\phi_{TSG, Int} \\
	&\wedge \forall Y'(\neg(Y = Y') \wedge \Parent(X, Y) \wedge \Parent(X, Y')
	\wedge \JoinB(X)) \\
	&\to \phi_{TSG, Branch} \Big)
\end{align*}
The first case we are considering is when $X$ is an intermediate node with
child bag $Y$. These edges either belong to the same component, which is handled
in the first part of the predicate, or they belong to components of different
edges, such that the two are either direct neighbor edges according to the
$\oriNBA$-ordering or one of the edges is the parent edge of the other one.
\begin{align*}
	\phi_{TSG, Int} = &\forall e \Big(
	(\Bag_{L2}(e, X) \wedge \Bag_{L1}(e, Y)) \to \bigwedge_{i =
	1,\ldots,r}(e \in C_{i, L1} \to e \in C_{f_I(i, X), L2}) \\
	&\vee (\Bag_{L3}(e, X) \wedge \Bag_{L2}(e, Y)) \to \bigwedge_{i = 1,\ldots,r}
	(e \in C_{i, L2} \to e \in C_{f_I(i, X), L3})  \\
	&\vee (\Bag_{R2}(e, X) \wedge \Bag_{R1}(e, Y)) \to \bigwedge_{i = 1,\ldots,r}
	(e \in C_{i, R1} \to e \in C_{f_I(i, X), R2}) \\
	&\vee (\Bag_{R3}(e, X) \wedge \Bag_{R2}(e, Y)) \to
	\bigwedge_{i=1,\ldots,r} (e \in C_{i, R2} \to e \in C_{f_I(i,
	X), R3})\Big) \\
	&\vee \forall e \forall e'\Big((\Parent_{NB}(X, Y) \wedge \Bag_{L1}(e', X)
	\wedge \Bag_{LR}(e, Y))
	\\
	 &\to \bigwedge_{i = 1,\ldots,r} (e \in C_{i, LR} \to e' \in
	 C_{f_I(i, X), L1}) \Big) \\
	&\vee \Big((\Parent_P(X, Y) \wedge \Bag_{R1}(e', X) \wedge \Bag_{LR}(e, Y)) \\
	&\to \bigwedge_{i=1,\ldots,r} (e \in C_{i, LR} \to e' \in
	C_{f_I(i ,X), R1}) \Big)
\end{align*}
Now we assume that $X$ is a branch node with child bags $Y$ and $Y'$. We can't
identify the types of the bags $Y$ and $Y'$ immediately, since some of the edges
in the component might have been contracted. So in the following, let $L$ denote
the type $L1, L2$ or $L3$, and $R$, respectively, $R1, R2$ or $R3$. We can
define each combination of the actual types in exactly the same way.
\begin{align*}
	\phi_{TSG, Branch} = &\forall e \Big((\Bag_{LR}(e, X) \wedge \Bag_L(e, Y)
	\wedge \Bag_R(e, Y')) \\
	&\to \bigwedge_{\stackrel{i = 1,\ldots,r}{j = 1,\ldots,r}}((e \in C_{i,
	L} \wedge e \in C_{j, R}) \to e \in C_{f_J(\{i, j\}, X), LR})\Big)
\end{align*}
Knowing that all graphs that have property $P$ are contained in one of the
equivalence classes $C_{A_1},\ldots,C_{A_p}$ and that the root bag is always of
type $R2$, we can define $\phi_{Root}$ directly.
\begin{align*}
	\phi_{Root} = \forall X \forall e \Big((\Root(X) \wedge \Bag_{R2}(e, X))
	\to \bigvee_{i = A_1,\ldots,A_p} e \in C_{i, R2} \Big)
\end{align*}

\subsection{Equivalence Class Membership - Generalized}\label{appSecMSOLEQCMG}
In the current section we describe how to define predicates for the equivalence
class membership of (partial) terminal subgraphs in any MSOL-definable ordered
tree decomposition, hence concluding the proof of Lemma \ref{lemFIIDMTD}. In
this case we do not know the specific shape of the tree decomposition, so our case
analysis becomes somewhat more lengthy. We give examples for each predicate
involved from which it will become apparent that one can define any such case
in a similar way.
\\
Once we defined all predicates for MSOL-definable ordered tree decompositions,
we additionally show how to define the case of branch nodes in an MSOL-definable
tree decomposition of bounded degree, hence concluding the proof of Lemma
\ref{lemFIIDMTDBC}.
\par
As before (Appendix \ref{appSecMSOLEQCMH}) we first define all sets that we need
for the predicates and then distinguish the cases that $X$ is an intermediate node
or a branch node. These predicates will be defined in detail in the following
sections.
\begin{align*}
	\phi_{TSG} = &\Big(\exists C_{i, \tau}^V \exists C_{i, \sigma}^E \exists
	C_{i, \tau}^{V|P} \exists C_{i, \sigma}^{E|P} \exists C_{i, \tau}^{V|C} \exists
	C_{i, \sigma}^{E|C}\Big)_{\stackrel[\sigma, \in
	\{\sigma_1,\ldots,\sigma_s\}]{i = 1,\ldots,r}{\tau, \in
	\{\tau_1,\ldots,\tau_t\}}} \\
	&\Big(\phi_{TSG, Int} \wedge \phi_{TSG, Branch}\Big)
\end{align*}

\subsubsection{Intermediate Nodes}\label{appSecMSOLEQCMGI}
First, we define the equivalence class membership for terminal
subgraphs corresponding to an intermediate node in the tree decomposition. We
conduct a case analysis as discussed in the proof of Lemma \ref{lemFIIDMTD}
w.r.t.\ the types of the bags $X$ and $Y$.
\begin{align}
	\phi_{TSG, Int} = &\forall X\forall Y \Big((\Intermediate(X) \wedge
	\Parent(X, Y)) \nonumber \\
	&\to \bigwedge_{\stackrel{\tau, \tau' \in \{\tau_1,\ldots,\tau_t\}}{\sigma,
	\sigma' \in \{\sigma_1,\ldots,\sigma_s\}}} \Big(\phi_{Int,\tau,
	\tau'} \wedge
	\phi_{Int, \sigma, \sigma'} \wedge \phi_{Int, \tau, \sigma}
	\wedge \phi_{Int, \sigma, \tau}\Big)\Big)\label{eqMSOLPhiInt}
\end{align}
Case \ref{lemFIIDMTDCaseV} Both bags belong to a vertex. For each pair of types
$\tau, \tau' \in \{\tau_1,\ldots,\tau_t\}$ one can define the following predicate.
\begin{align*}
	\phi_{Int, \tau, \tau'} = &\forall v \forall v'
	\Big((\Bag_{\tau}^V(v, X) \wedge \Bag_{\tau'}^V(v', Y)) \\	
	&\to \bigwedge_{i = 1,\ldots,r} (v' \in C_{i, \tau'}^V \to v \in
	C_{f_I(i, X), \tau}^V) \Big)
\end{align*}
Case \ref{lemFIIDMTDCaseE} Both bags belong to an edge. For each pair of types
$\sigma, \sigma' \in \{\sigma_1,\ldots,\sigma_s\}$ we can write down a similar predicate.
\begin{align*}
	\phi_{Int, \sigma, \sigma'} = &\forall e \forall e'
	\Big((\Bag_{\sigma}^E(e, X) \wedge \Bag_{\sigma'}^E(e', Y)) \\	
	&\to \bigwedge_{i = 1,\ldots,r} (e' \in C_{i, \sigma'}^E \to e \in
	C_{f_I(i, X), \sigma}^E) \Big)
\end{align*}
Case \ref{lemFIIDMTDCaseVE} The bag $X$ belongs to a vertex and $Y$ belongs to
an edge. For each pair of a type $\tau \in \{\tau_1,\ldots,\tau_t\}$ and $\sigma
\in \{\sigma_1,\ldots,\sigma_s\}$ one can define:
\begin{align*}
	\phi_{Int, \tau, \sigma} = &\forall v \forall e
	\Big((\Bag_{\tau}^V(v, X) \wedge \Bag_{\sigma}^E(e, Y)) \\
	&\to \bigwedge_{i =
	1,\ldots,r} (e \in C_{i, \sigma}^E \to v \in C_{f_I(i, X), \tau}^V)
	\Big)
\end{align*}
Case \ref{lemFIIDMTDCaseEV} The bag $X$ belongs to an edge and $Y$ belongs to a
vertex. For $\sigma, \tau$ as above we define:
\begin{align*}
	\phi_{Int, \sigma, \tau} = &\forall e \forall v
	\Big((\Bag_{\sigma}^E(e, X) \wedge \Bag_{\tau}^V(v, Y)) \\
	&\to \bigwedge_{i =
	1,\ldots,r} (v \in C_{i, \tau}^V \to e \in C_{f_I(i, X), \sigma}^E)
	\Big)
\end{align*}

\subsubsection{Branch Nodes}\label{appSecMSOLEQCMGJ}
In the following we will define predicates for branch nodes, such that all bags
considered always correspond to vertices in the graph. Note that in the cases
that some of them are edge bags, one can write down all predicates in
the same way (replacing some vertices/vertex sets with edges/edge sets in the
predicates).
\\
First we define the general case, in which $Y$ is neither the leftmost nor the
rightmost child of $X$ and deal with the special cases later. Let $Y'$ is
the direct right sibling of $Y$.
\begin{align*}
	\phi_{Branch, \tau, \tau', \tau''}^I = &\forall v \forall v' \forall v''
	\Big((\Bag_{\tau}^V(v, X) \wedge \Bag_{\tau'}^V(v', Y) \wedge
	\Bag_{\tau''}^V(v'', Y')) \\
	\to &\bigwedge_{i = 1,\ldots,r} \Big(\Big(v \in
	C_{i, \tau}^{V|P} \wedge v' \in C_{i, \tau'}^{V|C} \wedge v' \in C_{j, \tau'}^V\Big)
	\\
	&\to \Big(v \in C_{f_J(i, j), \tau}^{V|P} \wedge v'' \in C_{f_J(i,
	j), \tau''}^{V|C} \Big) \Big)\Big)
\end{align*}
Now we consider the situation when $Y$ is the leftmost child of $X$ with
right sibling $Y'$. In this case we derive the partial terminal subgraph
$\pTermSG{X}{Y'}$ by pretending that $Y$ is the only child of $X$ and using
the method for intermediate nodes. It is easy to see that this way we indeed
define the equivalence class membership for $\pTermSG{X}{Y'}$.
\begin{align*}
	\phi_{Branch, \tau, \tau', \tau''}^{L+} = & \forall v \forall v' \forall v''
	\Big((\Bag_{\tau}^V(v, X) \wedge \Bag_{\tau'}^V(v', Y) \wedge
	\Bag_{\tau''}(v'', Y')) \\
	\to &\bigwedge_{i = 1,\ldots,r} \Big(v' \in C_{i, \tau'}^V \to
	\Big(v \in C_{f_I(i, X), \tau}^{V|P} \wedge v'' \in C_{f_I(i, X),
	\tau''}^{V|C}\Big)\Big)\Big)
\end{align*}
When reaching the rightmost child of a branch bag $X$, we derive the terminal
subgraph $\termSG{X}$. Assume in the following that $Y$ is the rightmost child
of $X$.
\begin{align*}
	\phi_{Branch, \tau, \tau'}^{R+} = &\forall v \forall v'
	\Big((\Bag_{\tau}^V(v, X) \wedge \Bag_{\tau'}^V(v', Y)) \\
	\to &\bigwedge_{i = 1,\ldots,r} \Big(\Big(v \in C_{i,
	\tau}^{V|P} \wedge v' \in C_{i, \tau'}^{V|C} \wedge v' \in C_{j,
	\tau'}^V\Big) \to v \in C_{f_J(i, j), \tau}^V\Big)\Big)
\end{align*}
One can define a predicate $\phi_{TSG, Branch}$ in a similar way as
$\phi_{TSG, Int}$ using the predicates described above together with
$\Child_{L+}(X)$, $\Child_{R+}(X)$ and $\oriNBA(X, Y)$. Disregarding the types
of bags for now, one can define the predicate $\phi_{TSG, Branch}'$ in the
following way.
\begin{align*}
	\phi_{TSG, Branch}' = \forall X \forall Y &\Big((\Parent(X, Y) \wedge
	\JoinB(X)) \to \Big(\Big(\Child_{R+}(Y) \wedge \phi_{JoinB}^{R+}\Big) \\
	&\vee \forall Y' \Big(\oriNBA(Y, Y') \to \Big(\Big(\Child_{L+} \wedge
	\phi_{\JoinB}^{L+} \Big) \\
	&\vee \Big(\neg \Child_{L+}(Y) \wedge \phi_{\JoinB}^I\Big)
	\Big)	\Big)	\Big)	\Big)
\end{align*}
Note that to include the case analysis, one can define a predicate
$\phi_{TSG, Branch}$ as it is done in the definition of $\phi_{TSG,
Int}$ (Predicate \ref{eqMSOLPhiInt}), for all combinations of
vertex/edge types.
\subsubsection{Branch Nodes for Bounded Degree Tree
Decompositions}\label{appSecMSOLEQCTDBD}
To finish the proof of Lemma \ref{lemFIIDMTDBC}, we only have to show how to
define a predicate for branch nodes with a constant number of children as
explained in the proof. \\
Again, we give an example predicate for the case that all bags involved are
vertex bags and note that all other cases can be defined similarly. Consider a
branch bag $X$ with child bags $X_1,\ldots,X_k$, all corresponding to vertices
in the graph and types $\tau_1,\ldots,\tau_k$. Then we can define this predicate as
follows.
\begin{align*}
	\phi_{Branch, \tau, \tau_1,\ldots,\tau_k} = &\forall v \forall v_1 \cdots
	\forall v_k \Big((\Bag_\tau^V(v, X) \wedge \Bag_{\tau_1}^V(v_1, X_1) \\ 
	&\wedge \cdots \wedge \Bag_{\tau_k}^V(v_k, X_k)) \to
	\bigwedge_{\stackrel[i_k = 1,\ldots,r]{i_1 =
	1,\ldots,r}{\cdots}}\Big((v_1 \in C_{i_1, \tau_1}^V \\
	&\wedge \cdots \wedge v_k \in C_{i_k, \tau_k}^V) \to v \in
	C_{f_J(\{i_1,\ldots,i_k\}, X), \tau}^V\Big)\Big)
\end{align*}

\subsection{$k$-Cycle Trees}\label{appSecMSOLLG}
In the current section we give all predicates to define a tree decomposition of
a $k$-cycle tree in MSOL, as explained in the proof of Lemma \ref{lemkCycTrees}.
We first define the edge orientation $\msolOri$ and then all predicates for the
bag types. Note that since this construction is very
similar to the one for Halin graphs, we do not define the
$\Parent$-predicate explicitly, as it works in almost the exact same way. \par
As a first step we define a predicate to check whether two vertices have a
certain (constant) distance in a given edge set.
\begin{align*}
	\dist(v, w, E') = k \Leftrightarrow (\exists E_P \subseteq E') (\Path(v, w,
	E_P) \wedge |E_P| = k)
\end{align*}
This allows us to define the the $i$-th cycle of the graph.
\begin{align*}
	E' = \Cycle_i \Leftrightarrow & \dir{\Cycle}(\IncV(E'), E') \wedge \forall v
	(\Inc(v, E') \to \dist(c, v, E) = i)
\end{align*}
We can write down the orientation $\msolOri$ described in the proof of Lemma
\ref{lemkCycTrees} in the following way.
\begin{align*}
	\msolOri = &\exists E_T \exists E_{C_1}\cdots \exists
		E_{C_k}\exists r_1\cdots \exists r_{k-1}\Big((\Part_E(E, E_T,
		E_{C_1},\ldots,E_{C_k}) \\ 
		&\wedge \dir{\Tree}(V, E_T) \wedge \bigwedge_{i = 1,\ldots,k}E_{C_i} =
		\Cycle_i \\
		&\wedge \bigwedge_{i = 1,\ldots,k-1} \Big(r_i \in \IncV(E_{C_i}) \wedge
		\dist(r, r_i, E_T) = k-i \Big)\Big)
\end{align*}
We can define a predicate $\oriNB^i(e, f)$ in complete analogy to $\oriNB(e, f)$
as shown in Appendix \ref{appSecMSOLHalinChOrd} by simply replacing $E_C$ by
$E_{C_i}$ and $r$ by $r_i$ (for the case that $i=k$ don't have to modify it).
This predicate is true if and only if $e$ is on the left of $f$, such that $e$
and $f$ have the same head vertex, i.e. their tail vertices lie on the same
cycle.
\par
Now we turn to defining the $i$-th boundary vertex (Definition \ref{defIBound}).
\begin{align}
	w = bd_i^r(v, E') \Leftrightarrow &(w \in V_{C_i} \wedge v = w) \nonumber\\
	&\vee (\exists E_P \subseteq E') (\dir{\Path}(v, w, E_P) \wedge w \in
	\IncV(E_{C_i})) \nonumber \\
	&\wedge (\forall e \in E_P) \neg(\exists E_P' \subseteq E') \Big(\dir\Path(v,
	w, E_P') \wedge (\exists e' \in E_P') \nonumber \\
	&\bigvee_{i = 1,\ldots,k}\oriNB^i(e, e')\Big) \label{predIBound}
\end{align}
To define $bd_i^l$, we simply replace $\oriNB^i(e, e')$ by $\oriNB^i(e', e)$ in
line \ref{predIBound}. In the following we abbreviate $w = bd^i(v, E_T)$
to $w = bd^i(v)$.
We denote by $NB_R(e)$ the edge set containing $e$ and all its right neighbor
edges. \par
We are now equipped with all tools to define the bag types for a tree
decomposition of a $k$-cycle tree. We use the same notation as in Appendix
\ref{appSecMSOLHalinTD}, that is, we have an edge $e = \{x, y\}$, such that $y$
is the parent of $x$ in $E_T$ and assume that the vertex $y$ lies on cycle
$C_i$.
The predicate $\CarryBdR$ defines the case that the vertex $x$ does not have a
left boundary on a cycle $C_j$, so that we have to pass on the right boundary
vertex of $y$ without the edge $e$ and its right neighbors.
\begin{align*}
	\CarryBdR(e, z)_j \Leftrightarrow (\neg(\exists z' (z' = bd_j^l(x))) \wedge z =
	bd_j^r(y, E_T \setminus NB_R(e))
\end{align*}
We continue by defining the bag types $R1,\ldots,LR$.
\begin{align*}
	\Bag_{R1}(e, X) \Leftrightarrow &z \in X \leftrightarrow 
	\bigvee_{i < j \le k}\Big(z = bd_j^l(x) \vee z = bd_j^r(x)\Big) \\
	\Bag_{R2}(e, X) \Leftrightarrow &z \in X \leftrightarrow \Big(z = y \vee
	\bigvee_{i < j \le k}\Big(z = bd_j^l(x) \vee z = bd_j^r(x)\Big)\Big)
\end{align*}
\begin{align*}
	\Bag_{L1}(e, X) \Leftrightarrow &z \in X \leftrightarrow \Big(z = y
	\vee \bigvee_{i \le j \le k}\Big(z = bd_j^l(y) \\
	&\vee z = bd_j^r(y, E_T
	\setminus NB_R(e)) \Big)\Big) \\
	\Bag_{L2}(e, X) \Leftrightarrow &z \in X \leftrightarrow \Big(z = y 
	\vee \bigvee_{i < j \le k}\Big(z = bd_j^l(y) \vee z = bd_j^l(x) \\
	&\vee z = bd_j^r(y, E_T \setminus NB_R(e)) \Big)\Big)
\end{align*}
\begin{align*}
	\Bag_{L3}(e, X) \Leftrightarrow &z \in X \leftrightarrow \Big(z = y
	\vee \bigvee_{i < j \le k}\Big(z = bd_j^l(y) \vee z = bd_j^l(x)\\
	&\vee \CarryBdR(e, z) \Big)\Big) \\
	\Bag_{LR}(e, X) \Leftrightarrow &z \in X \leftrightarrow \Big(z = y
	\vee \bigvee_{i < j \le k}\Big(z = bd_j^l(x) \vee z = bd_j^r(x) \\
	&\vee z = bd_j^l(y) \vee \CarryBdR(e, z) \Big)\Big)
\end{align*}
Note that defining the $\Parent$-predicate works in the same way
as for Halin graphs, taking into account the missing bag type $R3$.

\subsection{Adding Feedback Edge/Vertex Sets}\label{appSecMSOLFESFVS}
In this section we complete the proof of Theorem \ref{thmFESFVS}. 
In the following, let $G' = (V', E')$ and $G = (V, E)$ be graphs as stated in
Theorem \ref{thmFESFVS}. Assume that we are given predicates $\Bag_{\tau}^V$
and $\Bag_\sigma^E$ for vertex bag types $\tau$ and $\sigma$, defined for vertices
and edges of the spanning tree $E_T$ of a graph, defining a tree decomposition
of $G'$.
 One can observe that we can define the sets $V'$ and
(a set representing) $E'$ easily, using the following facts.
\begin{itemize}
  \item Each vertex $v' \in V'$ is contained in a bag of the tree decomposition,
  	i.e.\ (at least) one of the $\Bag$-predicates evaluates to \emph{true} for
  	some set $X \subseteq V$.
  \item For each edge $e' \in E'$ there is a bag containing both endpoints. Note
  	that if there is an edge in $E \setminus E'$, such that both its endpoints
  	are contained in a bag, we do not need to consider it any further. 
\end{itemize}
In the following we assume that $V'$ and $E'$ are defined and $\FundCyc$ uses
the maximal spanning tree $E_T$, upon which the construction of the tree
decomposition of $G'$ is based.
First, we consider the case of feedback edge sets. We use the notion of
fundamental cycles rather that directly referring to biconnected components,
since it makes our predicate shorter (while in this case they express the same
thing).\footnote{Note that the predicate $\FundCyc$ can easily be defined for a
combination of a vertex and an edge as well.}
\begin{align*}
	\Bag^{E,+}_{\sigma}(e, X) \Leftrightarrow &v' \in X \leftrightarrow
	\Big((\exists e' \in E \setminus E')\Big(\Inc(v', e') \wedge \FundCyc(e, e') \\
	&\wedge \forall w ((\neg v' = w \wedge \Inc(w , e')) \to \col(v') <
	\col(w))\Big)\Big) \\
	\Bag^{V, +}_\tau(v, X) \Leftrightarrow &v' \in X \leftrightarrow \Big((\exists
	e' \in E \setminus E')\Big(\Inc(v', e') \wedge \FundCyc(v, e') \\
	&\wedge \forall w ((\neg v' = w \wedge \Inc(w , e')) \to \col(v') <
	\col(w))\Big)\Big)
\end{align*}
For feedback vertex sets we can define similar additions to the respective
predicates, directly using the biconnected components mentioned in the proof.
\begin{align*}
		\Bag_\tau^{V, +}(v, X) \Leftrightarrow &v' \in X \leftrightarrow \Big(
		(\exists V_2 \subseteq V)(v \in V_2 \wedge v' \in V_2 \\
		&\wedge \Conn_2(V_2, E_T \cup \IncE(V_2 \setminus V')))\Big) \\
		\Bag_\sigma^{E, +}(e, X) \Leftrightarrow &v' \in X \leftrightarrow \Big(
		(\exists V_2 \subseteq V)\Big(v' \in V_2 \wedge e' \in \IncE(V_2 \setminus V')
		\\
		&\wedge \Conn_2(V_2, E_T \cup \IncE(V_2 \setminus V'))\Big) \Big)
\end{align*}

\subsection{Bounded Vertex and Edge Remember Number}\label{appSecMSOLVRER}
As the last of our extensions, we show how to define tree decompositions
that have a bounded vertex and edge remember number. Hence, we
will conclude the proof of Theorem \ref{thmVRER}, which we used to prove the
case for bounded degree $k$-outerplanar graphs. \par
First, we are going to show how to identify an edge set as a spanning tree with
vertex remember number less than or equal to $\kappa$ and edge remember number
less than or equal to $\lambda$, both constant.
\begin{align*}
	\exists E_T &(\Tree(V, E_T) \wedge vr(E_T) \le \kappa \wedge er(E_T) \le
	\lambda) \\
	vr(E_T) \le \kappa \Leftrightarrow &(\forall v \in V)(\forall e_1 \in E
	\setminus E_T)\cdots \forall (e_{\kappa + 1} \in E \setminus E_T)\\
	&\Big(\Big(\bigwedge_{i = 1,\ldots,\kappa + 1} \FundCyc(v, e_i)\Big) \to
	\bigvee_{1 \le i < j \le \kappa + 1} e_i = e_j\Big) \\
	er(E_T) \le \lambda \Leftrightarrow &(\forall e \in E)(\forall e_1 \in E
	\setminus E_T)\cdots \forall (e_{\lambda + 1} \in E \setminus E_T)\\
	&\Big(\Big(\bigwedge_{i = 1,\ldots,\lambda + 1} \FundCyc(e, e_i)\Big) \to
	\bigvee_{1 \le i < j \le \lambda + 1} e_i = e_j\Big)
\end{align*}
In the following, assume that $E_T$ is the edge set of the spanning tree of $G$
(as shown above), which additionally has edge orientations, defined in MSOL by
predicates $\head$ and $\tail$ (cf. Appendix \ref{appSecMSOLHalinOrd}). Note
that the last predicate in the list, $\oriNB(X_a, X_b)$ requires an ordering on
edges with the same head vertex.
\begin{align*}
	\Bag_V(v, X) \Leftrightarrow &v' \in X \leftrightarrow (v' = v \vee (\exists
	e \in E \setminus E_T)(\Inc(v', e) \\ 
	&\wedge \FundCyc(v, e))) \\
	\Bag_E(e, X) \Leftrightarrow &v' \in X \leftrightarrow (\Inc(v', e) \vee
	(\exists e' \in E \setminus E_T)(\Inc(v', e') \\ 
	&\wedge \FundCyc(e, e'))) \\
	\Parent(X_p, X_c) \Leftrightarrow &\exists v(\exists e \in E_T)((\Bag_V(v,
	X_p) \wedge \Bag_E(e, X_c) \wedge \head(v, e)) \\ 
	&\vee (\Bag_V(v, X_c) \wedge \Bag_E(e, X_p) \wedge \tail(v, e))) \\
	\oriNB(X_a, X_b) \Leftrightarrow &(\exists e_a \in E_T)(\exists e_b \in
	E_T)(\head(e_a) = \head(e_b) \wedge \oriNB(e_a, e_b))
\end{align*}

\end{document}